\pgfplotsset{compat=newest}
\newcommand\footnoteref[1]{\protected@xdef\@thefnmark{\ref{#1}}\@footnotemark}
\newtheorem{theorem}{Theorem}
\newtheorem{lemma}[theorem]{Lemma}
\newtheorem{corollary}[theorem]{Corollary}
\newtheorem{example}[theorem]{Example}
\newtheorem{remark}[theorem]{Remark}
\newtheorem{definition}{Definition}
\newenvironment{mymatrix}{\begin{bmatrix}} {\end{bmatrix} }
\def\ve#1{{\mathchoice{\mbox{\boldmath$\displaystyle #1$}}%
              {\mbox{\boldmath$\textstyle #1$}}%
              {\mbox{\boldmath$\scriptstyle #1$}}%
              {\mbox{\boldmath$\scriptscriptstyle #1$}}}}
\newcommand{\Fq}{\ensuremath{\mathbb{F}_q}}
\DeclareMathOperator{\rank}{rk}
\newcommand{\Code}{\mathcal{C}}
\newcommand{\code}{\mathcal{C}}
\newcommand{\0}{\ve{0}}
\renewcommand{\S}{\ve{S}}
\renewcommand{\H}{\ve{H}}
\newcommand{\e}{\ve{e}}
\renewcommand{\c}{\ve{c}}
\renewcommand{\e}{\ve{e}}
\newcommand{\G}{\ve{G}}
\newcommand{\Q}{\ve{Q}}
\renewcommand{\P}{\ve{P}}
\newcommand{\E}{\ve{E}}
\newcommand{\A}{\ve{A}}
\newcommand{\B}{\ve{B}}
\newcommand{\dH}{\mathrm{d}_\mathrm{H}}
\newcommand{\Eset}{\mathcal{E}}
\newcommand{\C}{\ve{C}}
\newcommand{\R}{\ve{R}}
\newcommand{\Sset}{\mathbb{S}}
\newcommand{\Scal}{\mathcal{S}}
\newcommand{\cS}{\mathcal{S}}
\newcommand{\Rset}{\mathcal{R}}
\newcommand{\numgroups}{\mu}
\DeclareMathOperator{\supp}{supp}
\newcommand{\Pfailure}{P_\mathrm{failure}}
\newcommand{\Pmiscorrection}{P_\mathrm{miscorrection}}
\newcommand{\NN}{\mathbb{N}}
\newcommand{\IC}{\mathcal{IC}}
\newcolumntype{C}{>{$}c<{$}} 
\newcommand{\algoref}[1]{Algorithm~\ref{#1}}
\newcommand{\figref}[1]{Figure~\ref{#1}}
\newcommand{\stepref}[1]{Step~\ref{#1}}
\newcommand{\tabref}[1]{Table~\ref{#1}}
\newcommand{\ceil}[1]{\ensuremath{\left\lceil #1 \right\rceil}}
\newcommand{\floor}[1]{\ensuremath{\left\lfloor #1 \right\rfloor}}
\newcommand{\F}[1]{\ensuremath{\mathbb{F}_{#1}}}
\newcommand{\dt}[2]{\ensuremath{\text{d}_{\text{H}} (#1 , #2) }}
\DeclareMathOperator{\wt}{wt}
\newcommand{\omegaVec}{\ve{\omega}}
\definecolor{mygreen}{rgb}{0,0.7,0}
\newcommand{\Psucc}{\mathrm{Pr}_{\mathrm{UDS}}}
\newcommand{\tauIRS}{\tau_{\mathrm{IRS,max}}}
\newcommand{\res}{\mathrm{res}}
\definecolor{darkgreen}{rgb}{0,0.7,0}
\newcommand{\removelatexerror}{\let\@latex@error\@gobble}
\newcommand{\printalgoIEEE}[1]
{
\vspace{0.3cm}
{\centering
\scalebox{0.97}{
\removelatexerror
\begin{tabular}{p{\columnwidth}}
\begin{algorithm}[H]
 \begin{small}
 #1
 \end{small}
\end{algorithm}
\end{tabular}
}
}
\vspace{-0.3cm}
}
\begin{document}

\title{Error Decoding of Locally Repairable and\\ Partial MDS Codes}
\author{\IEEEauthorblockN{Lukas Holzbaur, Sven Puchinger, Antonia Wachter-Zeh}
	\thanks{The work of L. Holzbaur was supported by the German Research Foundation (Deutsche Forschungsgemeinschaft, DFG) under Grant No. WA3907/1-1.

	S. Puchinger has received funding from the European Union's Horizon 2020 research and innovation program under the Marie Sklodowska-Curie grant agreement no.~713683. This work was partly done while S.~Puchinger was at Technical University of Munich, where he was supported by the German Israeli Project Cooperation (DIP) grant no.~KR3517/9-1.

	Parts of this paper have been presented at the \emph{2018 IEEE International Symposium on Information Theory (ISIT)} \cite{Holzbaur2018} and \emph{2019 IEEE Information Theory Workshop (ITW)} \cite{Holzbaur2019}.

		L. Holzbaur and A. Wachter-Zeh are with the Institute for Communications Engineering, Technical University of Munich, Germany. S. Puchinger is with the Department of Applied Mathematics and Computer Science, Technical University of Denmark (DTU), Denmark.

Emails: \{lukas.holzbaur, antonia.wachter-zeh\}@tum.de, svepu@dtu.dk}}

\maketitle

\begin{abstract}
  In this work it is shown that locally repairable codes (LRCs) can be list-decoded efficiently beyond the Johnson radius for a large range of parameters by utilizing the local error-correction capabilities. The corresponding decoding radius is derived and the asymptotic behavior is analyzed. A general list-decoding algorithm for LRCs that achieves this radius is proposed along with an explicit realization for LRCs that are subcodes of Reed--Solomon codes (such as, e.g., Tamo--Barg LRCs). Further, a probabilistic algorithm of low complexity for unique decoding of LRCs is given and its success probability is analyzed.
  The second part of this work considers error decoding of LRCs and partial maximum distance separable (PMDS) codes through interleaved decoding. For a specific class of LRCs the success probability of interleaved decoding is investigated. For PMDS codes, it is shown that there is a wide range of parameters for which interleaved decoding can increase their decoding radius beyond the minimum distance such that the probability of successful decoding approaches $1$ when the code length goes to infinity.
\end{abstract}

\begin{IEEEkeywords}
  Locally Repairable Codes, List Decoding, Partial MDS codes, Interleaved Decoding, Metzner-Kapturowski
\end{IEEEkeywords}

\section{Introduction}

Vast growth in the popularity of cloud storage and other cloud services in recent years has led to an increased interest in coding solutions for distributed data storage. Traditionally, protection against the failure of nodes/servers was achieved by replicating the contents of a node multiple times. While this results in very efficient repair in the case of a node failure, i.e., a failed node can be replaced by simply replicating any of the remaining nodes, the required storage overhead makes this method unattractive for large scale data centers. Instead, some system operators have made the transition to MDS-coded storage. While these codes offer the optimal trade-off between the number of recoverable nodes and storage overhead, the cost of recovering/replacing a failed node in terms of total network traffic and number of nodes involved is in general far from optimal. Different approaches have been considered to reduce this cost of node recovery, with special attention to the more likely event of a single or very few node failures, as efficient repair for those cases is especially important. The most prominent approaches addressing this issue are \emph{regenerating codes} \cite{Dimakis2010,Dimakis2011,Rashmi2012,Tamo2017}, which aim to decrease the network traffic required for repair, and \emph{locally repairable codes} (LRC) \cite{Huang2007,Huang2012,Gopalan2012,Sathiamoorthy2013,Kamath2014,Papailiopoulos2014,Tamo2014,Silberstein2015}, which limit the number of nodes involved in the repair. In \cite{Tamo2014}, a family of LRCs, popular for the small required field size and for fulfilling the Singleton-like bound on the distance \cite{Gopalan2012,Kamath2014}, was constructed as subcodes of (generalized) Reed-Solomon (GRS) codes. In addition, \emph{partial maximum distance separable} (PMDS) codes\footnote{In \cite{Huang2007, Chen2007, Gopalan2014,gopalan2017,martinez2019universal} these codes are referred to as \emph{maximally recoverable codes}, the definitions are equivalent.} \cite{Huang2007, Chen2007, Gopalan2014,Blaum2013,gabrys2018constructions,Blaum2016,calis2016general,gopalan2017,Horlemann-Trautmann2017,martinez2019universal} fulfill an even stronger notion of locality by requiring that any information-theoretically solvable erasure pattern can be recovered and have been proposed for use in distributed storage systems to further decrease the probability of data loss.

The main motivation of storage codes such as LRCs and regenerating codes is erasure correction, as these occur naturally in distributed storage systems whenever nodes fail, e.g., due to hardware failures, power outages, or maintenance. It is often assumed that errors are detected  \cite{Blaum2013}, e.g., by a cyclic redundancy check (CRC), thereby turning errors into erasures. Such a storage system can also be viewed as a concatenated coding scheme where the inner code is used solely for error detection, and the outer code for recovery of erasures \cite{roth2014coding}. While this declaration of erasures is likely to be successful for some causes of errors, such as, e.g., faulty sectors on a hard-drive or solid-state drive, errors caused by faulty synchronization or bad links between the nodes cannot be detected on these lower levels. These events result in \emph{errors}, i.e., events where the position of occurrence is unknown, which is what we consider in this work. More specifically, such an error event is likely to corrupt a large number of symbols, which results in a burst of errors, a fact that we will exploit in the second part of this work to increase the error correction capability of some LRCs and PMDS codes. As erasures are a far bigger concern than errors in distributed storage systems, where these code classes are of interest, we would like to emphasize that the proposed methods focus either on LRCs/PMDS codes in general or on popular classes of LRCs, i.e., subcodes of GRS codes such as Tamo--Barg codes \cite{Tamo2014}, and \emph{do not require a change in the structure of the codes}. Hence, they can be viewed as a worst-case measure that can be employed as a last resort in the case of error events, without any increase in costs, e.g., storage overhead, for the system.

There have been several previous works that consider error correction from storage codes such as LRCs and regenerating codes. In \cite{Silberstein2015,Pawar2011,Han2012} the authors consider a concatenated structure with LRCs or regenerating codes as inner codes and rank metric codes as outer codes to protect against adversaries of different types. Regenerating codes with an error tolerance in the repair process were considered in \cite{Rashmi2012,Pawar2011,Han2012}.
In \cite{Dikaliotis2010} a hashing scheme is proposed to detect errors and protect against adversarial nodes. Efficient repair of nodes by error correction from parts of the received word was considered in \cite{Tamo2017}. 
To the best of our knowledge, this is the first work which utilizes the additional redundancy accounting for the locality in order to increase the \emph{error} decoding radius beyond the unique decoding radius of the code. Besides the practical implications, the presented results are especially of theoretical interest, as very few classes of codes are known to be decodable beyond the Johnson radius or even up to the Singleton bound.
In particular, it is known that Reed--Solomon (RS) codes, like all linear codes, can be list-decoded up to the Johnson radius \cite{Johnson1962} and an explicit algorithm exists \cite{Guruswami1999}. Although it has been shown that some Reed-Solomon codes can be list-decoded beyond this radius~\cite{Rudra2013}, there are no known algorithms to achieve this in general.
It is therefore an interesting observation distance-optimal LRCs \emph{can be list-decoded beyond the Johnson radius} while the complexity and list size grow polynomially in the code length, when the number of local repair sets is constant.

\textbf{Our Contribution:}
We consider different approaches that apply to different settings and rate regimes. First, we study \emph{list decoding} of LRCs. A list decoder returns all codewords within a specified distance around the received word and it is known that every $q$-ary code can be list-decoded up to the $q$-ary Johnson radius~\cite{Johnson1962,bassalygo1965} with a list size polynomial in the code length. For specific code classes, such as GRS codes, there exists a full body of work on list decoding, including explicit decoding algorithms \cite{sudan1997,Guruswami1999,kotter1996,roth2000,Guruswami2012} and analysis of the (average) list size for random errors \cite{Cheung1988,cheung1989,McEliece1986,McEliece2003}. Further, for some classes of codes based on GRS codes it has been shown that they can be decoded beyond the Johnson radius~\cite{Guruswami2012,guruswami2008,parvaresh2005}.

In the second part of this work, we show that interleaved codes, i.e., the direct sum of codes with errors occurring in the same positions, can increase the tolerance against errors even further.
In fact, in distributed data storage the assumption of burst errors, i.e., errors that corrupt the same positions in many codewords, which is required for a possible increase of the decoding radius through interleaved decoding, is very natural (cf.~Figure~\ref{fig:illustration}). Typically, a distributed storage system stores many codewords of the storage code, where each node stores one symbol of each codeword. Hence, if, e.g., one of the nodes is not synchronized correctly or an inner decoder fails, all codewords will be corrupted in the same position and in this case interleaved decoding can correct more errors compared to bounded minimum distance (BMD) decoding.

\begin{figure}[htb]
  \begin{center}
   \resizebox{0.6\linewidth}{!}{\def\x{0.5}

\begin{tikzpicture}

\node (S1) at (0,0) [draw,thick,minimum width=\x*1cm,minimum height=\x*5cm,] {};
\node (S2)  [right=\x*0.3cm of S1, draw,thick,minimum width=\x*1cm,minimum height=\x*5cm] {};
\node (S3)  [right=\x*0.3cm of S2, draw,thick,minimum width=\x*1cm,minimum height=\x*5cm] {};
\node (S4)  [right=\x*0.3cm of S3, draw,thick,minimum width=\x*1cm,minimum height=\x*5cm] {};


\node (S5)  [right=\x*0.7cm of S4, draw,thick,minimum width=\x*1cm,minimum height=\x*5cm] {};
\node (S6)  [right=\x*0.3cm of S5, draw,thick,minimum width=\x*1cm,minimum height=\x*5cm] {};
\node (S7)  [right=\x*0.3cm of S6, draw,thick,minimum width=\x*1cm,minimum height=\x*5cm] {};
\node (S8)  [right=\x*0.3cm of S7, draw,thick,minimum width=\x*1cm,minimum height=\x*5cm] {};


\node (S9)  [right=\x*0.7cm of S8, draw,thick,minimum width=\x*1cm,minimum height=\x*5cm] {};
\node (S10)  [right=\x*0.3cm of S9, draw,thick,minimum width=\x*1cm,minimum height=\x*5cm] {};
\node (S11)  [right=\x*0.3cm of S10, draw,thick,minimum width=\x*1cm,minimum height=\x*5cm] {};
\node (S12)  [right=\x*0.3cm of S11, draw,thick,minimum width=\x*1cm,minimum height=\x*5cm] {};


\draw[dotted,thick] (\x*4.9, \x*3) -- (\x*4.9,-\x*2.7);
\draw[dotted,thick] (\x*10.8, \x*3) -- (\x*10.8,-\x*2.7);


\foreach \i in {1,...,12}{
  \node (C\i) at ($(S\i)+(0,\x*2)$) [draw,minimum width=\x*0.7cm,minimum height=\x*0.5cm,rounded corners=1pt] {};
  \node (C\i) at ($(S\i)+(0,\x*1.4)$) [draw,minimum width=\x*0.7cm,minimum height=\x*0.5cm,rounded corners=1pt] {};
  \node (C\i) at ($(S\i)+(0,\x*0.8)$) [draw,minimum width=\x*0.7cm,minimum height=\x*0.5cm,rounded corners=1pt] {};
  \node (C\i) at ($(S\i)+(0,\x*0.2)$) [draw,minimum width=\x*0.7cm,minimum height=\x*0.5cm,rounded corners=1pt] {};
  \node (C\i) at ($(S\i)+(0,\x*-0.4)$) [minimum width=\x*0.7cm,minimum height=\x*0.5cm,rounded corners=1pt] {$\vdots$};
  \node (C\i) at ($(S\i)+(0,\x*-1.4)$) [draw,minimum width=\x*0.7cm,minimum height=\x*0.5cm,rounded corners=1pt] {};
  \node (C\i) at ($(S\i)+(0,\x*-2)$) [draw,minimum width=\x*0.7cm,minimum height=\x*0.5cm,rounded corners=1pt] {};

}


\draw[dashed, rounded corners = 1pt, blue, thick] ($(S1)+(-\x*.7,\x*2.4)$) rectangle ($(S12)+(\x*.7,\x*1.6)$);


\node[anchor = south east] (L1) at ($(S12)+(\x*1,\x*4)$) {\footnotesize LRC codeword};
\path (L1) edge[bend left, -{Latex[length=2mm,width=1mm]}]  ($(S12)+(\x*.7,\x*2.4)$) ;

\draw [decorate,decoration={brace,amplitude=5pt}]  ($(S1)+(-\x*1,-\x*2.4)$) -- ($(S1)+(-\x*1,\x*2.4)$) node [black,midway,xshift=-0.4cm,rotate=90] {\footnotesize $\ell$ independent codewords};


\draw[thick,-{Latex[length=2mm,width=1mm]},red] ($(S2)+(0,\x*2)$) -- ($(S2)+(-\x*0.3,-\x*0.4)$) -- ($(S2)+ (\x*0.3,\x*0.4)$) -- ($(S2)+(0,-\x*2)$);
\draw[thick,-{Latex[length=2mm,width=1mm]},red] ($(S4)+(0,\x*2)$) -- ($(S4)+(-\x*0.3,-\x*0.4)$) -- ($(S4)+ (\x*0.3,\x*0.4)$) -- ($(S4)+(0,-\x*2)$);
\draw[thick,-{Latex[length=2mm,width=1mm]},red] ($(S7)+(0,\x*2)$) -- ($(S7)+(-\x*0.3,-\x*0.4)$) -- ($(S7)+ (\x*0.3,\x*0.4)$) -- ($(S7)+(0,-\x*2)$);
\draw[thick,-{Latex[length=2mm,width=1mm]},red] ($(S12)+(0,\x*2)$) -- ($(S12)+(-\x*0.3,-\x*0.4)$) -- ($(S12)+ (\x*0.3,\x*0.4)$) -- ($(S12)+(0,-\x*2)$);


\node at ($(S6)+(\x*0.6,\x*4.2)$) {Servers};
\node at ($(S2)+(\x*0.6,\x*3.2)$) {Local group $1$};
\node at ($(S6)+(\x*0.6,\x*3.2)$) {Local group $2$};
\node at ($(S10)+(\x*0.6,\x*3.2)$) {Local group $3$};


\node (El) at ($(S6)+(\x*0.6,-\x*4.2)$) {\color{red} Burst Errors};
\path ($(El.north west)+(\x*0,\x*-0.2)$) edge[bend left=5, -{Latex[length=2mm,width=1mm]}]  ($(S2)+(\x*0.1,-\x*2.6)$) ;
\path ($(El.north)+ (-\x*0.5,0)$) edge[bend left=5, -{Latex[length=2mm,width=1mm]}]  ($(S4)+(\x*0.1,-\x*2.6)$) ;
\path ($(El.north) + (\x*0.5,0)$) edge[bend right=5, -{Latex[length=2mm,width=1mm]}]  ($(S7)+(-\x*0,-\x*2.6)$) ;
\path ($(El.north east)+(\x*0,\x*-0.2)$) edge[bend right=5, -{Latex[length=2mm,width=1mm]}]  ($(S12)+(-\x*0.1,-\x*2.6)$) ;

\end{tikzpicture}

\end{center}
\caption{Illustration of LRC coded storage system with burst errors}
\label{fig:illustration}
\end{figure}
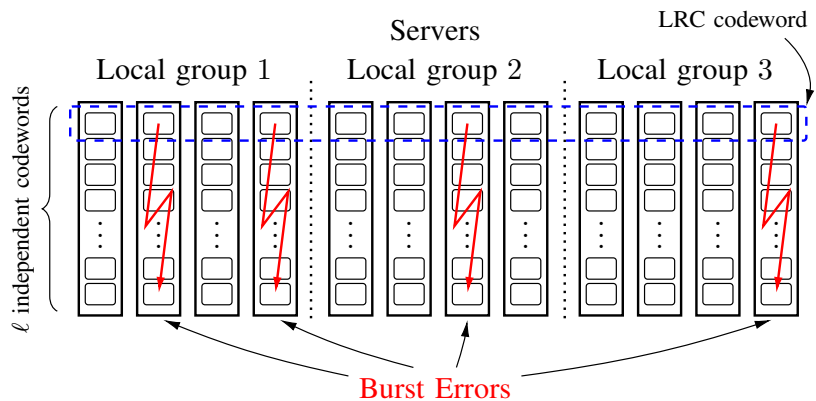

\textbf{Structure of the Paper}:
Section~\ref{sec:prelim} gives notations and definitions.
In Section~\ref{sec:listDecoding}, we show that a large class of optimal LRCs can be list-decoded beyond the alphabet-dependent Johnson radius. Among others, we show that if the number or locally list-decodable errors, normalized by the repair set size, is larger than the normalized global Johnson radius, we can increase the overall decoding radius (cf.~Theorem~\ref{thm:ListDecodingLRCs} and Lemma~\ref{lem:localUniqueDecoding}).
We give bounds on the resulting list size and decoding complexity, which are exponential in the number of repair sets, but polynomial in the length of the code when considered under an asymptotic regime where the number of repair sets is constant.
The new decoder can also be interpreted as a probabilistic unique decoder, which is of low complexity, but fails with a probability that we can bound using probabilities that certain other list decoders return a unique result.

In Section~\ref{sec:ILRC}, we first specialize the decoder of Section~\ref{sec:listDecoding} to LRCs that are subcodes of GRS codes, and whose local codes are GRS codes. Using the Guruswami--Sudan list decoder, we obtain an explicit algorithm that attains the new decoding radius described in Section~\ref{sec:listDecoding}. 
Using McEliece's results on the probability that the list size of the Guruswami--Sudan decoder is one~\cite{McEliece2003}, we obtain explicit bounds on the success probability of the unique decoder and show that it converges to $1$ for the code length going to infinity.
Furthermore, in Section~\ref{subsec:decSupercode}, we show that the decoding radius can be further increased by interleaved decoding using the existing decoders for interleaved GRS codes.

In Section~\ref{sec:PMDS}, we consider decoding of interleaved PMDS codes. We show that we can decode some interleaved PMDS codes beyond their minimum distance with high probability by the decoding algorithm for high-order interleaved codes by Metzner and Kapturowski \cite{metzner1990general}. The advantage of this decoder is that it only relies on the inherent properties of PMDS codes, and is therefore applicable to \emph{any} class of PMDS codes. We derive bounds on the success probability and show that for some families of PMDS code parameters we can correct up to $n-k-1$ errors with success probability approaching $1$ as the length goes to infinity.

The introduced methods of decoding give an improvement for many different rate regimes, complementing each other in the required relation between rate of the local codes and the global code (cf.~Figure~\ref{fig:rate_tuples} in Section~\ref{sec:conclusion}).

\section{Preliminaries}\label{sec:prelim}

We denote by $\F{q}$ a finite (extension) field with $q=p^m$ elements, where~$p$ is a prime and $m$ is a positive integer. We write $[a]$ for the set of integers $\{i : 1 \leq i \leq a , i \in \mathbb{Z} \}$. A $q$-ary code of length $n$, dimension $k$, and distance $d$ is denoted by $[n,k,d]_q$, and if the parameters $d$ and/or $q$ are not of interest, we omit them.

Let $\code$ be an $[n,k]$ code and $\mathcal{R} \subseteq [n]$ be a set of coordinates. Denote by $\code|_{\mathcal{R}}$ the code obtained by restricting the code~$\code$ to the coordinates of $\mathcal{R}$, i.e., puncturing (deleting) the positions $[n]\setminus \mathcal{R}$. We define \emph{shortening} the code $\code$ in position $i$ by a fixed value $\gamma$ as $\code' = \{c | c\in \code ,  c_i = \gamma \}_{[n] \backslash i}$.

\subsection{Locally Repairable Codes}

A code is said to have locality $r$ if every position can be recovered from at most $r$ other codeword positions. If multiple erasures can be tolerated within such a \emph{local repair set}, the code is said to have $(r,\varrho)$ locality.

\begin{definition}[$(r,\varrho)$-locality]
  An $[n,k]$ code $\code$ has $(r,\varrho)$-locality if there exists a partition $\mathcal{P} = \{\mathcal{R}_1,\mathcal{R}_2,...,\mathcal{R}_\mu\}$ of $[n]$ into sets of cardinality $|\mathcal{R}_j| \leq r+\varrho-1$ such that for the distance of the code restricted to the positions of $\mathcal{R}_j$ it holds that $d\left(\code|_{\mathcal{R}_j}\right) \geq \varrho \;, \forall \, j\in [\mu]$. 
\end{definition}
A Singleton-like upper bound on the achievable distance of an $[n,k,r,\varrho]$ LRC was derived in \cite{Gopalan2012} for $\varrho=2$ and generalized to $\varrho \geq 2$ in \cite{Kamath2014} to
\begin{equation} \label{eq:boundDistanceLRC}
  d \leq n-k+1-\left( \left\lceil \frac{k}{r} \right\rceil -1 \right) (\varrho-1) \ .
\end{equation}
In the following we refer to codes achieving this bound with equality as \emph{optimal} LRCs.

Several classes of optimal LRCs are known \cite{tamo2016optimal, Silberstein2013}, including one of particular interest for this work, the so-called Tamo--Barg LRCs \cite{Tamo2014}. Tamo--Barg LRCs are constructed as subcodes of RS codes, therefore the requirement on the field size is only in the order of the code length $n$ and they can be decoded by any of the well-studied RS decoders.

In the following, $\mathcal{R}_j$ is referred to as the $j$-th local repair set and the code $\code|_{\mathcal{R}_j}$ as the $j$-th local code. For simplicity, we only consider LRCs where every local code is of the same length $n_l = |\mathcal{R}_1| = ...= |\mathcal{R}|_{\mu}$ with $n_l \mid n$ and $r \mid k$.  We denote a $q$-ary code of length~$n$, dimension~$k$, locality~$r$ and local distance~$\varrho$ by $[n,k,r,\varrho]_q$, and if the field size $q$ is not important, we omit it.

\subsection{Partial MDS Codes}

In Section~\ref{sec:PMDS}, we consider interleaved decoding of a special class of codes with locality, namely \emph{partial MDS} (PMDS) codes \cite{Blaum2013,gabrys2018constructions,Blaum2016,calis2016general,Horlemann-Trautmann2017}, also referred to as \emph{maximally recoverable codes} \cite{Huang2007, Chen2007, Gopalan2014,gopalan2017,martinez2019universal}. The distinctive property of these codes is that they guarantee to correct any pattern of erasures that is information-theoretically correctable, i.e., every set of codeword positions which is not necessarily linearly dependent by the locality constraint or the code dimension, is linearly independent. While this increases the number of correctable erasure patterns, it generally comes at the cost of a larger required field size compared to other LRCs. We state the definition of PMDS codes given in \cite{Blaum2013} in terms of our notation.
\begin{definition}[PMDS codes]\label{def:partialMDS}
   Let $\Code$ be an $[n,k,r,\varrho]$ LRC code, where the local codes are $[r+\varrho-1,r,\varrho]$ MDS codes. We say that the code $\code$ is a \emph{partial MDS code} if for any set $E \subset [n]$, where $E$ is obtained by picking $\varrho-1$ positions from each local repair set, the distance of the code punctured in these positions is $d(\code|_{[n]\setminus E}) = n-\frac{n(\varrho-1)}{r+\varrho-1}-k+1$.
 \end{definition}
 Note that the definition of PMDS codes implies the optimality of the code with respect to the Singleton-like bound \cite{Gopalan2012,Kamath2014} given in (\ref{eq:boundDistanceLRC}).
 \begin{remark}
   A more general form of PMDS codes, where the distance, i.e., number of tolerable erasures, can be different in each local repair set is often considered in literature. For simplicity, we focus on PMDS with the same distance in each local code in this work but note that the decoding approach is also valid for PMDS codes with different distances in the local codes.
 \end{remark}

\subsection{Interleaved Codes}
Interleaved codes are direct sums of a number of constituent codes, where, by assuming that errors occur at the same positions in all constituent codewords (burst errors), one is often able to decode far beyond half the minimum distance and even the Johnson radius.
We will only consider homogeneous interleaved codes over linear codes in this work, for which the constituent codes are all the same and linear.

\begin{definition}[See, e.g., \cite{metzner1990general,krachkovsky1997decoding}]
Let $\Code[n,k,d]$ be a linear code over $\Fq$ and $\ell \in \NN$ be called the interleaving degree. The corresponding $\ell$-interleaved code is defined by
\begin{align*}
\IC[\ell; n,k,d] := \left\{ \C = \left[\begin{smallmatrix}
      \c_1 \\ \c_2 \\ \vphantom{\int\limits^x}\smash{\vdots} \\ \c_\ell
\end{smallmatrix}\right] \, : \, \c_i \in \Code \right\}.
\end{align*}
\end{definition}

The assumed error model is as follows.
We want to reconstruct a codeword $\C$ from a received word of the form $\R = \C + \E$, where $\E \in \Fq^{\ell \times n}$ is an error matrix. Let $\Eset$ be the set of indices of non-zero columns of $\E$, then we say that an error matrix is of weight $t$ if $|\Eset| = t$.
This error model is often called a ``burst error'' and motivated by many applications, such as replicated file disagreement location~\cite{metzner1990general}, data-storage applications~\cite{krachkovsky1997decoding} (cf.~Figure~\ref{fig:illustration}), suitable outer codes in concatenated codes~\cite{metzner1990general,krachkovsky1998decoding,haslach1999decoding,justesen2004decoding,schmidt2005interleaved,schmidt2009collaborative}, an ALOHA-like random-access scheme \cite{haslach1999decoding}, decoding non-interleaved codes beyond half-the-minimum distance by power decoding \cite{schmidt2010syndrome,kampf2014bounds,rosenkilde2018power,puchinger2019improved}, and recently for code-based cryptography \cite{elleuch2018interleaved,Holzbaur2019crypto}.

It is well-known that by interpreting the columns of the interleaved codewords as elements of an extension field $\mathbb{F}_{q^\ell}$, the resulting code is a linear code over $\mathbb{F}_{q^\ell}$ with the same parameters $[n,k,d]$. If the constituent codes are RS codes, then the resulting code is also an RS codes with the same evaluation points (which are in a subfield $\Fq$ of the code's field $\mathbb{F}_{q^\ell}$), cf.~\cite{sidorenko2008decoding}. Note that the mapping also provides a one-to-one correspondence of burst errors in $\Fq^{\ell \times n}$ and Hamming errors of the same weight in $\mathbb{F}_{q^\ell}^n$.

For some constituent codes, for instance RS or some AG codes, there are efficient decoders that correct many errors beyond half the minimum distance and even the Johnson radius with high probability.
Beyond these radii, the known algorithms fail for some error patterns, but succeed for a fraction of errors close to $1$.
The first such algorithm was given in \cite{krachkovsky1997decoding} for interleaved RS codes and corrects up to $\tfrac{\ell}{\ell+1}(n-k)$ errors.
Since then, many decoders with better complexity and larger decoding radius, as well as some bounds on the probability of decoding failure have been derived \cite{bleichenbacher2003decoding,coppersmith2003reconstructing,parvaresh2004multivariate,brown2004probabilistic,parvaresh2007algebraic,schmidt2007enhancing,schmidt2009collaborative,cohn2013approximate,nielsen2013generalised,wachterzeh2014decoding,puchinger2017irs,yu2018simultaneous}.
One decoder of special interest for this work was introduced by Metzner and Kapturowski in \cite{metzner1990general} and will be discussed in more detail in Section~\ref{sec:metznerKapturowski}.
It is a generic decoder, which works for interleaved codes of high interleaving degree and arbitrary constituent codes.

\section{List Decoding of Errors in Locally Repairable Codes} \label{sec:listDecoding}

List decoding is a powerful technique, where instead of returning a unique codeword or a decoding failure, the decoder returns a list of \emph{all codewords} within a given distance from the received word. It has been shown combinatorially that any $q$-ary code can be list-decoded up to the $q$-ary Johnson radius \cite{Johnson1962,bassalygo1965}, which only depends on the code alphabet, length, and minimum distance, with a maximum list size that is polynomial in the code length.
Note however that for many code classes, it is a challenging task to find explicit decoders up to the Johnson radius.
In this section, we introduce a decoding method for LRCs that achieves a decoding radius that exceeds the Johnson radius of the a large class of LRCs, by making use of the additional redundancy required for the locality in the decoding process. There are only few other known non-trivial classes of codes \cite{Guruswami2012,guruswami2008,parvaresh2005} for which it is known that they can be decoded beyond the Johnson radius. In addition to this theoretical interest, the proposed decoder is also of interest for practice, as it applies to a very large class of LRCs without requiring a change in the structure of the code.

In this work we consider list decoding of \emph{errors}, but for sake of completeness we include a discussion of erasure list decoding in Appendix~\ref{app:erasure_list_decoding}.

\subsection{New Decoding Radius}\label{subsec:newdecodingradius}

A $q$-ary code of length~$n$ is called~$(\tau,L)$-list-decodable if the Hamming sphere of radius~$\tau$ centered at any vector~$\ve{v}$ of length~$n$ always contains at most~$L$ codewords~$\c\in  \code$. It is known \cite{Johnson1962,bassalygo1965} that any $q$-ary code of length~$n$ and distance~$d$ is list-decodable up to the $q$-ary Johnson radius, which is given by the largest $\tau_J$ such that
\begin{align}
  0&\leq (\theta_qn-\tau_J)^2 - \theta_qn(\theta_qn-d) \label{eq:johnsonCondition}\\
    &\Longrightarrow \;\; \tau_J = \theta_q n\left(1-\sqrt{1-\frac{d}{n\theta_q}} \right)  \ , \label{eq:johnsonradius}
\end{align}
where $\theta_q = 1-\frac{1}{q}$ and the maximum list size is upper bounded by
\begin{equation}
  L \leq \frac{\theta_q dn}{\tau^2-\theta_qn(2\tau-d)} \ . \label{eq:maximal_list_size}
\end{equation}
In the following, if $q$ is not explicitly given, we consider the alphabet-independent case of $q \rightarrow \infty$, i.e., $\theta_q = 1$.
For any radius $\tau$, we denote the number of list-decodable errors, i.e., the largest integer smaller than~$\tau$, by
\begin{align}
t=\ceil{\tau-1},\; \text{where} \;\tau-1 \leq t < \tau \label{eq:definet} \ ,
\end{align}
with the corresponding subscript, e.g., $t_J$ for the Johnson radius $\tau_J$. Further, for an LRC we denote by~$\tau_g$ and~$\tau_l$ the global and local decoding radius and define~$t_g$ and~$t_l$ accordingly, i.e., as the total number of correctable errors and the number of errors correctable in each local code.

Generally, it has to be assumed that the list size increases exponentially in the code length~$n$ when the radius exceeds~\eqref{eq:johnsonradius}. While it is known that there are codes for which the bound is not tight and the list-decoding radius exceeds the Johnson radius \cite{parvaresh2005,Guruswami2006,guruswami2008,Guruswami2012, Rudra2013}, the behavior of most codes is still mostly an open problem. In the following, we show that the list-decoding radius of certain LRCs exceed the Johnson radius, i.e., the complexity and list size grow \emph{polynomially} in the length when the number of local repair sets~$\frac{n}{n_l}$ is constant.

Before giving the main statement of this section, we first establish two lemmas on the distribution of the of errors in an LRC and the relation of the Johnson radius to the code length.

\begin{lemma}  \label{lem:sigma}
	Let~$\code$ be an~$[n,k,r,\varrho]$ LRC.  For a codeword~$\c \in \code$ and any word~$\ve{w}$ with~$\dt{\c}{\ve{w}} \leq t_g$, let~$\mathcal{I}\subseteq \left[\frac{n}{n_l}\right]$ be the set of repair set indices~$i$ with~$\dt{\c|_{\mathcal{R}_i}}{\ve{w}|_{\mathcal{R}_i}} \leq t_l, \; \forall \; i \in \mathcal{I}$. Then the cardinality of~$\mathcal{I}$ is bounded by $\mathcal{I} \geq \ceil{\sigma}$ with
	\begin{equation}
	\sigma = \max\left\{0,\frac{n}{n_l}-\frac{\tau_g}{\tau_l} \right\} .\label{eq:sigmaineq}
	\end{equation}
\end{lemma}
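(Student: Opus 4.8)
The statement is essentially a counting/averaging argument on how the $t_g$ errors distribute among the $n/n_l$ local repair sets. I want to show that the number of "good" repair sets — those in which the received word $\ve{w}$ is within the local decoding radius $t_l$ of $\c$ — is at least $\lceil \sigma \rceil$. I would argue by contradiction: suppose fewer than $\lceil \sigma \rceil$ repair sets are good, i.e. strictly more than $\frac{n}{n_l} - \sigma = \frac{\tau_g}{\tau_l}$ repair sets are "bad," meaning each of them carries at least $t_l + 1 > \tau_l - 1$ (hence $\geq \tau_l$ once we are careful with the ceiling in \eqref{eq:definet}) errors of $\e := \ve{w} - \c$.

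**Key steps.** First I would set $\e = \ve{w} - \c$, so $\wt(\e) = \dt{\c}{\ve{w}} \le t_g$, and note that since the repair sets $\mathcal{R}_1,\dots,\mathcal{R}_{n/n_l}$ partition $[n]$, the weight of $\e$ splits as $\wt(\e) = \sum_{i=1}^{n/n_l} \wt(\e|_{\mathcal{R}_i})$. Second, I would let $\mathcal{J} = [\tfrac{n}{n_l}] \setminus \mathcal{I}$ be the set of bad repair sets; by definition of $\mathcal{I}$, for each $i \in \mathcal{J}$ we have $\wt(\e|_{\mathcal{R}_i}) > t_l$, and since weights are integers and $t_l \ge \tau_l - 1$ by \eqref{eq:definet}, this gives $\wt(\e|_{\mathcal{R}_i}) \ge t_l + 1 \ge \tau_l$. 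Third, restricting the sum to $\mathcal{J}$, $t_g \ge \wt(\e) \ge \sum_{i \in \mathcal{J}} \wt(\e|_{\mathcal{R}_i}) \ge |\mathcal{J}| \, \tau_l$, so $|\mathcal{J}| \le t_g / \tau_l < \tau_g / \tau_l$ (using $t_g < \tau_g$). Hence $|\mathcal{I}| = \frac{n}{n_l} - |\mathcal{J}| > \frac{n}{n_l} - \frac{\tau_g}{\tau_l} = \sigma$, and since $|\mathcal{I}|$ is a nonnegative integer, $|\mathcal{I}| \ge \lceil \sigma \rceil$ when $\sigma > 0$; when $\sigma \le 0$ the bound $|\mathcal{I}| \ge 0 = \lceil \sigma \rceil$ (in the $\max$-truncated sense of \eqref{eq:sigmaineq}) is trivial. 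This covers both branches of the $\max$ in \eqref{eq:sigmaineq}.

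**Main obstacle.** The only subtlety — and where I would be most careful — is the interplay between the real radii $\tau_g, \tau_l$ and their integer counterparts $t_g, t_l$ via the convention \eqref{eq:definet}, together with the final rounding to $\lceil \sigma \rceil$. One has to check that the strict inequality $|\mathcal{I}| > \sigma$ genuinely upgrades to $|\mathcal{I}| \ge \lceil \sigma \rceil$: this works because $|\mathcal{I}|$ is an integer and $x > \sigma$ with $x \in \mathbb{Z}$ forces $x \ge \lceil \sigma \rceil$ unless $\sigma$ itself is an integer, in which case $x > \sigma$ still gives $x \ge \sigma + 1 > \lceil \sigma \rceil$. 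So the chain of inequalities must be kept strict at the step $t_g < \tau_g$ (and/or $\wt(\e|_{\mathcal{R}_i}) > t_l$), which is exactly what the half-open conventions in \eqref{eq:definet} provide. Everything else is a one-line partition-counting argument.
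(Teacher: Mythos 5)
Your proposal is correct and follows essentially the same argument as the paper: both bound the number of "bad" repair sets (those with more than $t_l$ errors) by a pigeonhole count on the total error weight over the partition, and then convert to $\lceil\sigma\rceil$ via the relations $t_l+1\geq\tau_l$ and $t_g<\tau_g$ together with integrality of $|\mathcal{I}|$. The paper phrases the rounding through the chain $\frac{n}{n_l}-\lfloor\frac{t_g}{t_l+1}\rfloor\geq\lceil\frac{n}{n_l}-\frac{t_g}{t_l+1}\rceil\geq\lceil\sigma\rceil$ while you use the strict inequality $|\mathcal{I}|>\sigma$ plus integrality, but these are the same idea.
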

\begin{IEEEproof}
	Trivially, the cardinality of~$\mathcal{I}$ is non-negative. The maximum number of repair sets~$\mathcal{R}_j$ with~$\dt{\c_{\mathcal{R}_j}}{\ve{w}_{\mathcal{R}_j}} > t_l$ such that~$\dt{\c}{\ve{w}} = \sum_{j=1}^{\frac{n}{n_l}} \dt{\c_{\mathcal{R}_j}}{\ve{w}_{\mathcal{R}_j}} \leq t_g$ is given by~$\floor{\frac{t_g}{t_l+1}}$. Subtracting from the total number of repair sets~$\frac{n}{n_l}$ gives
	\begin{align*}
	\frac{n}{n_l}-\floor{\frac{t_g}{t_l+1}} \stackrel{(a)}{=} \ceil{\frac{n}{n_l}-\floor{\frac{t_g}{t_l+1}}} \geq \left\lceil \frac{n}{n_l}-\frac{t_g}{t_l+1}\right\rceil \geq \ceil{\frac{n}{n_l}-\frac{\tau_g}{\tau_l}} \ ,
	\end{align*}
        where $(a)$ holds because the left hand side is an integer.
\end{IEEEproof}

\begin{lemma}\label{lem:increasingInN}
  For any fixed $d,\ell\geq 1$ and $q>1$ let
  \begin{align*}
    h(n) \coloneqq \theta_q n\left(1-\left(1-\frac{d}{n\theta_q}\right)^{\frac{\ell}{\ell+1}} \right) \ ,
  \end{align*}
  with $\theta_q = q-\frac{1}{q}$. The function $h(n)$ is monotonically decreasing in $n$ for $n \geq \frac{d}{\theta_q}$.
\end{lemma}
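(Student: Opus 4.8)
The plan is to show that $h(n)$ is decreasing by computing its derivative and verifying the sign on the relevant range. Write $x = \frac{d}{n\theta_q}$, so that $h(n) = \theta_q n\bigl(1 - (1-x)^{\alpha}\bigr)$ with $\alpha = \frac{\ell}{\ell+1} \in (0,1)$ and $x \in (0,1]$ when $n \geq \frac{d}{\theta_q}$. It is convenient to substitute $n = \frac{d}{\theta_q x}$ and regard $h$ as a function of $x$: one gets $h = \frac{d}{x}\bigl(1-(1-x)^{\alpha}\bigr)$, up to the constant $\theta_q^{-1}\cdot\theta_q = 1$ bookkeeping. Since $n\mapsto x$ is a decreasing bijection from $[\tfrac{d}{\theta_q},\infty)$ onto $(0,1]$, it suffices to show that $g(x) \coloneqq \frac{1-(1-x)^{\alpha}}{x}$ is \emph{increasing} on $(0,1]$.

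First I would differentiate $g$. Writing $g(x) = \frac{f(x)}{x}$ with $f(x) = 1-(1-x)^{\alpha}$, we have $g'(x) = \frac{x f'(x) - f(x)}{x^2}$, so the sign of $g'$ is the sign of $\phi(x) \coloneqq x f'(x) - f(x) = \alpha x (1-x)^{\alpha-1} - 1 + (1-x)^{\alpha}$. Note $\phi(0) = 0$. Then I would differentiate again: $\phi'(x) = \alpha(1-x)^{\alpha-1} + \alpha x(\alpha-1)(1-x)^{\alpha-2}(-1)\cdot(-1)$... more carefully, $\frac{d}{dx}\bigl[\alpha x (1-x)^{\alpha-1}\bigr] = \alpha(1-x)^{\alpha-1} - \alpha(\alpha-1)x(1-x)^{\alpha-2}$ and $\frac{d}{dx}(1-x)^{\alpha} = -\alpha(1-x)^{\alpha-1}$, whence $\phi'(x) = -\alpha(\alpha-1)x(1-x)^{\alpha-2} = \alpha(1-\alpha)x(1-x)^{\alpha-2} > 0$ for $x\in(0,1)$, since $\alpha\in(0,1)$ and $x>0$. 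Therefore $\phi$ is strictly increasing on $(0,1)$ with $\phi(0)=0$, so $\phi(x)>0$ there, giving $g'(x)>0$; hence $g$ is strictly increasing on $(0,1)$ (and continuity handles the endpoint $x=1$). Translating back, $h(n)$ is monotonically decreasing in $n$ for $n \geq \frac{d}{\theta_q}$, as claimed.

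The calculations here are elementary, so the only real care needed is the bookkeeping: keeping track that $\alpha - 1 < 0$ and $\alpha - 2 < 0$ (so the powers of $(1-x)$ blow up near $x=1$ but stay positive on $(0,1)$), and that the substitution $n\leftrightarrow x$ reverses monotonicity. I expect the main obstacle — such as it is — to be purely presentational: making sure the edge behavior at $x\to 1$ (i.e.\ $n = d/\theta_q$) is handled, where $(1-x)^{\alpha-2}$ diverges; but since $\phi$ is increasing on the open interval and continuous from the left at $x=1$, and we only need $\phi(x)\ge 0$, this causes no difficulty. An alternative, slightly slicker route avoiding the second derivative is to note that $u\mapsto u^{\alpha}$ is concave on $[0,\infty)$, so by the subgradient/secant inequality $(1-x)^{\alpha} \le 1 - \alpha x$ near $0$ is too weak; instead one can use that for a concave increasing function $\psi$ with $\psi(0)=0$ (here $\psi(u)=1-(1-u)^{\alpha}$ is concave increasing with $\psi(0)=0$), the ratio $\psi(x)/x$ is non-increasing — wait, that is the opposite sign. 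In fact $\psi(x) = 1-(1-x)^{\alpha}$ is \emph{convex} (since $(1-x)^{\alpha}$ is concave in $x$), increasing, with $\psi(0)=0$, and for such functions $\psi(x)/x$ is non-decreasing; this gives the result in one line. I would present whichever of these two arguments is cleaner in context, likely the convexity one, with the explicit derivative computation relegated to a footnote or omitted.
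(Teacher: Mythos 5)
Your proof is correct and follows essentially the same route as the paper's: substitute $a = d/(n\theta_q)$ (your $x$), reduce the claim to showing the resulting function of $a$ is increasing on $(0,1]$, and determine the sign of the derivative by showing that the numerator vanishes at $0$ and has positive derivative on $(0,1)$ — your $\phi(x)$ is exactly the numerator the paper analyzes after its algebraic simplification. Your alternative observation, that $\psi(x)=1-(1-x)^{\alpha}$ is convex with $\psi(0)=0$ so $\psi(x)/x$ is non-decreasing, is also valid and would give a shorter argument than the one the paper uses, but it is not the route the paper takes.
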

\begin{IEEEproof}
  The proof is given in Appendix~\ref{app:proof_of_lemma_increasingInN}.
\end{IEEEproof}
Note that $n \geq \frac{d}{\theta_q}$ needs to hold for any $q$-ary code of length $n$ and distance $d$ \cite{bassalygo1965}.
The following theorem provides our most general statement, which is valid for any LRC.

\begin{theorem}[List Decoding of LRCs] \label{thm:ListDecodingLRCs}
	Let $L_{(n,d,\tau)}$ denote the maximum list size when list decoding an~$[n,k,d]_q$ code with radius~$\tau$. Then an $[n,k,r,\varrho]_q$ LRC is~$(\tau_g,L_g)$-list-decodable, with 
	\begin{equation}
	\tau_g = \left\{
	\begin{array}{ll}
	\frac{d}{\varrho} \cdot \tau_{J,l} \, , &\text{if}\;\; \ceil{\sigma}>0 \\
	\theta_q\left(n-\sqrt{n\left(n-\frac{d}{\theta_q}\right)}\right)   \, , & \text{else} \ ,
	\end{array} \right.   \label{eq:jblrc}
	\end{equation}
    where $\theta_q=1-\frac{1}{q}$ and $\tau_{J,l}$ denotes the $q$-ary Johnson radius of the local codes.
    The list size is upper bounded by
    \begin{equation}\label{eq:listSize}
      L_g \leq \binom{\frac{n}{n_l}}{\ceil{\sigma}} L_{(n_l,\varrho,\tau_{J,l})}^{\ceil{\sigma}} L_{(n-\ceil{\sigma} n_l,d,\tau_g)} \ .
    \end{equation}
\end{theorem}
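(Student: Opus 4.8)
The plan is to decode by first attacking the local codes, then combining a correctly-decoded subset of local codewords into a global decoding problem with fewer positions. By Lemma~\ref{lem:sigma}, for the transmitted codeword $\c$ and received word $\ve{w}$ with $\dt{\c}{\ve{w}} \le t_g$, there is a set $\mathcal{I}$ of repair-set indices with $|\mathcal{I}| \ge \ceil{\sigma}$ such that each local code $\code|_{\mathcal{R}_i}$, $i \in \mathcal{I}$, is corrupted by at most $t_l$ errors, hence $\c|_{\mathcal{R}_i}$ appears in the local list when list-decoding $\ve{w}|_{\mathcal{R}_i}$ up to $\tau_{J,l}$. The decoding algorithm underlying the theorem is: for every size-$\ceil{\sigma}$ subset $\mathcal{J} \subseteq [\tfrac{n}{n_l}]$ and every choice of one codeword from the local list in each $\code|_{\mathcal{R}_j}$, $j \in \mathcal{J}$, treat those $\ceil{\sigma} n_l$ positions as fixed/known, restrict $\code$ to the shortened code on the remaining $n - \ceil{\sigma} n_l$ positions (which still has minimum distance $\ge d$), and run the global list decoder there with radius $\tau_g$; output the union of all resulting candidates that are consistent with the fixed part. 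The list-size bound \eqref{eq:listSize} is then immediate by counting: $\binom{n/n_l}{\ceil{\sigma}}$ subsets, $L_{(n_l,\varrho,\tau_{J,l})}$ local candidates per chosen set raised to the power $\ceil{\sigma}$, and $L_{(n-\ceil{\sigma} n_l,d,\tau_g)}$ candidates from each global call.

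The substance of the theorem is that the choice $\tau_g = \tfrac{d}{\varrho}\cdot\tau_{J,l}$ (in the case $\ceil{\sigma}>0$) is actually achievable, i.e.\ that whenever $\dt{\c}{\ve{w}} \le t_g$ the true codeword is recovered. Fix a ``good'' index set $\mathcal{I}$ as above and let $\mathcal{J} \subseteq \mathcal{I}$ have $|\mathcal{J}| = \ceil{\sigma}$. For this $\mathcal{J}$ and the correct local codewords $\c|_{\mathcal{R}_j}$ (which lie in the local lists), the shortened global code sees the restriction $\c|_{[n]\setminus\bigcup_{j\in\mathcal J}\mathcal R_j}$, and the number of errors on those positions is at most $t_g$ minus the errors absorbed in the removed repair sets — in particular at most $t_g$, which is below the global list-decoding radius $\tau_g$ of that shortened code, so $\c$ is returned. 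The key consistency one must check is that $\tau_g$ as defined really is a valid list-decoding radius: we need $\tau_g \le \tau_J$ of the shortened code. Here I would invoke Lemma~\ref{lem:increasingInN} with $\ell$ chosen so that the exponent matches (for plain list decoding the Johnson radius corresponds to $\ell=1$): since the shortened code has length $n - \ceil{\sigma} n_l$ and distance $\ge d$, and $h$ is decreasing in $n$, its Johnson radius is at least $h(n)$ evaluated at the full length — and the definition of $\sigma$ via $\tau_g/\tau_l$ is exactly tuned so that $\tfrac{d}{\varrho}\tau_{J,l}$ does not exceed this. The ``else'' branch ($\ceil{\sigma}=0$) is just the ordinary Johnson radius \eqref{eq:johnsonradius} of the global code with distance $d$, requiring no locality argument.

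The main obstacle, and the step I would spend the most care on, is the self-consistent verification that $\tau_g = \tfrac{d}{\varrho}\tau_{J,l}$ together with $\sigma = \tfrac{n}{n_l} - \tfrac{\tau_g}{\tau_l}$ simultaneously satisfies (i) $\ceil{\sigma} \ge 1$ is compatible with the bound $|\mathcal{I}|\ge\ceil\sigma$ actually producing a usable $\mathcal{J}$, and (ii) $\tau_g$ genuinely lies within the list-decoding radius of the shortened global code for \emph{every} admissible choice of removed positions, i.e.\ the worst case over which $\ceil{\sigma}$ repair sets are removed. This is where Lemma~\ref{lem:increasingInN} does the real work: it guarantees that removing positions (shortening) only \emph{increases} the available Johnson radius, so it suffices to check the inequality $\tfrac{d}{\varrho}\tau_{J,l} \le \tau_J(n-\ceil{\sigma}n_l, d)$ at the extreme, and the definition of $\tau_{J,l}$ via \eqref{eq:johnsonradius} applied to the local length $n_l$ and distance $\varrho$ lets one convert between the two. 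I would also need to note the local distance bound $\varrho \le d(\code|_{\mathcal R_j})$ so that decoding up to $\tau_{J,l}$ is legitimate, and confirm $n_l \ge \varrho/\theta_q$ and $n - \ceil{\sigma}n_l \ge d/\theta_q$ so that both Johnson radii are well-defined (the latter following from the existence of the code). Everything else — the union bound on list sizes, the fact that shortening preserves the $[n,k,d]$ distance, the counting of subsets — is routine.
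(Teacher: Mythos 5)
Your proposal follows essentially the same route as the paper's proof: Lemma~\ref{lem:sigma} to guarantee $\ceil{\sigma}$ locally list-decodable repair sets, shortening to an $(n-\ceil{\sigma}n_l,d)$ code, Lemma~\ref{lem:increasingInN} with $\ell=1$ to justify lower-bounding the shortened code's Johnson radius by the formula evaluated at the (larger) length $n-\sigma n_l$, and the same product/union count for the list-size bound. The one piece you only sketch --- that the self-consistency condition $n-\sigma n_l = \tfrac{\tau_g}{\tau_l}n_l$ turns the Johnson-radius expression into the fixed-point equation whose solution is $\tau_g = \tfrac{d\tau_l}{2\tau_l - \tau_l^2/(\theta_q n_l)}$, which with $\tau_l=\tau_{J,l}$ collapses to $\tfrac{d}{\varrho}\,\tau_{J,l}$ --- is exactly the algebra the paper carries out explicitly, and your description of what must be verified there is accurate.
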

\begin{IEEEproof}
  Denote by $\mathcal{I}\subseteq \left[\frac{n}{n_l}\right]$ the set of repair set indices $i$ with~$\dt{\c_{\mathcal{R}_i}}{\ve{w}_{\mathcal{R}_i}} \leq t_l$. By Lemma~\ref{lem:sigma} the cardinality of this set is~$|\mathcal{I}| \geq \ceil{\sigma}$. These repair sets can be list-decoded locally. For every combination of codewords in these local lists, shortening the code in the corresponding~$|\mathcal{I}|n_l$ positions gives an~$(n',k',d')$ code\footnote{The dimension $k'$ of the shortened code depends on the particular code. However, the Johnson radius, and thereby the following arguments, only depend on the length $n'$ and distance $d'$ of the shortened code.} of length $n'=n-|\mathcal{I}| n_l$ and distance $d' \geq d$. Let $\tau'$ be the $q$-ary Johnson radius of the shortened code. By setting $\ell=1$ in Lemma~\ref{lem:increasingInN} we obtain
  \begin{align*}
    \tau' = \theta_q\left(n-|\mathcal{I}| n_l - \sqrt{(n-|\mathcal{I}| n_l)\left(n-|\mathcal{I}| n_l-\frac{d'}{\theta_q}\right)}\right) \geq \theta_q\left(n-\sigma n_l - \sqrt{(n-\sigma n_l)\left(n-\sigma n_l-\frac{d}{\theta_q}\right)}\right) \coloneqq \tau_g
  \end{align*}
  for any $n-|\mathcal{I}|n_l \geq d$. Note that if $n-|\mathcal{I}|n_l \leq d$ the shortened code is necessarily of dimension $k'=0$ and decoding is trivial.
  With $n-\sigma n_l = \frac{\tau_g}{\tau_l} n_l$ we obtain
  \begin{align}
    \tau_g &= \theta_q\left( \frac{\tau_g}{\tau_l} n_l- \sqrt{\frac{\tau_g}{\tau_l} n_l\left(\frac{\tau_g}{\tau_l} n_l-\frac{d}{\theta_q}\right)}\right) \nonumber \\
    \tau_g \left( \frac{n_l}{\tau_l} - \theta_q^{-2}\right)^2 &= \frac{n_l}{\tau_l} \left( \frac{n_j}{\tau_l} - \frac{d}{\theta_q}\right) \nonumber \\
    \tau_g &=  \frac{d \tau_l}{2 \tau_l-\frac{\tau_l^2}{\theta_q n_l}} \ . \label{eq:tau2}
  \end{align}
  Each local code is a $q$-ary code of length $n_l$ and distance $\varrho$, and can therefore be decoded up to the local Johnson radius of
  \begin{align*}
    \tau_{J,l} = \theta_q \left( n_l-\sqrt{n_l\left(n_l-\frac{\varrho}{\theta_q}\right)}\right) \ .
  \end{align*}
  Setting $\tau_l = \tau_{J,l}$ gives
  \begin{align*}
    \tau_g &=  \frac{d \tau_{J,l}}{2 \theta_q \left( n_l-\sqrt{n_l\left(n_l-\frac{\varrho}{\theta_q}\right)}\right)-\frac{\theta_q^2 \left( n_l-\sqrt{n_l\left(n_l-\frac{\varrho}{\theta_q}\right)}\right)^2}{\theta_q n_l}} = \frac{d}{\varrho} \tau_{J,l} \ .
  \end{align*}
There are at most~$\binom{\frac{n}{n_l}}{\ceil{\sigma}}$ choices for the~$\ceil{\sigma}$ list-decodable repair sets and for each choice there are at most~$L_{(n_l,\varrho,\tau_{J,l})}^{\ceil{\sigma}}$ distinct possibilities to shorten the received word. The list size of each shortened code is upper bounded by~$L_{(n-\ceil{\sigma} n_l,d,\tau_g)}$ and the upper bound on the global maximum list size $L_g$ follows.
\end{IEEEproof}

  \begin{remark}
    Observe that the radius and list size defined in Theorem~\ref{thm:ListDecodingLRCs} are equal to the Johnson radius and the corresponding list size when $\sigma=0$. For $\sigma>0$, the radius is larger than the Johnson radius while the list size is still polynomial in $n$ for fixed $\frac{n}{n_l}$. Therefore, the radius of Theorem~\ref{thm:ListDecodingLRCs} generalizes the Johnson radius in the sense that it contains the pair of radius (\ref{eq:johnsonradius}) and list size (\ref{eq:maximal_list_size}) as a special case. In particular, notice that the upper bound on the list size of (\ref{eq:maximal_list_size}) is not simply exponential in $n$ for $\tau>\tau_J$ (which would suffice to say that the bound does not imply list decodability anymore) but does not hold for $\tau>\tau_J$ and therefore \textbf{does not allow to make any statement on the list size} for such $\tau$. For many classes of codes, e.g., RS codes, it is an open problem whether they are list decodable beyond the Johnson radius with polynomial list size. Hence, even for finite parameters with $\sigma>0$, i.e., in the non-asymptotic regime, Theorem~\ref{thm:ListDecodingLRCs} is an improvement compared to the Johnson radius, regardless of the specific upper bound on the list $L_g$, as the Johnson radius and the corresponding bound on the list size are unsuitable to give any statement for $t$ with $\tau_J < t <\tau_g$.
  \end{remark}

\begin{remark}\label{rem:list_size_bound_not_tight}
For $\sigma>0$, the list size bound in Theorem~\ref{thm:ListDecodingLRCs} is not tight, both for the worst and average case.

\textbf{1) Worst-case list size:}
It becomes apparent from the proof of Theorem~\ref{thm:ListDecodingLRCs} that the upper bound \eqref{eq:listSize} on the list size can only be attained if
\begin{itemize}
\item the list size in each local repair set is maximal, i.e., equals $=L_{(n_l,\varrho,\tau_{J,l})}$ and
\item for each of the $\binom{\frac{n}{n_l}}{\ceil{\sigma}}$ combinations of $\ceil{\sigma}$ repair sets and all $L_{(n_l,\varrho,\tau_{J,l})}^{\ceil{\sigma}}$ combinations the codewords in the output lists of the local decoders in these repair sets, after shortening these repair sets, the decoder returns a list of $L_{(n-\ceil{\sigma} n_l,d,\tau_g)}$ codewords.
\end{itemize}
For a given set of $\ceil{\sigma}$ repair sets and combination of local codewords in the local decoder's output, the number of errors in the remaining repair sets is at most $\tau_g-\chi$, where $\chi$ is the sum of the distances of the chosen local codewords to the corresponding repair sets of the received word. Note that $\chi$ is known to the decoder after fixing the repair sets and local codewords.
Hence, a decoder can globally decode the shortened received word with a decoder of radius $\tau_g-\chi$ instead of $\tau_g$, decreasing the maximal list size for this combination of repair sets and local codewords from $L_{(n-\ceil{\sigma} n_l,d,\tau_g)}$ to $L_{(n-\ceil{\sigma} n_l,d,\tau_g-\chi)}$.
It is easy to see that for a fixed combination of local repair sets, there is at most one combination of local codewords with $\chi=0$ (in fact, if $\chi=0$ for one combination, the list size must be $1$ for all these local repair sets).
Hence, the sum of the resulting list sizes of all repair sets/local codewords combinations is strictly smaller than the bound \eqref{eq:listSize}.

By carefully analyzing possible distances of local codewords to the received word for maximal local list sizes, one may obtain a better worst-case bound.
We present a first-order improvement of the worst-case bound in Theorem~\ref{thm:ListDecodingLRCsImproved} in Appendix~\ref{app:improved_list_bound}, which makes use of the fact that a local list either contains only one element or, if the list is larger, all codewords in the list have a minimal distance to the received word.
The improvement is, however, not significant: we can save at most a factor $L_{(n-\ceil{\sigma} n_l,d,\tau_g)}$ compared to \eqref{eq:listSize} and the terms exponential in the number of local repair sets remain (cf.~Remark~\ref{rem:improved_list_size_bound_no_significant_improvement} in Appendix~\ref{app:improved_list_bound}).
In the main part of the paper, we present the bound~\eqref{eq:listSize} to focus on the principle of our list decoding approach. The improvement discussed before is technical and the improvement is marginal, but it might yield to more significant improvements by further analyzing the relation of local and global list sizes and is therefore shown in Appendix~\ref{app:improved_list_bound}.
As a second possible approach for improving the worst-case list size we will discuss a connection of the list size to low-weight codewords of a code in Remark~\ref{rem:list_decoding_connection_to_low_weight_codewords} below. By bounding the number of such codewords, one may also be able to obtain a better upper bound on the worst-case list size~\eqref{eq:listSize}.
In both approaches, the weight distribution of the local and global codes is a key towards an improved result, which is however not known (and in general not the same) for all codes---making a general statement technical and hard to achieve.

If the decoder had, for each codeword with distance at most $\tau_g$ to the received word, additional side information which $\lceil \sigma \rceil$ local decoders were successful (i.e., contain the correct local codeword in the list), it would be possible to save up to a factor $\binom{\frac{n}{n_l}}{\ceil{\sigma}}$ compared to the worst-case list size bound~\eqref{eq:listSize}.
For the worst-case list size, this is only relevant if this side information is not probabilistic (e.g., if the information is given from an external source with additional information about the received word).
We are not aware of a method to extract this information intrinsically in a non-probabilistic fashion without first obtaining the entire output list.

\textbf{2) Average-case list size:}
We will see in Sections~\ref{subsec:probdec} and \ref{subsec:ProbTB} that for certain families of LRCs, we can show that for random errors of weight at most $\tau_g$, the probability of obtaining a global list size greater than $1$ is small. Hence, the expected list size is usually significantly smaller than the upper bound \eqref{eq:listSize}.

Slightly more general, we can also extract the above mentioned additional side information (on which local decoders were successful) as soft information from the local lists.
In this case, we end up with a scenario where the side information is correct with a certain probability and the resulting output list size is significantly smaller.
The analysis goes, however, beyond the scope of this paper. The probabilistic behavior of the list size for random received words is not known for all code classes, making a general statement technical and hard to achieve. For cases in which it is known (e.g., MDS codes), it might be possible to generalize the results in Sections~\ref{subsec:probdec} and \ref{subsec:ProbTB}.
Note also that for many code parameters, the probability to obtain a list size $>1$ is quite small, so it is not clear if we could obtain a significantly better list size bound with the more general probabilistic analysis.
\end{remark}

\begin{remark}\label{rem:radiusWithFloor}
  Note that the radius derived in Theorem~\ref{thm:ListDecodingLRCs} is obtained by using the bound on the number of locally correctable repair sets derived in Lemma~\ref{lem:sigma}. While this allows for deriving a closed form expression for the maximum radius, the actual number of correctable errors is larger when considering the floor operation, i.e., shortening to obtain a code of length $n_J=n-\left(\frac{n}{n_l}-\left\lfloor \frac{t_g}{t_l+1} \right\rfloor \right)n_l$. Then, the number of correctable errors $\bar{t}_g$ is given by the largest integer $\bar{t}_g$ such that
  \begin{align}
    0< \bar{t}_g^2 + \theta_q \left\lfloor \frac{\bar{t}_g}{t_l+1} \right\rfloor n_l(d-2\bar{t}_g) \ . \label{eq:radiusWithFloor}
  \end{align}
\end{remark}

\begin{example}\label{eg:code63}
Consider an~$[63,16,8,14]$ optimal LRC achieving~\eqref{eq:boundDistanceLRC} with equality, i.e.,~$d=35$. It follows that BMD decoding corrects up to~$t_{\mathrm{BMD}} = 17$ errors uniquely and with~\eqref{eq:johnsonradius} we get a list-decoding radius of~$\tau_J < 21$, i.e.,~$t_J = 20$. By Theorem~\ref{thm:ListDecodingLRCs}, we obtain~$\tau_g < 22.18$, i.e.,~$t_g = 22$. Hence, two additional errors can be corrected. When considering the floor as discussed in Remark~\ref{rem:radiusWithFloor}, the number of correctable errors is $\bar{t}_g = 24$.
\end{example}
\begin{example}
  Consider an optimal $[15,6,3,3]$ LRC of distance $d=8$. Equation~\eqref{eq:jblrc} gives $\tau_g \approx 4.9$ and it follows that $4$ errors can be corrected, the same number as given by the Johnson radius for any code of length $n=15$ and distance $d=8$. However, when considering the floor operations as in Remark~\ref{rem:radiusWithFloor}, which are bounded in the proof of Lemma~\ref{lem:sigma}, the error correction capability is given by the largest integer $\bar{t}_g$ that fulfills (\ref{eq:radiusWithFloor}), which is $\bar{t}_g= 5$.
  For more examples, see Table~\ref{tab:exampleParameters}.
\end{example}

Whether the obtained radius is larger than the Johnson radius of the code can be determined by a simple criterion.
\begin{corollary} \label{cor:gain}
  The decoding radius~$\tau_g$ of~\eqref{eq:jblrc} is larger than the global Johnson radius~$\tau$ if and only if the sum of the local distances is larger than the code distance, i.e.
  \begin{equation} \label{eq:gaincond}
    \tau_g > \tau \;\; \Longleftrightarrow \;\; \frac{n}{n_l} \cdot \varrho > d .
  \end{equation}

\end{corollary}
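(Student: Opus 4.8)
The plan is to exploit the two-case structure of \eqref{eq:jblrc} together with the fact that, by the choices made in the proof of \theoref{thm:ListDecodingLRCs} (namely $\tau_l = \tau_{J,l}$ and $\tau_g = \tfrac{d}{\varrho}\tau_{J,l}$), the ratio $\tau_g/\tau_l$ entering \eqref{eq:sigmaineq} equals $d/\varrho$. Hence $\sigma = \max\{0,\tfrac{n}{n_l}-\tfrac{d}{\varrho}\}$, and since $\sigma\ge 0$ we have $\ceil{\sigma}>0 \iff \sigma>0 \iff \tfrac{n}{n_l}\varrho>d$. This already pins down which branch of \eqref{eq:jblrc} is active: the first branch precisely when $\tfrac{n}{n_l}\varrho>d$, and the ``else'' branch otherwise, so the proof naturally splits into these two cases.

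In the ``else'' case ($\tfrac{n}{n_l}\varrho \le d$) the claimed equivalence is immediate: by \eqref{eq:jblrc}, $\tau_g = \theta_q\big(n-\sqrt{n(n-d/\theta_q)}\big)$, which is exactly the global Johnson radius $\tau$ of \eqref{eq:johnsonradius}, so $\tau_g>\tau$ fails, matching the fact that $\tfrac{n}{n_l}\varrho>d$ fails. The content is therefore in the first case, $\tfrac{n}{n_l}\varrho>d$, where $\tau_g=\tfrac{d}{\varrho}\tau_{J,l}$ and I must show $\tau_g>\tau$.

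For that I would use the scale invariance of the Johnson-radius formula. Writing $g(N,d) \coloneqq \theta_q N\big(1-\sqrt{1-\tfrac{d}{N\theta_q}}\big)$ for the $q$-ary Johnson radius of a code of length $N$ and distance $d$, one checks $g(\lambda N,\lambda d)=\lambda\, g(N,d)$, so that $\tau_g = \tfrac{d}{\varrho}\,g(n_l,\varrho) = g\!\big(\tfrac{d n_l}{\varrho},d\big)$, while $\tau = g(n,d)$. By \lemref{lem:increasingInN} with $\ell=1$, $g(\cdot,d)$ is monotonically decreasing on $N\ge d/\theta_q$. Since $n\ge d/\theta_q$ holds for the global code, and $\tfrac{d n_l}{\varrho}\ge d/\theta_q$ holds because the local code of length $n_l$ and distance $\varrho$ satisfies $n_l\ge\varrho/\theta_q$, both arguments lie in the region of monotonicity; and $\tfrac{n}{n_l}\varrho>d$ is equivalent to $\tfrac{d n_l}{\varrho}<n$. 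Monotonicity then gives $\tau_g=g\!\big(\tfrac{d n_l}{\varrho},d\big)>g(n,d)=\tau$, and reading the chain backwards (monotonicity being strict in the relevant range) yields the converse implication.

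The main obstacle here is bookkeeping rather than analysis: one has to verify that ``$\ceil{\sigma}>0$'' in \theoref{thm:ListDecodingLRCs} really does translate into the clean condition $\tfrac{n}{n_l}\varrho>d$ (which hinges on $\tau_g/\tau_l=d/\varrho$), and that the two arguments fed to $g(\cdot,d)$ remain in the domain $N \ge d/\theta_q$ where \lemref{lem:increasingInN} applies. Once the homogeneity identity $\tfrac{d}{\varrho}\tau_{J,l}=g\!\big(\tfrac{d n_l}{\varrho},d\big)$ is in hand, the equivalence falls out of the monotonicity of the Johnson radius in the length.
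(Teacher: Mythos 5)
Your proposal is correct and follows essentially the same route as the paper: both reduce the condition to $\sigma>0\Leftrightarrow \frac{n}{n_l}-\frac{d}{\varrho}>0$ and then invoke the monotonicity of the Johnson radius in the length (Lemma~\ref{lem:increasingInN} with $\ell=1$) applied at length $\frac{d n_l}{\varrho}=n-\sigma n_l$ versus $n$. Your version is merely more explicit, deriving $\frac{d}{\varrho}\tau_{J,l}=g\bigl(\tfrac{d n_l}{\varrho},d\bigr)$ from the homogeneity of the Johnson-radius formula and checking the domain condition $N\geq d/\theta_q$, which the paper leaves implicit.
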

\begin{IEEEproof}
  It follows from Lemma~\ref{lem:increasingInN} that~$\tau_g > \tau$ if~$\sigma >0$. Substituting~\eqref{eq:jblrc} for~$\sigma > 0$ into~\eqref{eq:sigmaineq} gives
  \begin{align*}
    \frac{n}{n_l} - \frac{\tau_g}{\tau_{J,l}} = \frac{n}{n_l} - \frac{d}{\varrho} &> 0 .
  \end{align*}
\end{IEEEproof}

\begin{remark}\label{rem:list_decoding_connection_to_low_weight_codewords}
Equation~\eqref{eq:gaincond} provides some intuition why the decoding radius can be increased. For linear codes, the list-decoding problem is closely related to low-weight codewords, as every incorrect codeword in the Hamming sphere of radius~$\tau$ centered around the received word~$\ve{w}$ can be written as~$\ve{c}+\ve{c}'$, where~$\ve{c}$ is the correct codeword and~$\ve{c}'$ is a codeword with~$\wt(\ve{c}') < 2 \tau$. Further, it has to hold that~$\wt(\ve{c'}) \geq d$ and~$\wt(\ve{c}'_{\mathcal{R}_i}) \geq \varrho \;, \; i \in \left[\frac{n}{n_l}\right]$. It follows that if~\eqref{eq:gaincond} holds, low-weight codewords must have entire repair sets that are all zero. When taking all codewords from~$\code$ that have the same all-zero repair sets, we obtain shorter codes of same distance and the number of such shortened codes depends on the number of repair sets. As every low-weight codeword has to be in one of these, the total number of low-weight codewords in~$\code$ depends only on the number of repair sets and the number of low-weight codewords in codes of shorter length and distance~$d$.
\end{remark}

To derive the radius in Theorem~\ref{thm:ListDecodingLRCs} we set the local decoding radius to be the Johnson radius of the local codes. While this choice is valid for any code, it is also possible to increase the radius for different $\tau_l$.
\begin{lemma}\label{lem:localUniqueDecoding}
  The decoding radius of Theorem~\ref{thm:ListDecodingLRCs} is larger than the $q$-ary Johnson radius if
  \begin{equation*}
    \frac{\tau_l}{n_l} > \theta_q\left(1-\sqrt{1-\frac{d}{n\theta_q }}\right) \ ,
  \end{equation*}
  i.e., if the normalized local decoding radius is larger than the normalized $q$-ary Johnson radius.
\end{lemma}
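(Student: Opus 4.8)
The plan is to observe that both the global radius achievable via local decoding with an arbitrary local radius $\tau_l$ and the global Johnson radius of $\code$ can be written in the common form $d/\left(2-x/(\theta_q N)\right)$, after which the comparison is a one-line inequality.

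First I would return to the proof of Theorem~\ref{thm:ListDecodingLRCs} and stop just before the specialization $\tau_l=\tau_{J,l}$: the shortening argument there shows that, when each local code is decoded with radius $\tau_l$ and $\sigma>0$, the resulting global radius is the quantity in \eqref{eq:tau2}, namely
\begin{equation*}
  \tau_g = \frac{d\,\tau_l}{2\tau_l-\tau_l^2/(\theta_q n_l)} = \frac{d}{2-\tau_l/(\theta_q n_l)} \, .
\end{equation*}
Next I would rewrite the global Johnson radius \eqref{eq:johnsonradius} in the same shape. Squaring the identity $\theta_q n-\tau_J=\sqrt{\theta_q n(\theta_q n-d)}$ and cancelling $\theta_q^2 n^2$ yields $d=2\tau_J-\tau_J^2/(\theta_q n)$, hence $\tau_J=d/\left(2-\tau_J/(\theta_q n)\right)$.

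The conclusion then follows by comparing the two fractions. Both denominators are positive: $2-\tau_J/(\theta_q n)\geq 1$ since $\tau_J\leq d\leq\theta_q n$ (using $1-\sqrt{1-u}\leq u$ and $n\geq d/\theta_q$), and $2-\tau_l/(\theta_q n_l)>0$ for any $\tau_l$ for which the local lists have polynomial size, in particular $\tau_l\leq\tau_{J,l}\leq\theta_q n_l$. Therefore $\tau_g>\tau_J$ holds if and only if $2-\tau_l/(\theta_q n_l)$ is the smaller denominator, i.e. if and only if $\tau_l/(\theta_q n_l)>\tau_J/(\theta_q n)$, equivalently $\tau_l/n_l>\tau_J/n=\theta_q\left(1-\sqrt{1-d/(n\theta_q)}\right)$, which is precisely the stated criterion (in fact an equivalence, though only the ``if'' direction is claimed).

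The argument is purely algebraic, so the only points requiring care are the side conditions. One should check that in the regime considered $\sigma>0$ holds, so that \eqref{eq:tau2} — rather than the Johnson branch of \eqref{eq:jblrc} — is the operative expression for $\tau_g$; this is confirmed by feeding the established bound $\tau_g>\tau_J$ back into \eqref{eq:sigmaineq}, exactly as in the proof of Corollary~\ref{cor:gain}. Checking positivity of the denominators is the other small thing to watch, but neither step is a genuine obstacle; the substance of the lemma is just the recognition that $\tau_g$ and $\tau_J$ are fixed points of the same Möbius-type relation evaluated at $(\tau_l,n_l)$ and $(\tau_J,n)$ respectively.
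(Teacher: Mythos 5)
Your proof is correct and follows essentially the same route as the paper: both start from \eqref{eq:tau2} and reduce $\tau_g>\tau_J$ to the condition $\tau_l/n_l>\tau_J/n$. Your observation that $\tau_J$ satisfies the fixed-point identity $\tau_J=d/\bigl(2-\tau_J/(\theta_q n)\bigr)$ is a slightly cleaner way to organize the algebra than the paper's direct simplification of $2\theta_q-\frac{d/n}{1-\sqrt{1-d/(n\theta_q)}}$, and your attention to the positivity of the denominators is a point the paper leaves implicit.
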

\begin{IEEEproof}
  By (\ref{eq:tau2}) there is a gain if
  \begin{align*}
    \frac{d\tau_l}{2\tau_l-\frac{\tau_l^2}{\theta_qn_l}} > \theta_q\left(n-\sqrt{n\left(n-\frac{d}{\theta_q}\right)}\right) \ .
  \end{align*}
  Rewriting as a condition on the normalized local decoding radius gives
  \begin{align*}
    \frac{\tau_l}{n_l} &> 2\theta_q-\frac{d}{n}\frac{1}{1-\sqrt{1-\frac{d}{n\theta_q}}}\\
                       &= \frac{2\theta_q\left(1-\sqrt{1-\frac{d}{n\theta_q}} \right)-\frac{d}{n}}{1-\sqrt{1-\frac{d}{n\theta_q}}}\\
                       &= \frac{\theta_q\left(1-\sqrt{1-\frac{d}{n\theta_q}} \right)^2}{1-\sqrt{1-\frac{d}{n\theta_q}}}\\
                       &= \theta_q\left(1-\sqrt{1-\frac{d}{n\theta_q}} \right) \ .
  \end{align*}
\end{IEEEproof}

\begin{remark}
  A different approach to the list-decoding problem would be to view each codeword from the local codes as a symbol over $\F{q^{n_l}}$, as, e.g., in \cite{goparaju2014binary}. Let $\code$ be an optimal $[n,k,r,\varrho]$ LRC. Viewing each local repair set as a symbol over $\F{q^{n_l}}$ gives an $\F{q}$-linear subcode of a code of length $\frac{n}{n_l}$ code over $\F{q^{n_l}}$. Denote this code by $\hat{\code}$ and note that error correction in this code allows to correct some \emph{number of incorrect local repair sets}. However, as the number of errors is the number of incorrect symbols over $\F{q}$ (and not $\F{q^{n_l}}$), a single error in a local repair set would cause the resulting symbol over $\F{q^{n_l}}$ to be incorrect.
    By Lemma~\ref{lem:sigma} we know that $\ceil{\sigma}$ local repair sets can be corrected locally, i.e., the list of at least $\ceil{\sigma}$ local repair sets contains the correct local codeword. In the context of viewing the local repair sets as symbols over $\F{q^{n_l}}$ this means that we can find a combination of $\ceil{\sigma}$ elements of the local lists that give correct \emph{positions} of the codeword of $\hat{\code}$. Further, it follows directly from the proof of Lemma~\ref{lem:sigma} that there exists a distribution of $\tau_g$ errors such that the remaining $\frac{n}{n_l}-\ceil{\sigma}$ local repair sets contain a number of errors that cannot be corrected locally. In this case, assuming we found the correct combination of local list entries, we can partition the symbols of the code $\hat{\code}$ into two disjoint (known) sets, one containing the $\ceil{\sigma}$ correct positions and the other containing the $\frac{n}{n_l}- \ceil{\sigma}$ incorrect positions. As it is \emph{always} beneficial in terms of error correction capability to shorten a code in positions that are known to be correct\footnote{We can assume that the shortened position is non-trivial (i.e., the position is not the same in all codewords), as otherwise it clearly provides no benefit for decoding. Trivially, the maximum amount of information a symbol can contain is one dimension. As shortening one non-trivial position reduces the dimension of the code by one, it is not possible to use this symbol in a way that provides more information about the received word.} the best strategy is to shorten the code in all $\ceil{\sigma}$ correct positions. This leaves only incorrect position, so the only possibility of correct decoding is that the shortened positions already contain an information set, in which case no further decoding is required after shortening.
	As the shortening strategy described above is the same as in the approach of Theorem~\ref{thm:ListDecodingLRCs}, this corresponds to a trivial case in Theorem~\ref{thm:ListDecodingLRCs} in which no global decoding (over $\F{q}$) is necessary.
	This implies that the application of the same approach while viewing the local repair sets as symbols over $\F{q^{n_l}}$ does not offer any advantage.
\end{remark}

\subsection{List-Decoding Algorithm} \label{subsec:listdecodingalgo}

To achieve the decoding radius of Theorem~\ref{thm:ListDecodingLRCs}, several steps have to be taken sequentially, as shown in \algoref{algo:fanta}. While Lemma~\ref{lem:sigma} guarantees that at least~$\ceil{\sigma}$ repair sets can be decoded, it does not guarantee that all repair sets for which the local decoder is able to return a local codeword are decoded correctly. For this reason, all combinations of seemingly correct local repair sets have to be tried in order to guarantee finding the correct one. Note that if this list of correctable local codes was known, the complexity could be decreased significantly, but with the chosen approach of only decoding locally in the first step, this is in general not possible.

\begin{algorithm}

	\KwData{ $[n,k,r,\varrho]_q$ LRC $\code$; Received word~$\ve{w}=\c+\ve{e}$ with~$\c\in \code$}
	\KwResult{List of codewords within radius~$\tau_g$ of~$\ve{w}$}

	\ForEach{Local code}{
		Decode up to~$\tau_l$~$\Rightarrow$~$\xi \geq \ceil{\sigma}$ repair sets with~$L_l \geq 1$}
	\ForEach{of the~$\binom{\xi}{\ceil{\sigma}}$ combinations of local repair sets with~$L_l \geq 1$ \label{step5}}{
		\ForEach{combination of codewords in the current~$\ceil{\sigma}$ local lists \label{step6}}{
			Shorten~$w$ and decode as~$(n-\ceil{\sigma} n_l,d)$ code up to radius~$\tau_g$}
		}

		Return all codewords~$\c'$ with~$\dt{\ve{w}}{\c'} \leq t_g$

	\caption{List Decoder} \label{algo:fanta}
\end{algorithm}

Note that \algoref{algo:fanta} can be improved in terms of complexity, e.g., by considering the number of errors corrected in the local codes and decreasing the decoding radius of the shortened code accordingly (see Remark~\ref{rem:list_size_bound_not_tight}).
Similarly, if the decoder has some additional knowledge which helps to correctly determine one or more local codewords, fewer combinations of repair sets have to be analyzed, thereby reducing the complexity.
However, as this is not the focus of this work, such performance optimizations are not considered here. Its complexity is polynomial in $n$ when the number of repair sets~$\frac{n}{n_l}$ is constant, as~$\xi= O(n^{\frac{n}{n_l}})$ grows exponentially otherwise. The bound on the list size given in (\ref{eq:listSize}) is obtained by assuming the worst case, i.e., the maximal list size, in each step of \algoref{algo:fanta}. As mentioned in Section~\ref{subsec:newdecodingradius}, this bound on the list size is not tight and we expect that the maximum list size, and thereby also the worst-case and average-case complexity of \algoref{algo:fanta}, is considerably overestimated.

\subsection{Probabilistic Unique Decoder} \label{subsec:probdec}
Even for a moderate number of local repair sets, the worst case complexity of \algoref{algo:fanta} can be rather high. In \stepref{step5} all combinations of corrected local repair sets have to be tried because an undetected error event might occur, i.e., a local code might return a list with~$L_l >0$ that does not contain the correct codeword. Further, in \stepref{step6}, all combinations of the codewords in the local lists have to be tried to guarantee finding one that consists only of correct local codewords. It follows that whether these steps are required depends on the probability of the local list size being larger than one and on the probability of a local list with~$L_l>0$ not containing the correct local codeword.

We define a \emph{probabilistic unique decoder}, where instead of shortening the code in every combination of elements of the lists of the local repair sets, we only decode a single shortened code. Explicitly, the decoder performs local list decoding, chooses $\frac{n}{n_l}-\left\lfloor \frac{t_g}{t_l+1} \right\rfloor$ of the repair sets with the smallest, non-zero list size, shortens the code in these repair sets, and decodes this shortened code. If the codewords from the lists of the local repair sets chosen for shortening are correct and the list decoder of the shortened code returns a unique codeword, the decoding is successful.
Note, that for this probabilistic decoder the constraint of~$\frac{n}{n_l} = \mathrm{const.}$ can be omitted, as its complexity grows only linear with the number of local repair sets.

\begin{theorem}[Probabilistic Decoding]\label{thm:probabilisticSuccessProb}
	An~$[n,k,r,\varrho]$ LRC can be uniquely decoded up to radius~$\tau_g$ of~\eqref{eq:jblrc} with probability
	\begin{equation}
		\Pr\{L_g=1\} \geq (1-P_E)^{\left\lfloor \frac{t_g}{t_l+1}\right\rfloor } \cdot \Pr\{L_{(n_l,\varrho,t_l)} = 1 \}^{\frac{n}{n_l} - \left\lfloor \frac{t_g}{t_l+1}\right\rfloor} \cdot \Pr\left\{L_{\left( \left\lfloor \frac{t_g}{t_l+1} \right\rfloor n_l,d,t_g\right)} = 1\right\}, \label{eq:sucprob}
	\end{equation}
	where $P_E$ denotes the maximum probability that an incorrect codeword is within distance~$t_l$ for any number of errors in the repair set.
\end{theorem}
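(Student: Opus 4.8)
The plan is to isolate three ``clean'' events whose joint occurrence forces the probabilistic decoder to output the transmitted codeword, and then to lower-bound the probability of this joint event by the product in \eqref{eq:sucprob}. Fix a received word $\ve{w}=\c+\e$ with $\dt{\c}{\ve{w}}\le t_g$. By the counting argument in the proof of \lemref{lem:sigma}, at most $\floor{\tfrac{t_g}{t_l+1}}$ repair sets carry more than $t_l$ errors; call these the \emph{overloaded} repair sets, and the remaining (at least $\tfrac{n}{n_l}-\floor{\tfrac{t_g}{t_l+1}}$) ones the \emph{light} repair sets. In a light repair set the correct local codeword is at distance at most $t_l$ from $\ve{w}$, so its local list (of radius $t_l$) is non-empty and, whenever it has size exactly one, equals $\{\,\c|_{\mathcal{R}_i}\,\}$. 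The three events are: (E1) every overloaded repair set returns the empty local list; (E2) at least $\tfrac{n}{n_l}-\floor{\tfrac{t_g}{t_l+1}}$ light repair sets return a local list of size one; (E3) the global list decoder of radius $\tau_g$, applied to $\ve{w}$ shortened in $\tfrac{n}{n_l}-\floor{\tfrac{t_g}{t_l+1}}$ correctly decoded light repair sets, returns a unique codeword.

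Next I would show that (E1)--(E3) imply correct decoding. Under (E1) the repair sets with non-empty local list are exactly the light ones, of which there are at least $\tfrac{n}{n_l}-\floor{\tfrac{t_g}{t_l+1}}$; by (E2) at least that many of them have list size one, so the decoder's rule of choosing the $\tfrac{n}{n_l}-\floor{\tfrac{t_g}{t_l+1}}$ repair sets of smallest non-zero list size picks only light repair sets of list size one, i.e.\ only correct local codewords. Shortening $\ve{w}$ in these repair sets yields a received word for a code of length $\floor{\tfrac{t_g}{t_l+1}}\,n_l$ whose distance is at least $d$ (shortening cannot decrease the minimum distance); exactly as in the proof of \theoref{thm:ListDecodingLRCs} --- using \lemref{lem:increasingInN} and the fact that $\tfrac{n}{n_l}-\floor{\tfrac{t_g}{t_l+1}}\ge\ceil{\sigma}$, so that this shortened code is no longer than the one appearing there --- the value $\tau_g$ of \eqref{eq:jblrc} is a valid list-decoding radius for it, and the correct codeword is within distance $t_g<\tau_g$ of the shortened received word; hence (E3) forces the returned list to be $\{\c\}$.

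The probability estimate would then come from conditioning on the support of $\e$; the worst case is that exactly $\floor{\tfrac{t_g}{t_l+1}}$ repair sets are overloaded, which only weakens the bound because $1-P_E\le 1$ and $\Pr\{L_{(n_l,\varrho,t_l)}=1\}\le 1$. Conditioned on the support, the error values in distinct repair sets are independent. For an overloaded repair set the local list is empty if and only if no incorrect local codeword lies within distance $t_l$ of $\ve{w}$, an event that fails with probability at most $P_E$ by the definition of $P_E$ (a maximum over the number of errors in the repair set); as the overloaded repair sets are at most $\floor{\tfrac{t_g}{t_l+1}}$ in number, (E1) holds with probability at least $(1-P_E)^{\floor{t_g/(t_l+1)}}$. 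Each light repair set has local list size one with probability at least $\Pr\{L_{(n_l,\varrho,t_l)}=1\}$, and requiring this simultaneously for all of the at least $\tfrac{n}{n_l}-\floor{\tfrac{t_g}{t_l+1}}$ light repair sets already implies (E2); since (E1) and (E2) concern disjoint coordinate sets they are independent given the support, so $\Pr\{\text{(E1) and (E2)}\}\ge(1-P_E)^{\floor{t_g/(t_l+1)}}\,\Pr\{L_{(n_l,\varrho,t_l)}=1\}^{\,n/n_l-\floor{t_g/(t_l+1)}}$. Finally, conditioned on (E1) and (E2) the decoder shortens in $\tfrac{n}{n_l}-\floor{\tfrac{t_g}{t_l+1}}$ correct light repair sets, so (E3) is the event that the global decoder of the resulting length-$\floor{t_g/(t_l+1)}\,n_l$ code returns a unique codeword, which holds with probability at least $\Pr\{L_{(\floor{t_g/(t_l+1)}\,n_l,\,d,\,t_g)}=1\}$; multiplying the three factors gives \eqref{eq:sucprob}.

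The main obstacle is this last multiplication. The global decoder of the shortened code acts on the coordinates of the overloaded repair sets --- precisely the coordinates on which event (E1) also depends --- so (E1) and (E3) are not, strictly speaking, independent given the support. I would resolve this either by reading the quoted quantity $\Pr\{L_{(\floor{t_g/(t_l+1)}\,n_l,\,d,\,t_g)}=1\}$ as a worst-case probability over the error values on those coordinates (hence a valid lower bound even under the further conditioning on (E1)), or by a monotonicity/FKG-type argument to the effect that conditioning on ``no overloaded repair set produces a local miscorrection'' cannot decrease the probability that the global list decoder is unambiguous, i.e.\ $\Pr\{\text{(E3)}\mid\text{(E1) and (E2)}\}\ge\Pr\{\text{(E3)}\}$. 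Everything else --- the counting of overloaded repair sets from \lemref{lem:sigma}, the greedy-selection argument of the second paragraph, and the worst-case reductions using $1-P_E\le 1$ and $\Pr\{L_{(n_l,\varrho,t_l)}=1\}\le 1$ --- is routine.
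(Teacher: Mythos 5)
Your proof is correct and follows essentially the same route as the paper's (much terser) argument: \lemref{lem:sigma} bounds the number of overloaded repair sets, the first two factors account for the overloaded sets returning empty lists and the guaranteed-light sets returning correct singletons, and the third factor accounts for unique list decoding of the shortened length-$\floor{t_g/(t_l+1)}n_l$ code. The dependence issue you flag at the end (the third factor and the first both live on the coordinates of the overloaded repair sets) is real but is silently glossed over in the paper's own proof, which --- like yours --- effectively reads each factor as a worst-case bound over the relevant error values; your write-up is, if anything, more careful than the published one.
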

\begin{IEEEproof}
	From Lemma~\ref{lem:sigma} it is known that there are at least $\frac{n}{n_l}- \left\lfloor \frac{t_g}{t_l+1}\right\rfloor$ local repair sets that can be corrected locally, i.e., where the list returned by a list decoder with radius $\tau_l$ contains the correct local codeword. The first term bounds the probability that the list of all other repair sets is either empty or also contains the correct local codeword. The second term gives the probability that the list size of the local repair sets that are guaranteed to contain the correct word, only contain this correct word. Finally, the third term gives the probability that the list returned by the list decoder of the shortened code also contains only the correct word.
\end{IEEEproof}
This is a pessimistic bound as it does not take the distribution of errors into account, i.e., at every step of the decoder it is assumed that the worst case number of errors occur. However, even this simple bound can give good results, as will be shown in Section~\ref{subsec:ProbTB} for a specific class of LRCs.

\subsection{Asymptotic Behavior} \label{subsec:asymptotic}

When considering codes without locality, the asymptotic behavior is usually characterized by regarding the normalized decoding radius over the normalized distance. For codes with locality the distance depends not only on the length and dimension, but also the locality~$r$ and local distance~$\varrho$, which yields different views on the asymptotic behavior.\\ 
Consider an optimal~$[n,k,r,\varrho]$ LRC with~$r\mid k$ and~${(r+\varrho-1)\mid n}$. By~\eqref{eq:boundDistanceLRC} the code rate is given by
\begin{equation}
	R = \left(1-\frac{d}{n} + \frac{\varrho}{n} \right) \frac{r}{r+\varrho-1}
	= \left(1-\frac{d}{n} + \frac{\varrho}{n_l} \frac{n_l}{n} \right) R_l \ , \label{eq:rate}
\end{equation}
where~$R_l$ denotes the rate of the local codes. It follows that the rate~$R$ only depends on the normalized distance~$\frac{d}{n}$, the local normalized distance~$\frac{\varrho}{n_l}$, the number of repair sets~$\frac{n}{n_l}$ and the local rate~$R_l$.

In the following we consider the asymptotic regime where the locality~$r$ and the local distance~$\varrho$ scale with the code length $n$ such that the number of repair sets~$\frac{n}{n_l}$ is constant, as well as the ratio~$\frac{d}{\varrho}$ between local and global distance.
Figure~\ref{fig:cw_asympt} gives a graphical illustration of this scaling, where~$(a)$ depicts a short codeword and~$(b)$ and~$(c)$ depict codewords of longer codes. Note that, as indicated by the marked redundancy, the short code has the same normalized distance as the other two. The difference between~$(b)$ and~$(c)$ is due to the scaling of the parameters, where for~$(b)$ the local distance and the repair set size are the same as in $(a)$, while for~$(c)$ both scale with~$n$.
We are interested in the latter, which can be interpreted in several ways, e.g., assume each repair set corresponds to a data center and the codeword symbols are distributed over several servers. Adapting the code to an increasing number of servers in each data center corresponds to increasing the size of each repair set while keeping the normalized distance (local storage overhead) constant. Thus, we characterize LRCs asymptotically by a fixed relation~$\beta = \frac{n \varrho}{n_l d}$ between the normalized local and global distance.
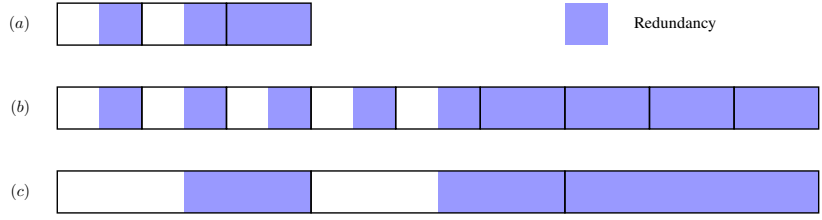
\begin{figure}
	\centering
	\resizebox{0.6\columnwidth}{!}{

\def\x{0.05\columnwidth} 

\begin{tikzpicture}

	\node[draw=none,anchor = east] at (\x*-0.5,\x*4.5)  {$(a)$};
	\node[draw=none,anchor = east] at (\x*-0.5,\x*2.5)  {$(b)$};
	\node[draw=none,anchor = east] at (\x*-0.5,\x*0.5)  {$(c)$};
	
	\fill[blue!40!white] (\x*12,\x*5) rectangle (\x*13,\x*4);
	\node[draw=none,anchor = west] at (\x*13.5,\x*4.5)  {Redundancy};
	
	\fill[blue!40!white] (\x*1,\x*5) rectangle (\x*2,\x*4);
	\fill[blue!40!white] (\x*3,\x*5) rectangle (\x*4,\x*4);
	\fill[blue!40!white] (\x*4,\x*5) rectangle (\x*6,\x*4);
	
	\draw[line width = 0.3mm] (\x*0,\x*5) rectangle (\x*2,\x*4);
	\draw[line width = 0.3mm] (\x*2,\x*5) rectangle (\x*4,\x*4);
	\draw[line width = 0.3mm] (\x*4,\x*5) rectangle (\x*6,\x*4);

	\fill[blue!40!white] (\x*1,\x*3) rectangle (\x*2,\x*2);
	\fill[blue!40!white] (\x*3,\x*3) rectangle (\x*4,\x*2);
	\fill[blue!40!white] (\x*5,\x*3) rectangle (\x*6,\x*2);
	\fill[blue!40!white] (\x*7,\x*3) rectangle (\x*8,\x*2);
	\fill[blue!40!white] (\x*9,\x*3) rectangle (\x*10,\x*2);
	\fill[blue!40!white] (\x*10,\x*3) rectangle (\x*18,\x*2);
	
	\draw[line width = 0.3mm] (\x*0,\x*3) rectangle (\x*2,\x*2);
	\draw[line width = 0.3mm] (\x*2,\x*3) rectangle (\x*4,\x*2);
	\draw[line width = 0.3mm] (\x*4,\x*3) rectangle (\x*6,\x*2);
	\draw[line width = 0.3mm] (\x*6,\x*3) rectangle (\x*8,\x*2);
	\draw[line width = 0.3mm] (\x*8,\x*3) rectangle (\x*10,\x*2);
	\draw[line width = 0.3mm] (\x*10,\x*3) rectangle (\x*12,\x*2);
	\draw[line width = 0.3mm] (\x*12,\x*3) rectangle (\x*14,\x*2);
	\draw[line width = 0.3mm] (\x*14,\x*3) rectangle (\x*16,\x*2);
	\draw[line width = 0.3mm] (\x*16,\x*3) rectangle (\x*18,\x*2);

	\fill[blue!40!white] (\x*3,\x*1) rectangle (\x*6,\x*0);
	\fill[blue!40!white] (\x*9,\x*1) rectangle (\x*12,\x*0);
	\fill[blue!40!white] (\x*12,\x*1) rectangle (\x*18,\x*0);
	
	\draw[line width = 0.3mm] (\x*0,\x*1) rectangle (\x*6,\x*0);
	\draw[line width = 0.3mm] (\x*6,\x*1) rectangle (\x*12,\x*0);
	\draw[line width = 0.3mm] (\x*12,\x*1) rectangle (\x*18,\x*0);
	
\end{tikzpicture}}
	\caption{Illustration of asymptotic scaling of parameters}
	\label{fig:cw_asympt}
\end{figure}

To compare our list-decoding radius~\eqref{eq:jblrc} with the Johnson radius~\eqref{eq:johnsonradius}, we write
\begin{equation}
\frac{\varrho}{n_l} = \beta \cdot \frac{d}{n} \ . \label{eq:rhonl3}
\end{equation}
Then for the case of $\ceil{\sigma}>0$ the normalized decoding radius of Theorem~\ref{thm:ListDecodingLRCs} is given by
\begin{align}
\frac{\tau_g}{n} &= \frac{d\tau_l}{n \varrho} = \frac{d}{n} \frac{\theta_qn_l}{\varrho}\left(1-\sqrt{1-\frac{\varrho}{n_l\theta_q}}\right)\nonumber\\
                 &= \beta^{-1}\theta_q\left(1-\sqrt{1-\beta\frac{d}{n\theta_q}}\right), \quad \text{with} \;  \beta \cdot \frac{d}{n\theta_q} \leq 1. \label{eq:asympnorm}
\end{align}
Thus, the normalized decoding radius of the global code, given by~\eqref{eq:asympnorm}, depends only on $\beta$, the normalized distance of the code, and the code alphabet.
When the normalized distances of the global and the local codes are equal, i.e.,~$\beta = 1$, the radius equals the $q$-ary Johnson radius~\eqref{eq:johnsonradius}. For any~$\beta>1$ our decoding radius provides a gain up to the point where~$\beta \frac{d}{n\theta_q}= 1$ and the curves meets the Singleton bound. Figure \ref{fig:asympplot} shows the normalized decoding radii for different values of~$\beta$ in the most general case of $q \rightarrow \infty$, i.e., $\theta = 1$, which is valid for any code, independent of the alphabet.

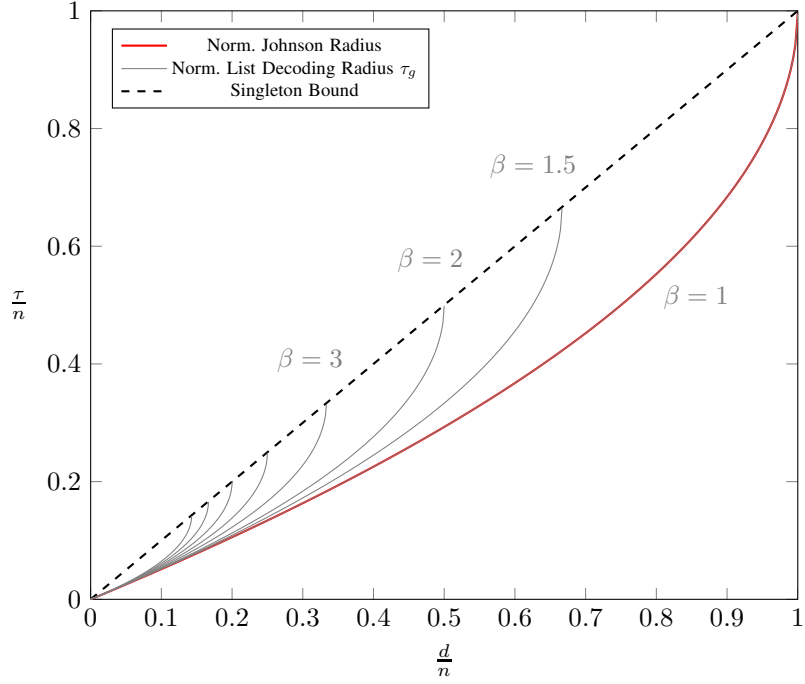
\begin{figure}[h]
	\centering
	\begin{tikzpicture}
\pgfplotsset{compat = 1.3}
\begin{axis}[
	legend style={nodes={scale=0.7, transform shape}},
	width = 0.6\columnwidth,
	xlabel = $\frac{d}{n}$,
	xlabel style = {nodes={scale=0.8, transform shape}},
	ylabel = $\frac{\tau}{n}$,
	ylabel style={rotate=-90,nodes={scale=0.85, transform shape}},
	xmin = 0,
	xmax = 1,
	ymin = 0,
	ymax = 1,
	legend pos = north west]

\addplot[color=red,
		domain = 0:1,
		samples = 300,
		thick]
		{1-sqrt(1-x)};
\addlegendentry{Norm. Johnson Radius}

\addplot[color=gray,
domain = 0:1,
samples = 300]
{x*(1/(2-(1-sqrt(1-min(1*x,1))))}
node[pos=0.66, anchor=north west] {$\beta = 1$};
\addlegendentry{Norm. List Decoding Radius $\tau_g$}

\addplot[color=black,
domain=0:1,
samples=2,
dashed,
thick]
{x};
\addlegendentry{Singleton Bound}

\addplot[color=gray,
domain = 0:1/1.5,
samples = 300]
{x*(1/(2-(1-sqrt(1-min(1.5*x,1))))};

\addplot[color=gray,
domain = 0:1/2,
samples = 300]
{x*(1/(2-(1-sqrt(1-min(2*x,1))))};

\addplot[color=gray,
domain = 0:1/3,
samples = 300]
{x*(1/(2-(1-sqrt(1-min(3*x,1))))};

\addplot[color=gray,
domain = 0:1/4,
samples = 300]
{x*(1/(2-(1-sqrt(1-min(4*x,1))))};

\addplot[color=gray,
domain = 0:0.201,
samples = 300]
{x*(1/(2-(1-sqrt(1-min(5*x,1))))};

\addplot[color=gray,
domain = 0:1/6,
samples = 300]
{x*(1/(2-(1-sqrt(1-min(6*x,1))))};

\addplot[color=gray,
domain = 0:1/7,
samples = 300]
{x*(1/(2-(1-sqrt(1-min(7*x,1))))};

\addplot[draw = none,
color=gray,
domain = 0:1,
samples = 2]
{x}
node[pos=0.7, anchor=south east] {$\beta = 1.5$}
node[pos=0.54, anchor=south east] {$\beta = 2$}
node[pos=0.37, anchor=south east] {$\beta = 3$};

\end{axis}
\end{tikzpicture}
	\caption{Normalized list-decoding radius~$\tau_g$ of Theorem~\ref{thm:ListDecodingLRCs}, for $\theta_q=1$, local decoding up to the $q$-ary Johnson radius, and~$\beta = \frac{n \varrho}{d n_l}$, compared to the normalized global Johnson radius for $\theta_q=1$.}
	\label{fig:asympplot}
\end{figure}

In Figure~\ref{fig:asympplot_constant_list_size}, we show a similar plot, but with separate curves for given maximal list sizes $L$ bounded by the formula in \eqref{eq:listSize} using the list size bound \eqref{eq:maximal_list_size} for the local and global list decoders.
For given
\begin{itemize}
\item number of local repair sets $\mu$ (here: $\mu=3$),
\item ratio $\beta = \frac{n \varrho}{n_l d}$ (here: $\beta \in \{1,1.5,2,3\}$), and
\item maximal list size $L$ (here: $L \in \{300,1000,3000,10000,100000\}$),
\end{itemize}
we plot the maximal relative decoding radius (achieved with maximal list size $L$ for all codes with the given parameters $\mu$, $\tfrac{d}{n}$, and $\beta$) over the relative minimum distance $\tfrac{d}{n}$.
\begin{figure}[h]
	\centering
	\input{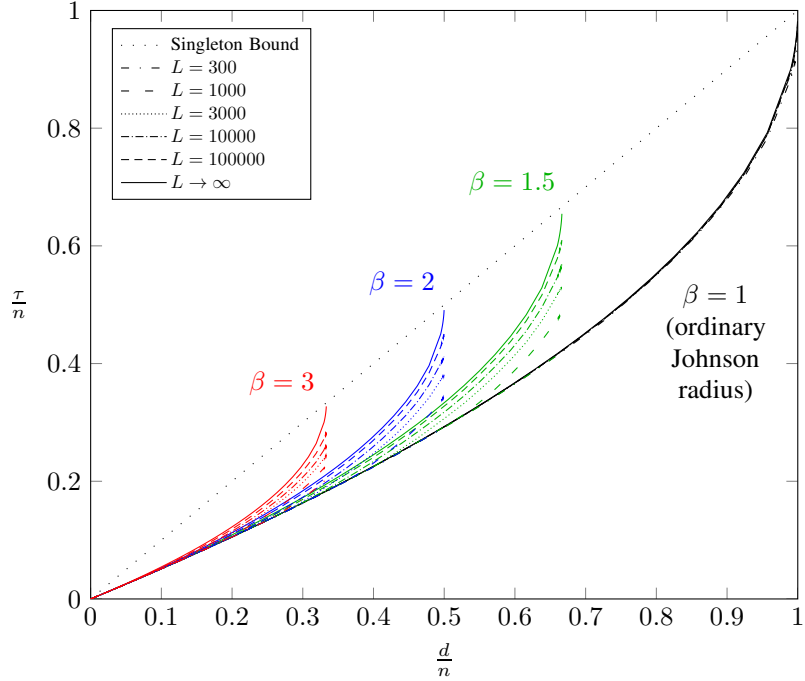}
	\caption{Normalized list-decoding radius~$\tau_g$ of Theorem~\ref{thm:ListDecodingLRCs} for a maximal list size $L$ according to \eqref{eq:listSize}, $\theta_q=1$, $\beta = \frac{n \varrho}{d n_l}$ and number of repair sets $\mu=3$.}
	\label{fig:asympplot_constant_list_size}
\end{figure}
We also show the limit of this curve for $L \to \infty$.\footnote{The asymptotic curves ($L \to \infty$) shown in Figure~\ref{fig:asympplot_constant_list_size} do not constitute the radii achieved by taking the relative decoding radius of any decoder with infinite list size---such a decoder could output the entire code and would have always relative radius $1$. Instead, we see the maximal relative decoding radius for the family of codes with fixed parameters $\mu$, $\beta$, and $\tfrac{d}{n}$ as a function of $L$ and compute the limit for $L \to \infty$. Note also that this family contains codes of arbitrary code length, and hence, for any given finite $L$, there is a code in the family whose cardinality is larger than $L$.}
The curves for $\beta=1$ equal the ordinary Johnson radii for the given maximal list sizes and the asymptotic limit (note that the difference between relative radii for finite $L$ and the asymptotic curve is so small that the curves almost coincide in this plot). Thus, a direct comparison of the Johnson radius and the decoding radius of the new decoder for the same fixed maximal list size is possible.

It can be seen that for $\mu=3$, we obtain improvements beyond the Johnson radius ($\beta=1$) for list sizes $L \geq 1000$, or $L \geq 300$ for $\beta=3$.
To get close to the asymptotic radius ($L \to \infty$), we need to allow huge list sizes according to \eqref{eq:listSize}.
Furthermore, this list size gets even larger for growing $\mu$ (this is not shown in the plot).
Recall, however, that \eqref{eq:listSize} is a very rough worst-case bound on the list size and we expect the actual maximal value to be much smaller.
We will see in Section~\ref{subsec:ProbTB} that for LRCs whose local and global codes are subcodes of MDS codes, we often have $\Pr(L=1) \approx 1$ for random errors, even close to the maximum decoding radius.

\section{Decoding Locally Repairable Subcodes of Reed--Solomon Codes}
\label{sec:ILRC}

In the previous section we proved a new list-decoding radius that is valid for any LRC with $(r+\varrho-1)\mid n$. Further, the proof of Theorem~\ref{thm:ListDecodingLRCs} directly implies a decoder based on list decoding of the local codes, shortening of the code, and list decoding of the shortened code. In general, decoding a code up to the Johnson radius is a difficult problem, but for specific code classes, such as GRS codes, such a decoder exists~\cite{Guruswami1999}. Based on this, we will give an explicit decoding algorithm for LRCs that are subcodes of GRS codes and where the local repair sets are (subcodes of) GRS codes (we will refer to these codes as \emph{GRS-subcode LRCs}), such as the popular class of Tamo--Barg LRCs~\cite{Tamo2014}.

\subsection{An Explicit Decoder for GRS-Subcode LRCs} \label{sec:tamobarg}

\algoref{algo:fanta} provides a decoding procedure up to the radius of \eqref{eq:jblrc}. To be feasible, it requires an efficient list-decoding algorithm of the global and local code, as well as an efficient way to shorten the code by known positions. While shortening is a commonly used method to decrease the length of a code, it is usually done at the encoder, where it suffices to set information symbols to zero. To shorten a code by some known positions at the decoder, all codewords that differ in the known positions need to be removed from the codebook. While this gives a code of desired distance and dimension, the structure of the code is lost and it is unclear how to decode in this newly obtained code. In this section, we address this problem for GRS codes and show how to efficiently apply \algoref{algo:fanta} to list decoding GRS subcode LRCs, such as the Singleton-optimal Tamo--Barg codes~\cite{Tamo2014}.

\algoref{algo:fanta} consists of three major steps: decoding locally, shortening the code, and decoding the shortened code. As, by definition, the local codes of a GRS subcode LRC are GRS codes, we can list-decode the $[n_l,r,\varrho]$ local codes up to the Johnson radius~\eqref{eq:johnsonradius}.
For shortening, denote the number of positions in a word $\ve{w}=\c+\ve{e}$ with $\c\in \code$ that are known to be free of error by $\delta$. The $[n,k,d]$ code $\code$ can be shortened by removing all codewords from the codebook that differ from $\ve{w}$ in these positions. The obtained code is an $[n-\delta,k-\delta,d]$ code, which is non-linear in general. Further, the structure of the shortened code is generally unknown, making efficient decoding difficult. To obtain a \emph{linear and structured} shortened code, we give a bijective map from the $[n-\delta,k-\delta,d]$ code to an $[n-\delta,k-\delta,d]$ GRS code.

\begin{definition}[Generalized Reed-Solomon Code]\label{def:GRS}
  Given a set of \emph{code locators} $\alpha_0, \ldots, \alpha_{n-1} \in \F{}$ with $\alpha_i \neq \alpha_j\  \forall \ i\neq j$ and a set of column multipliers $\nu_0,\ldots, \nu_{n-1} \in \F{} \setminus \{0\}$, the $[n,k]$ generalized Reed-Solomon code is defined as
  \begin{align*}
    \code \coloneqq \{ (\nu_0f(\alpha_0),\nu_1f(\alpha_1),\ldots,\nu_{n-1}f(\alpha_{n-1})) \ | \ f(x) \in \F{}[x], \deg(f(x)) < k \} \ .
  \end{align*}
\end{definition}
For ease of notation we define the following polynomial.
\begin{definition}\label{def:polred}
  For a polynomial $f(x)$ define
  \begin{align*}
    f^{\beta}(x) &\coloneqq  \frac{f(x)-f(\beta)}{x-\beta} \ .
  \end{align*}
  For an ordered subset $\mathcal{S} \subset \mathcal{A}$ define $f^{\mathcal{S}}(x)$ as the repeated application of the previous definition.
\end{definition}
\begin{lemma} \label{lem:RSred}
  Let $\code$ be an $[n,k,d]$ GRS code with code locators $\mathcal{A} = \{\alpha_0,\cdots, \alpha_{n-1}\}$. Then for any set $\mathcal{S} \subset \mathcal{A}$ with $|\cS| \leq k$ the code
  \begin{align*}
    \code^{\cS} \coloneqq \{ (\nu_0f^{\cS}(\alpha_0),\nu_1f^{\cS}(\alpha_1),\ldots,\nu_{n-1}f^{\cS}(\alpha_{n-1})) \ | \ f(x) \in \F{}[x], \deg(f(x)) < k \} \ .
  \end{align*}
  is an $[n-|\cS|,k-|\cS|, d]$ GRS code.
\end{lemma}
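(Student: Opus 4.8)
The plan is to reduce the statement to the single‑point case $|\cS|=1$ and then iterate, exploiting that by \defref{def:polred} the polynomial $f^{\cS}$ is obtained by applying the one‑point operation $f\mapsto f^{\beta}$ once for each $\beta\in\cS$. Throughout I read $\code^{\cS}$ as indexed by the locators in $\mathcal{A}\setminus\cS$ (the $|\cS|$ coordinates whose locators lie in $\cS$ are deleted), which is the reading consistent with the claimed length $n-|\cS|$.

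\textbf{Single‑point case.} Fix $\beta\in\mathcal{A}$ and consider the $\F{}$‑linear map $\phi_{\beta}\colon\{f\in\F{}[x]:\deg f<k\}\to\{g\in\F{}[x]:\deg g<k-1\}$, $\phi_{\beta}(f)=f^{\beta}$; it is well defined because $f-f(\beta)$ vanishes at $\beta$, so $f^{\beta}=(f-f(\beta))/(x-\beta)$ has degree at most $k-2$. The key observation I will prove is that $\phi_{\beta}$ is surjective: for any $g$ with $\deg g<k-1$, the polynomial $f(x):=(x-\beta)g(x)$ has degree $<k$, satisfies $f(\beta)=0$, and hence $f^{\beta}=g$. (Its kernel is exactly the constants, so $\phi_{\beta}$ maps a $k$‑dimensional space onto a $(k-1)$‑dimensional one.) Restricting to the coordinates with $\alpha_i\neq\beta$ therefore gives
\[
 \code^{\{\beta\}}=\bigl\{(\nu_i f^{\beta}(\alpha_i))_{\alpha_i\neq\beta}:\deg f<k\bigr\}=\bigl\{(\nu_i g(\alpha_i))_{\alpha_i\neq\beta}:\deg g<k-1\bigr\},
\]
which is precisely the code of \defref{def:GRS} on the $n-1$ distinct locators $\{\alpha_i:\alpha_i\neq\beta\}$, the $n-1$ nonzero multipliers $\{\nu_i:\alpha_i\neq\beta\}$, and dimension $k-1$. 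Since every GRS code is MDS, its minimum distance is $(n-1)-(k-1)+1=n-k+1=d$.

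\textbf{General case by induction on $s:=|\cS|$.} The case $s=0$ is trivial. For the step, write $\cS=\cS'\cup\{\beta\}$ with $\beta\notin\cS'$ and $|\cS'|=s-1\le k-1$; by the induction hypothesis $\code^{\cS'}$ is an $[n-s+1,k-s+1,d]$ GRS code with locator set $\mathcal{A}\setminus\cS'$ and multipliers $(\nu_i)_{\alpha_i\notin\cS'}$, and, by applying the single‑point argument $s-1$ times, $f\mapsto f^{\cS'}$ maps $\{\deg f<k\}$ onto all of $\{\deg g<k-s+1\}$. Because $\beta$ is one of the locators of $\code^{\cS'}$ and $f^{\cS}=(f^{\cS'})^{\beta}$, applying the single‑point case to the GRS code $\code^{\cS'}$ and the point $\beta$ shows that $\code^{\cS}=(\code^{\cS'})^{\{\beta\}}$ is an $[n-s,k-s,d]$ GRS code (with locators $\mathcal{A}\setminus\cS$ and the corresponding multipliers), and that $f\mapsto f^{\cS}$ is again onto $\{\deg g<k-s\}$, so the induction closes. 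The hypothesis $|\cS|\le k$ is used exactly to keep the degree bound nonnegative through all $s$ iterations (for $|\cS|=k$ one gets the length‑$(n-k)$ zero code).

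\textbf{Main obstacle.} The only substantive ingredient is the surjectivity of $\phi_{\beta}$ together with the composition identity $f^{\cS'\cup\{\beta\}}=(f^{\cS'})^{\beta}$: this is what forces the dimension of $\code^{\cS}$ to be exactly $k-|\cS|$ (not merely at most that) and lets the induction proceed. Everything else is the classical fact that a GRS code with distinct locators and nonzero multipliers attains the Singleton bound; a minor point is fixing the convention that the $\cS$‑indexed coordinates are dropped.
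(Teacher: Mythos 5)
Your proof is correct and follows essentially the same route as the paper's: reduce to the single-point operation $f\mapsto f^{\beta}$, observe that the degree drops by exactly one, and induct over the elements of $\cS$. In fact your version is more complete than the paper's, which only verifies $\deg(f^{\cS})\le k-|\cS|$ (i.e., that $\code^{\cS}$ sits inside the target GRS code); your surjectivity argument via $f=(x-\beta)g$ is exactly what is needed to conclude that the dimension is \emph{exactly} $k-|\cS|$ and hence that the distance is $d$, and your explicit convention of deleting the $\cS$-indexed coordinates is the right reading of the length claim.
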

\begin{IEEEproof}
By Definition~\ref{def:GRS}, we need to show that for any $f(x)$ with $\deg(f(x)) < k$ it holds that ${\deg(f^{\cS}(x)) \leq k-|\cS|}$. The polynomial $f'(x) = f(x) - f(\alpha_0)$ has a root at $\alpha_0$ and hence the polynomial $f^{\alpha_0}(x)$ with $f'(x) = f^{\alpha_0}(x) (x-\alpha_0)$ exists. It follows that $ \deg(f^{\alpha_0}(x)) = \deg(f'(x))-1 = \deg(f(x)) -1 < k-1$. The generalization to $f^{\cS}(x)$ follows by induction.
\end{IEEEproof}
Since most positions in a codeword are free of error, we define a relation between the error vector of the shortened code and the original code. Then, instead of recovering the original codeword from the decoded shortened codeword, we can obtain the complete error vector directly from the error vector of the shortened code. For ease of notation we fix $\nu_0 = \ldots = \nu_{n-1} = 1$, i.e., we consider RS codes, in the following. Note that a codeword of an $[n,k,d]$ RS code with errors $e_0,...,e_{n-1}$ can be described as the evaluation of the polynomial
\begin{align*}
  g(x) = f(x) + \hat{e}(x)
\end{align*}
with
\begin{align}\label{eq:errorLagrange}
  \hat{e}(x) = \sum_{i\in [n]} e_i \prod_{\substack{j\in [n] \\ j\neq i}} \frac{x-\alpha_j}{\alpha_i-\alpha_j} = \sum_{i\in [n]} e_i \lambda_i(x) \ .
\end{align}
The polynomials $\lambda_i(x)$ are Lagrange basis polynomials and therefore $\hat{e}(\alpha_i) = e_i \ \forall \ i\in [n]$.
\begin{lemma}
  For a vector $\ve{e}\in\Fq^n$ denote $\mathcal{E} = \{i \ | \ e_i \neq 0, i \in [n] \}$. Given a set $\mathcal{A}=\{\alpha_0,...,\alpha_{n-1}\}$, a subset $\cS \subset \mathcal{A} \setminus \{\alpha_i \ | \ i \in \mathcal{E}\}$ with $|\cS| \leq k$, and a polynomial $f(x)$, let $g(x) \coloneqq f(x)+\hat{e}(x)$ with $\hat{e}(x)$ as in (\ref{eq:errorLagrange}) and $g^{\cS}(x)$ as in Definition~\ref{def:polred}. Then
	\begin{align*}
	g^{\mathcal{S}}_i(x) &= f^{\mathcal{S}}(x) + \hat{e}^{\mathcal{S}}(x)
	\end{align*}
    with
    \begin{align*}
      \hat{e}^{\mathcal{S}}(x) = \frac{\hat{e}(x)}{\prod_{\beta \in \mathcal{S}} (x-\beta)} \ .
    \end{align*}
\end{lemma}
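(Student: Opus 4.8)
The plan is to split the claim into two independent parts. First, I would show that the iterated reduction operator $f\mapsto f^{\cS}$ of Definition~\ref{def:polred} is $\F{}$-linear; this immediately gives $g^{\cS}(x)=(f+\hat{e})^{\cS}(x)=f^{\cS}(x)+\hat{e}^{\cS}(x)$, so it only remains to identify $\hat{e}^{\cS}$. Second, I would compute $\hat{e}^{\cS}(x)=\hat{e}(x)/\prod_{\beta\in\cS}(x-\beta)$ using the fact that $\hat{e}$ vanishes on every point of $\cS$.

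For the linearity step, note that for a single point $\beta$ the map $p(x)\mapsto p^{\beta}(x)=\tfrac{p(x)-p(\beta)}{x-\beta}$ is $\F{}$-linear: $p\mapsto p(x)-p(\beta)$ is linear with image in the ideal generated by $x-\beta$, on which division by $x-\beta$ is again linear. Hence $f\mapsto f^{\cS}$, being the composition of such maps over the elements of $\cS$, is linear, and $(f+\hat{e})^{\cS}=f^{\cS}+\hat{e}^{\cS}$. For the second step, I would write $\cS=\{\beta_1,\dots,\beta_s\}$ in the order used to form $f^{\cS}$, put $h_0:=\hat{e}$ and $h_j:=h_{j-1}^{\beta_j}$, so that $\hat{e}^{\cS}=h_s$. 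By (\ref{eq:errorLagrange}) we have $\hat{e}(\alpha_i)=e_i$ for all $i\in[n]$, and since $\cS\subseteq\mathcal{A}\setminus\{\alpha_i:i\in\mathcal{E}\}$ every $\beta\in\cS$ equals some $\alpha_j$ with $e_j=0$, so $\hat{e}$ vanishes on all of $\cS$. I would then prove by induction on $j$ that $h_j(x)=\hat{e}(x)/\prod_{l=1}^{j}(x-\beta_l)$ and $h_j(\beta_l)=0$ for every $l>j$. The base case $j=0$ is the vanishing just observed; in the inductive step, $h_{j-1}(\beta_j)=0$ makes the reduction a plain division $h_j(x)=h_{j-1}(x)/(x-\beta_j)$, which yields the claimed denominator, while for $l>j$ one has $h_j(\beta_l)=h_{j-1}(\beta_l)/(\beta_l-\beta_j)=0$ since the code locators are pairwise distinct. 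Taking $j=s$ gives $\hat{e}^{\cS}(x)=\hat{e}(x)/\prod_{\beta\in\cS}(x-\beta)$. Every $h_j$ is a genuine polynomial, because $\hat{e}$ is divisible by each of the pairwise-coprime factors $x-\beta_l$, hence by their product; in particular the result does not depend on the order chosen for $\cS$.

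I do not expect a genuine obstacle here. The two points that require attention are: checking at each stage that the partially reduced polynomial still vanishes at the remaining points of $\cS$, so that each reduction step is literally a division by $x-\beta_j$ rather than the general operation of Definition~\ref{def:polred}; and invoking the hypothesis $\cS\cap\{\alpha_i:i\in\mathcal{E}\}=\emptyset$ at precisely the right place, namely to conclude that $\hat{e}$ vanishes on $\cS$. The bound $|\cS|\le k$ is not needed for this identity; it is inherited from Lemma~\ref{lem:RSred}, where it guarantees that the reduced code still has non-negative dimension.
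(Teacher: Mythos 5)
Your proposal is correct and follows essentially the same route as the paper's proof: linearity of the single-point reduction $p\mapsto p^{\beta}$ gives $g^{\cS}=f^{\cS}+\hat{e}^{\cS}$, and the hypothesis $\cS\cap\{\alpha_i : i\in\mathcal{E}\}=\emptyset$ together with $\hat{e}(\alpha_i)=e_i$ turns each reduction of $\hat{e}$ into a plain division by $x-\beta$. The only difference is that you spell out the induction (in particular that each partially reduced $h_j$ still vanishes at the remaining points of $\cS$), which the paper leaves implicit in its closing sentence.
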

\begin{IEEEproof}
	For $\delta>0$ and any $i \in \cS$, applying Definition~\ref{def:polred} gives
	\begin{align*}
	g^{\alpha_i}(x) &= \frac{(f(x)+\hat{e}(x)) - (f(\alpha_{i})+\hat{e}(\alpha_i))}{x-\alpha_{i}} \\
          &=  \frac{f(x) - f(\alpha_{i})}{x-\alpha_{i}} + \frac{\hat{e}(x) - \hat{e}(\alpha_i)}{x-\alpha_{i}} \\
	&= f^{\alpha_i}(x) + \hat{e}^{\alpha_i}(x) \ .
	\end{align*}
        The lemma statement follows from $\hat{e}(\alpha_i)=e_i = 0$ for $i\notin \mathcal{E}$ and $\mathcal{S} \cap \{\alpha_i \ | \ i \in \mathcal{E}\} = \emptyset$.
\end{IEEEproof}
With Lemma~\ref{lem:RSred} and the Guruswami-Sudan decoder~\cite{Guruswami1999}, all necessary tools for decoding up to the alphabet-independent radius $\tau_g$ of Theorem~\ref{thm:ListDecodingLRCs} are given.

\begin{figure}
	\centering
	\input{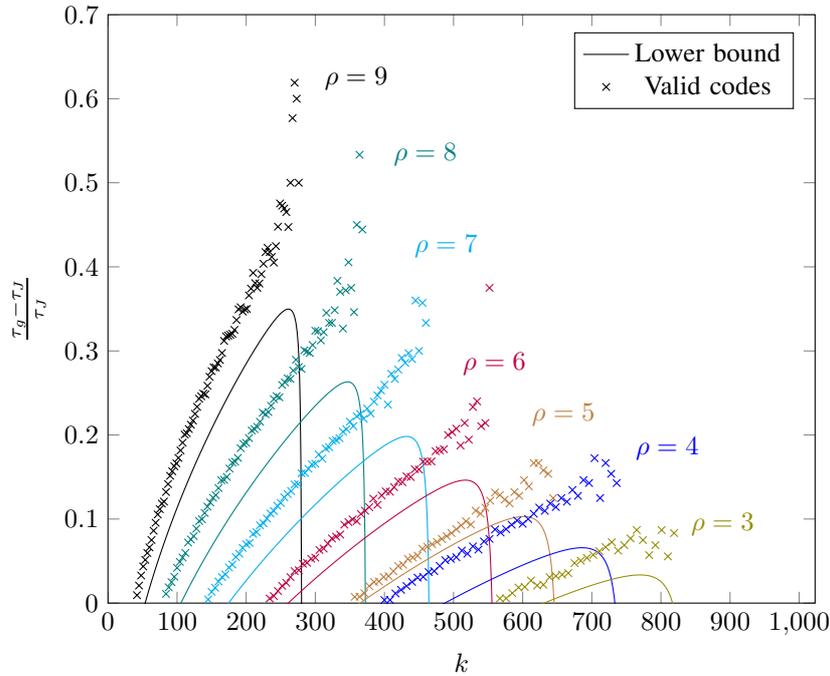}
	\caption{Relative gain in the decoding radius in relation to the alphabet-independent Johnson radius for optimal LRCs of length $n=1023$ and repair set size $n_l=11$, where $\tau_J$ denotes the Johnson radius \eqref{eq:johnsonradius} for the respective parameters, the lines are obtained from $\tau_g$ in Theorem~\ref{thm:ListDecodingLRCs} and for the crosses the number of correctable errors is given by the largest integer $\bar{t}_g$ that fulfills (\ref{eq:radiusWithFloor}).}
	\label{fig:relativegain}
\end{figure}

\figref{fig:relativegain} shows the relative gain for optimal LRCs of length $n=1023$ and repair set size $n_l=11$ for different values of $\varrho$. For each $\varrho$, a lower bound on the relative gain is given, i.e., the fraction by which our bound in Theorem~\ref{thm:ListDecodingLRCs} exceeds the Johnson radius of \eqref{eq:johnsonradius}. Each cross depicts the gain obtained for an LRC with $r\mid k$ and $(r+\varrho-1)\mid n$, when considering the exact values for all ceiling and floor operations as discussed in Remark~\ref{rem:radiusWithFloor}.

\subsection{Probabilistic Unique Decoding of GRS-Subcode LRCs} \label{subsec:ProbTB}

In Section~\ref{subsec:probdec} we introduced a simple probabilistic unique decoder whose success probability depends on the likelihood of a miscorrection, as well as the probability of the list sizes being equal to one.
For MDS codes, and thereby for RS codes, these probabilities are known to be small for a wide range of parameters \cite{McEliece1986,Cheung1988,McEliece2003}, which we will use to give numerical results and show the asymptotic behaviour of the probability of successful probabilistic unique decoding in GRS-Subcode LRCs. Here, we only consider GRS codes, but note that the same results apply also to MDS codes in general, if the MDS conjecture of $q \rightarrow \infty$ for $n\rightarrow \infty$ is correct.

Denote by $u$ the number of errors that actually occurred. Then, it was shown in \cite{McEliece1986,McEliece2003} that when list decoding\footnote{In \cite{McEliece1986} only unique decoding is considered, i.e., the case of $2t \leq d-1$. However, as noted in \cite{McEliece2003}, the bound also holds for $2t>d-1$.} up to $t$ errors in an $[n,k]$ RS code, the probability $P_E$ of a codeword other than the correct codeword being in the list, i.e., within distance $t$ of the received word, is bounded by
\begin{align}\label{eq:PERS}
  P_E(u) \leq \tilde{P}_E(n,d,q,t) = \frac{1}{(q-1)^{d-1}}\sum_{s=0}^t (q-1)^s \binom{n}{s} \ .
\end{align}
There are upper bounds that slightly improve upon this bound, in particular for $u\leq d-1$ \cite{McEliece1986,McEliece2003,Cheung1988}, but as these improvements are not substantial and complicate the following expressions we employ the bound (\ref{eq:PERS}), which is \emph{independent of the actual number of errors} $u$. Note that this bound also holds for any subcode of an $[n,k]$ RS code of distance $d$.

As the correct codeword is guaranteed to be in the list for a list decoder of radius $\tau$ if $\wt(e)\leq \left\lceil \tau -1 \right\rceil = t$, it follows that
\begin{align*}
  \Pr \{ L_{(n,d,t)}=1 | \wt(e) \leq t\} \geq 1-\tilde{P}_E(n,d,q,t) \ .
\end{align*}
Lemma~\ref{lem:sigma} guarantees that at least $\frac{n}{n_l}-\floor{\frac{\bar{t}_g}{t_l+1}}$ local repair sets contain at most $t_l$ errors, so the probability that all lists returned by the local decoders are either empty or contain only the correct local codeword is given by $\left(1-\tilde{P}_E(n_l,\varrho,q,t_l) \right)^{\frac{n}{n_l}}$ . If this is the case, the shortening of the code is guaranteed to be correct, i.e., the obtained word is of the form $\ve{c}'+\ve{e}$, where $\wt(\ve{e}) \leq \bar{t}_g$ and $\ve{c}'$ is a codeword of a code of length $n' = n- \left(\frac{n}{n_l} - \floor{\frac{\bar{t}_g}{t_l+1}} \right) n_l = \floor{\frac{\bar{t}_g}{t_l+1}} n_l$ and distance $d$. The shortening of any LRC that is a subcode of an GRS code such as, e.g., Tamo--Barg LRCs \cite{Tamo2014}, results in a subcode of an RS code. Hence, the probability of obtaining an unique decoding result for the shortened code can be bounded by (\ref{eq:PERS}). Overall, the probability $P_E$ of decoding success for unique decoding $\bar{t}_g$ errors, as defined in Remark~\ref{rem:radiusWithFloor}, in an $[n,k,r,\varrho]$ GRS-subcode LRC with the decoder described in Section~\ref{subsec:probdec} is upper bounded by
\begin{align}\label{eq:sucprobTB}
 \Pr\{L_g=1\} \geq \left(1-\tilde{P}_E(n_l,\varrho,q,t_l) \right)^{\frac{n}{n_l}} \left(1-\tilde{P}_E\left(\left\lfloor \frac{\bar{t}_g}{t_l+1} \right\rfloor n_l,d,q,\bar{t}_g\right)\right) \ .
\end{align}

\tabref{tab:sucprob} provides a lower bound on the success probabilities obtained by \eqref{eq:sucprobTB} for different LRC parameters. The columns labeled $\tau_J$, $\tau_{J,l}$, and $\tau_{g}$ give the $q$-ary Johnson radius, the $q$-ary local Johnson radius, and the radius of Theorem~\ref{thm:ListDecodingLRCs}, respectively. The values show that the computationally expensive case, where multiple repair sets have undetected error events and the local lists contain incorrect codewords, is highly unlikely and it is possible to efficiently decode beyond the global Johnson radius with a low probability of failure.

As noted in Section~\ref{subsec:listdecodingalgo}, the given upper bound on the list size can be very large. For example, using the introduced decoder to decode the maximum number of errors $t_g = 171$ in a $[500,99,33,68]_{2^9}$ LRC (cf. Table~\ref{tab:sucprob}), the upper bound of (\ref{eq:listSize}) on the obtained list size is given by $L_g \lessapprox 2.2 \times 10^6$, compared to the an upper bound of $\approx 476$ for decoding up to the Johnson radius ($t_j = 159$). However, the results of Table~\ref{tab:sucprob} show that despite the high bound on the \emph{maximum} list size, the probability that the output of the list decoder is unique is very close to $1$.

\begin{table}
  \caption{Success probabilities \eqref{eq:sucprob} of probabilistic unique decoding of~$t_g$ errors for different parameters of GRS-subcode LRCs, where $\tau_g$ is given by the \eqref{eq:jblrc} and $\bar{t}_g$ is the largest integer that fulfills (\ref{eq:radiusWithFloor}), both for $\theta_q=1$. The local and global alphabet-independent Johnson radii are denoted by $\tau_{J,l}$ and $\tau_J$, respectively.}\label{tab:sucprob}
\begin{center}
\begin{tabular}{CCCCCCC|CCCC|C}
n&k&r&\varrho&q&n_l&d&\substack{\mathrm{Eq.~}(\ref{eq:johnsonradius}) \\\tau_{J,l}} & \substack{\mathrm{Eq.~}(\ref{eq:johnsonradius}) \\\tau_J} & \substack{\mathrm{Thm.}~\ref{thm:ListDecodingLRCs} \\ \tau_g} & \substack{\mathrm{Eq.}~(\ref{eq:radiusWithFloor}) \\ \bar{t}_g} & \substack{\mathrm{Eq.}~(\ref{eq:sucprobTB})\\ \Pr\{L_g=1\}\geq} \\ \hline
1023& 99& 3& 9& 1024& 11& 669& 6.31& 421.22& 469.01& 491& 0.95973\\
1023& 99& 3& 9& 4096& 11& 669& 6.31& 421.22& 469.01& 491& 0.99744\\
1023& 99& 3& 9& 8192& 11& 669& 6.31& 421.22& 469.01& 491& 0.99936\\
 1023& 120& 4& 8& 1024& 11& 701& 5.26& 449.06& 460.51& 483& 0.95974\\
 1023& 120& 4& 8& 4096& 11& 701& 5.26& 449.06& 460.51& 483& 0.99744\\
 1023& 120& 4& 8& 8192& 11& 701& 5.26& 449.06& 460.51& 483& 0.99936\\
 1023& 220& 5& 7& 1024& 11& 546& 4.37& 324.45& 340.61& 354& 0.97108\\
 1023& 220& 5& 7& 4096& 11& 546& 4.37& 324.45& 340.61& 354& 0.99817\\
 1023& 220& 5& 7& 8196& 11& 546& 4.37& 324.45& 340.61& 354& 0.99954\\
 500& 99& 33& 68& 512& 100& 268& 43.43& 159.41& 171.17& 175& 1-10^{-35}\\
 500& 99& 33& 68& 1024& 100& 268& 43.43& 159.41& 171.17& 175& 1-10^{-42}\\
 500& 99& 33& 68& 2048& 100& 268& 43.43& 159.41& 171.17& 175& 1-10^{-50}\\
  63& 16& 8& 14& 64& 21& 35& 8.88& 21.0& 22.19& 24& 0.99938\\
 63& 16& 8& 14& 128& 21& 35& 8.88& 21.0& 22.19& 24& 0.99998\\
 63& 16& 8& 14& 256& 21& 35& 8.88& 21.0& 22.19& 24& 1-10^{-6}
\end{tabular}
\end{center}

\end{table}
We now consider the asymptotic regime. First note that for GRS codes the distance is given by $d=n-Rn+1$ and the alphabet-independent, i.e., $\theta_q=1$, Johnson radius (\ref{eq:johnsonradius}) can be expressed in terms of the rate as $\tau_J = n\left(1-\sqrt{R-\frac{1}{n}}\right)$. For $n\rightarrow \infty$, any constant rate $R<1$, and $t_J$ given by the alphabet-independent Johnson radius~(\ref{eq:johnsonradius}), it holds that
\begin{align*}
\tilde{P}_E(n,d,q,t_J) &= \frac{1}{(q-1)^{d-1}}\sum_{s=0}^{t_J} (q-1)^s \binom{n}{s}\\
&\leq \frac{1}{(q-1)^{d-1}}q^{t_J} \binom{n}{t_J} \\ 
&\leq \frac{1}{(q-1)^{d-1}}q^{t_J} \left(\frac{n \cdot e}{t_J}\right)^{t_J} \\
&\leq q^{-(d-1)\log_q(q-1)+t_J\left[ 1 + \log_q\left(\frac{n \cdot e}{t_J} \right) \right]}
\end{align*}
Since for RS code $q \to \infty$ for $n \to \infty$, we have $\log_q(q-1) \to 1$ for $n \to \infty$ and
\begin{align*}
\log_q\left(\frac{n \cdot e}{t_J}\right) &= \log_q\left(\frac{e}{1-\sqrt{R-\tfrac{1}{n}}}\right) \to 0 \quad (n \to \infty).
\end{align*}
Hence, we have
\begin{align*}
 \lim_{n\rightarrow \infty}\tilde{P}_E(n,d,q,t_J) &\leq \lim_{n,q\rightarrow \infty} q^{-(d-1)\log_q(q-1)+t_J\left[ 1 + \log_q\left(\frac{n \cdot e}{t_J} \right) \right]} \\
 &= \lim_{n,q\rightarrow \infty} q^{t_J-(d-1)} \\
 &= \lim_{n,q\rightarrow \infty} q^{n\left(1-\sqrt{R-\frac{1}{n}}\right)-(n-Rn+1)+1}\\
  &= \lim_{n,q\rightarrow \infty} q^{n\left(R-\sqrt{R-\frac{1}{n}}\right)}\stackrel{(\ast)}{=} 0 \ ,
\end{align*}
where $(\ast)$ holds because $\lim_{n\rightarrow \infty} \Big( R-\sqrt{R-\frac{1}{n}}\Big) < 0$ for any $0<R<1$.

Now consider the asymptotic regime as defined in Section~\ref{subsec:asymptotic}, i.e., a fixed number of local repair sets $\frac{n}{n_l}$, a fixed code rate $R=\frac{k}{n}$, and a normalized local distance of $\frac{\varrho}{n_l}=\beta\cdot \frac{d}{n}$ for some constant $\beta>1$. 
In this regime both the local and the global distance grow linearly in $n$. Hence the asymptotic success probability of the unique decoder, when decoding up to the alphabet-independent Johnson radius locally and globally, is
\begin{align*}
  \lim_{n\rightarrow \infty} \Pr\{L_g=1\} = \lim_{n\rightarrow \infty} (1-\tilde{P}_E(n_l,\varrho,q,t_l) )^{\frac{n}{n_l}} \left(1-\tilde{P}_E\left(\left\lfloor \frac{t_g}{t_l+1} \right\rfloor n_l,d,q,t_g\right)\right) = 1 \ .
\end{align*}

\subsection{Improved Decoding Using Interleaved Reed--Solomon Codes}
\label{subsec:decSupercode}

In data storage, as in data transmission, codes over small fields are generally favorable as they allow for lower complexity decoding of errors or recovery from erasures.
Hence, several codewords are stored simultaneously on a set of servers (cf.~Figure~\ref{fig:illustration}).
These codewords can be viewed as a codeword of an interleaved code.
Furthermore, if errors occur, they are likely to affect a server or hard drive sector as discussed in the introduction.
In the interleaved code interpretation of the stored data, this corresponds to a burst error and collaborative decoding of the stored data promises an improvement over separate decoding.
An advantage of interleaved codes, e.g., compared to other codes over a larger alphabet, is that in most cases, i.e., when erasures or only up to $<d/2$ errors occur, it is sufficient to consider each codeword separately, thereby keeping the decoding complexity low. Only in a worst-case scenario where $\geq d/2$ errors occur, the stored codewords can be viewed as an interleaved code, hence increasing the decoding radius and resolving the errors with high probability.

As discussed in Section~\ref{sec:tamobarg}, any LRC that is a subcode of a GRS code, e.g., Tamo--Barg LRCs \cite{Tamo2014}, can be decoded using a decoder of its GRS supercode. Hence, also its $\ell$-interleaved codes can be decoded using an interleaved decoder for GRS codes. For instance, the decoder in \cite{schmidt2009collaborative} can correct a random burst error of weight $t < t_{\mathrm{max}} \coloneqq \tfrac{\ell}{\ell+1}(d-1)$, where $d$ is the minimum distance of the constituent LRC, with success probability\footnote{By failure, we mean that the decoder does not return a codeword and by miscorrection if it returns a codeword different to the transmitted one. Since we consider subcodes of GRS codes, some miscorrections will turn into failures (i.e., if a miscorrected GRS codeword is not an element of the LRC subcode), but the sum of miscorrections and failures will stay the same.} $1-\Pfailure-\Pmiscorrection$ where $\Pfailure$ and $\Pmiscorrection$ are defined as in \cite{schmidt2009collaborative} ($\Pfailure$ goes to $0$ exponentially in $t-t_\mathrm{max}$ and usually $\Pmiscorrection \ll \Pfailure$, see~\cite{schmidt2009collaborative} for more details).
Already for small code parameters, this probability can be quite close to $1$ if the interleaving degree is high (which is usually the case in a real storage system) as the following example shows.

\begin{example}
Consider an $[n=15,k=8,r=4,\varrho=2]$ storage code of distance $d=7$ operating on bytes, i.e., over the field $F_{2^8}$. The unique decoding radius of this code is $\left\lfloor \frac{d-1}{2} \right\rfloor = 3$. Now assume burst errors occurring on hard-drive sectors of typical size $512$ bytes. This results in an interleaving order of $\ell = 512$ and an interleaved decoding radius of $t = 5$. The bound mentioned above gives a success probability $> 1- 10^{-1223}$.
\end{example}

For GRS subcode LRCs we can combine interleaved decoding and the local-global decoding strategy introduced in Section~\ref{sec:listDecoding}. We will analyze the resulting decoding radius in the following.

There are various decoding algorithms for interleaved GRS codes, e.g., \cite{krachkovsky1997decoding,bleichenbacher2003decoding,coppersmith2003reconstructing,parvaresh2004multivariate,brown2004probabilistic,parvaresh2007algebraic,schmidt2007enhancing,schmidt2009collaborative,cohn2013approximate,nielsen2013generalised,wachterzeh2014decoding,puchinger2017irs,yu2018simultaneous}. Among these decoders, the maximal ``expected'' decoding radius for a given code of length $n$ and minimum distance $d$ is
\begin{align}
\tauIRS = n\left(1- \left(\frac{n-d}{n}\right)^{\frac{\ell}{\ell+1}} \right), \label{eq:IRS_radius}
\end{align}
which can be achieved in polynomial time by the algorithms in \cite{parvaresh2004multivariate,cohn2013approximate,puchinger2017irs}.

We use the term ``expected'' here since for none of the above mentioned algorithms we have a decoding guarantee comparable to a polynomial-time list decoder. To be precise, the algorithms can be subdivided into two classes: list decoders with exponential worst-case list size and partial unique decoders that fail for some error patterns. We are not aware of bounds on the probability (given a random error of prescribed weight) that the list size is small or that decoding fails, for the decoders in \cite{parvaresh2004multivariate,cohn2013approximate,puchinger2017irs} at error weight close to the maximal decoding radius \eqref{eq:IRS_radius}. However, numerical results indicate that decoding succeeds with very high probability up to the given radius.

To obtain a precise statement in the following, we assume that we know the probability of \emph{unique decoding success} $\Psucc(n,d,\ell,t,\mathrm{dec})$ for a given decoder $\mathrm{dec}$ and number of errors $t \leq \tauIRS$ at the maximal decoding radius $\tauIRS$, i.e., the probability that a list decoder ($\mathrm{dec}$ being one of the algorithms in \cite{parvaresh2004multivariate,cohn2013approximate}) returns a list of size $1$ or that a partial unique decoder ($\mathrm{dec}$ as in \cite{puchinger2017irs}) succeeds in finding a unique closest codewords to the received word (the probability sample space is the set of error words of weight $t$). Further, we assume that $\Psucc(n,d,\ell,t_1,\mathrm{dec}) \geq \Psucc(n,d,\ell,t_2,\mathrm{dec})$ for any $t_1 < t_2 \leq \tauIRS$, i.e.\ the success probability is not smaller for fewer errors.

Using similar arguments as in the proofs of Theorem~\ref{thm:ListDecodingLRCs}\footnote{In Theorem~\ref{thm:ListDecodingLRCs}, we used the alphabet-dependent Johnson radius to derive the decoding radius of the new decoder. Although \eqref{eq:IRS_radius} appears to be a natural generalization of the alphabet-independent Johnson radius formula for $\ell>1$, we are not aware of any work that achieves an alphabet-dependent bound greater than \eqref{eq:IRS_radius}. Such an improvement seems to be possible by combining \cite{parvaresh2004multivariate,cohn2013approximate,puchinger2017irs} and the algorithm in \cite{augot2011list} for decoding alternant codes up to the alphabet-dependent Johnson radius. Nevertheless, we solely use the alphabet-independent bound here since this extension is out of the scope of this paper.} and Theorem~\ref{thm:probabilisticSuccessProb}, we can in principle obtain a probabilistic unique decoder for interleaving degree $\ell$, where for $\ceil{\sigma}$ correctly decoded local codes, the resulting decoding radius is the largest $\tau$ fulfilling
\begin{equation*}
0 < ( (n-\ceil{\sigma} n_l)-\tau)^{\ell+1} - (n-\ceil{\sigma} n_l)((n-\ceil{\sigma} n_l)-d)^\ell \ .
\end{equation*}

\begin{theorem}\label{thm:interleavedGeneral}
Consider an $\ell$-interleaved $[n,k,r,\varrho]_q$ LRC which is a subcode of an $[n,k',d]_q$ GRS code and where every local code is a (subcode of a) $[r+\varrho-1,r,\varrho]_q$ GRS code. Then there is an efficient decoding algorithm correcting $t_g=\ceil{\tau_g-1}$ errors, where $t_g$ is the largest integer such that
\begin{equation}
0 < ( (n-\ceil{\sigma} n_l)-t_g)^{\ell+1} - (n-\ceil{\sigma} n_l)((n-\ceil{\sigma} n_l)-d)^\ell, \label{eq:interleaved_max_radius_general}
\end{equation}
that succeeds with probability
\begin{align*}
  P_{suc} \geq (1-\tilde{P}_E(n_l,\varrho,q^\ell,t_l))^{\left\lfloor \frac{t_g}{t_l+1}\right\rfloor } \cdot \Psucc(n_l,\varrho,\ell,t_l,\mathrm{dec})^{\frac{n}{n_l} - \left\lfloor \frac{t_g}{t_l+1}\right\rfloor} \cdot \Psucc(n-\ceil{\sigma} n_l,d,\ell,t_g,\mathrm{dec}),
\end{align*}
where $\mathrm{dec}$ is one of the decoders in \cite{parvaresh2004multivariate,cohn2013approximate,puchinger2017irs} and $t_l$ is the number of errors locally correctable by these decoders.
\end{theorem}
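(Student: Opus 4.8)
The plan is to re-run the local-then-global shortening argument behind \theoref{thm:ListDecodingLRCs}, now in its probabilistic form as in \theoref{thm:probabilisticSuccessProb}, but with every plain list/BMD decoder replaced by an interleaved GRS decoder $\mathrm{dec}$ and with the shortening step performed via \lemref{lem:RSred} lifted to the extension field $\F{q^\ell}$. First I would view the columns of an $\ell$-interleaved word as elements of $\F{q^\ell}$: this turns the LRC into an $\F{q}$-linear subcode of an $[n,k',d]_{q^\ell}$ GRS code whose code locators lie in $\F{q}$, each local code into a subcode of an $[n_l,r,\varrho]_{q^\ell}$ GRS code, and a column-burst error of weight $t$ into a Hamming error of the same weight over $\F{q^\ell}$; moreover, for a uniformly random burst the induced $\F{q^\ell}$-error has uniformly random nonzero symbols, so the miscorrection bound \eqref{eq:PERS} applies over $\F{q^\ell}$. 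The algorithm then is: (i) run $\mathrm{dec}$ on each of the $\frac{n}{n_l}$ local codes up to $t_l$ errors ($t_l$ being the number correctable by $\mathrm{dec}$, i.e.\ below the radius \eqref{eq:IRS_radius} for parameters $(n_l,\varrho)$); (ii) since the repair sets partition $[n]$, a weight-$t_g$ burst has additive weights across them, so \lemref{lem:sigma} applies unchanged and at least $\frac{n}{n_l}-\floor{\frac{t_g}{t_l+1}}\ge\ceil{\sigma}$ local codes carry $\le t_l$ errors ($\sigma$ as in \eqref{eq:sigmaineq}); (iii) pick $\ceil{\sigma}$ local codes that returned a unique codeword, shorten the global code in the corresponding $\ceil{\sigma}n_l$ coordinates, and (iv) decode the shortened code with $\mathrm{dec}$ up to $t_g$ errors, returning its output if it is consistent with a codeword of the LRC agreeing with the received word.

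For the decoding radius, the point is that \lemref{lem:RSred} shows shortening the GRS supercode at $\ceil{\sigma}n_l$ code locators yields a GRS code of length $n-\ceil{\sigma}n_l$ and the \emph{same} minimum distance $d$; hence the shortened LRC subcode (shortened at those coordinates, which are error-free under the good event below) is a subcode of an $[n-\ceil{\sigma}n_l,\cdot,d]_{q^\ell}$ GRS code, whose $\ell$-interleaving $\mathrm{dec}$ can handle up to the radius \eqref{eq:IRS_radius} evaluated at $(n-\ceil{\sigma}n_l,d)$, which depends only on length and distance. Since deleting error-free columns cannot increase the burst weight, at most $t_g$ errors remain after shortening, and $t_g$ of \eqref{eq:interleaved_max_radius_general} is exactly the largest integer below that radius (this is the $\ell$-analogue of the Johnson-radius computation in the proof of \theoref{thm:ListDecodingLRCs}); \lemref{lem:increasingInN} additionally certifies that shortening more than $\ceil{\sigma}$ repair sets, when the local decoders allow it, only enlarges the attainable radius, so committing to $\ceil{\sigma}$ is safe.

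Finally, the success probability will follow by bounding, pessimistically and treating the stages as independent, the three failure modes, exactly as in \theoref{thm:probabilisticSuccessProb}: (i) none of the at most $\floor{\frac{t_g}{t_l+1}}$ local codes that may carry more than $t_l$ errors miscorrects, which per code fails with probability at most $\tilde{P}_E(n_l,\varrho,q^\ell,t_l)$, using that \eqref{eq:PERS} holds for subcodes of GRS codes and is independent of the actual error weight; (ii) each of the remaining $\frac{n}{n_l}-\floor{\frac{t_g}{t_l+1}}$ local codes (which carry $\le t_l$ errors) is decoded correctly and uniquely, with probability $\ge\Psucc(n_l,\varrho,\ell,t_l,\mathrm{dec})$ per code, invoking the assumed monotonicity of $\Psucc$ in the error weight; (iii) conditioned on (i) and (ii), exactly the correctly decoded good local codes return results, so the shortened word has the form $\mathbf{c}'+\mathbf{e}$ with $\wt(\mathbf{e})\le t_g$ over $\F{q^\ell}$, and $\mathrm{dec}$ on the shortened code then succeeds with probability $\ge\Psucc(n-\ceil{\sigma}n_l,d,\ell,t_g,\mathrm{dec})$. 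Multiplying the three bounds gives the claimed inequality. I expect the main obstacle to be the careful bookkeeping of the shortening step in the interleaved/extension-field picture: making sure the shortened LRC subcode is still a subcode of an interleaved GRS code of distance $d$ (so $\mathrm{dec}$ stays applicable) and that the error statistics carry over so that \eqref{eq:PERS} is legitimately invoked over $\F{q^\ell}$; the remaining steps are a routine re-run of the arguments behind \theoref{thm:ListDecodingLRCs} and \theoref{thm:probabilisticSuccessProb}.
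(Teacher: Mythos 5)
Your proposal is correct and follows essentially the same route as the paper's proof: the radius is obtained by re-running the shortening argument of Theorem~\ref{thm:ListDecodingLRCs} with the interleaved radius \eqref{eq:IRS_radius} in place of the Johnson radius, and the success probability is the same three-term product as in Theorem~\ref{thm:probabilisticSuccessProb}, with the miscorrection term justified by viewing the $\ell$-interleaved GRS code as a GRS code over $\F{q^\ell}$ so that \eqref{eq:PERS} applies with field size $q^\ell$. Your write-up is in fact more explicit than the paper's (which is terse on the shortening bookkeeping via Lemma~\ref{lem:RSred}), but no new ideas are introduced.
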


\begin{proof}
  The maximal radius follows directly from the same argument as in the proof of Theorem~\ref{thm:ListDecodingLRCs}.

  Similar to Theorem~\ref{thm:probabilisticSuccessProb}, the success probability consists of three terms. The first term gives the probability that all repair sets that are not used for shortening return either a nothing or the correct decoding result. It is well known that an $\ell$-interleaved GRS code over a field $\F{q}$ is a GRS code over $\F{q^\ell}$. Hence, the bound of \cite{McEliece2003} given in (\ref{eq:PERS}) applies, with the field size $q^\ell$. The second term gives the probability that the decoder returns a unique result for the $\ceil{\sigma}$ repair sets that are used for shortening, i.e., are guaranteed to have $\leq t_l$ errors. The third term gives the probability that the shortened code is decoded successfully.
\end{proof}

For $\ell=2$, we can resolve the condition in \eqref{eq:interleaved_max_radius_general} and obtain the following radius.

\begin{theorem}\label{thm:RadiusRSSubcodes}
A $2$-interleaved $[n,k,r,\varrho]_q$ LRC which is a subcode of an $[n,k,d]_q$ GRS code and where every local code is a $[r+\varrho-1,r,\varrho]_q$ GRS code can be efficiently decoded up to a radius of
  \begin{align*}
\tau_g = d \cdot \frac{2-\tfrac{\varrho}{n_l}}{\left(1-\frac{\varrho}{n_l}\right)^{4/3}+\left(1-\frac{\varrho}{n_l}\right)^{2/3}+1}
\end{align*}
with success probability
\begin{align*}
  P_{suc} \geq (1-\tilde{P}_E(n_l,\varrho,q^\ell,t_l))^{\left\lfloor \frac{t_g}{t_l+1}\right\rfloor } \cdot \Psucc(n_l,\varrho,2,t_l,\mathrm{dec})^{\frac{n}{n_l} - \left\lfloor \frac{t_g}{t_l+1}\right\rfloor} \cdot \Psucc(n-\ceil{\sigma} n_l,d,2,t_g,\mathrm{dec}),
\end{align*}
where $\mathrm{dec}$ is one of the decoders in \cite{parvaresh2004multivariate,cohn2013approximate,puchinger2017irs}.
\end{theorem}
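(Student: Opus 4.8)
The plan is to obtain this as the $\ell=2$ specialization of Theorem~\ref{thm:interleavedGeneral}, resolving the polynomial decoding condition \eqref{eq:interleaved_max_radius_general} in closed form in the same way Theorem~\ref{thm:ListDecodingLRCs} was derived from Lemma~\ref{lem:sigma}. Since $\tfrac{\ell}{\ell+1}=\tfrac23$, the interleaved GRS decoder of radius \eqref{eq:IRS_radius} corrects up to $\tau_l = n_l\bigl(1-(1-\tfrac{\varrho}{n_l})^{2/3}\bigr)$ errors in each local $[r+\varrho-1,r,\varrho]_q$ code; this is the local radius I would feed into Lemma~\ref{lem:sigma}. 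In the gain regime $\ceil{\sigma}>0$ the lemma yields $n-\sigma n_l = \tfrac{\tau_g}{\tau_l}\,n_l =: N$, and by \eqref{eq:interleaved_max_radius_general} with $\ell=2$ the maximal radius for the shortened $[N,k',d]_q$ code is the solution of $(N-\tau_g)^3 = N(N-d)^2$, i.e.\ $\tau_g = N - N^{1/3}(N-d)^{2/3}$. This is a fixed-point equation because $N$ itself depends on $\tau_g$.

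Next I would eliminate $N$ self-consistently. Writing $z := 1-\tfrac{\varrho}{n_l}$, so that $\tfrac{n_l}{\tau_l}=\tfrac1{1-z^{2/3}}$ and $N=\tfrac{\tau_g}{1-z^{2/3}}$, one gets $N-\tau_g = \tau_g\tfrac{z^{2/3}}{1-z^{2/3}}$. Substituting both into $(N-\tau_g)^3 = N(N-d)^2$, cancelling the common factor $\tfrac{\tau_g}{1-z^{2/3}}$ and taking the positive square root (legitimate since $N-d>0$ and $N-\tau_g>0$ whenever the shortened code is nontrivial) collapses the cubic to the linear equation $\tau_g z = \tau_g - d(1-z^{2/3})$, hence $\tau_g = \tfrac{d(1-z^{2/3})}{1-z}$. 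Finally, with $w := z^{1/3}=(1-\tfrac{\varrho}{n_l})^{1/3}$ and the factorizations $1-w^3=(1-w)(1+w+w^2)$, $1+z=1+w^3=(1+w)(1-w+w^2)$ and $z^{4/3}+z^{2/3}+1=w^4+w^2+1=(w^2+w+1)(w^2-w+1)$, I would rewrite $\tau_g=\tfrac{d(1+w)}{1+w+w^2}$ and check that this equals $d\cdot\tfrac{2-\varrho/n_l}{(1-\varrho/n_l)^{4/3}+(1-\varrho/n_l)^{2/3}+1}$, which is the claimed radius.

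The success probability needs no separate work: it is the $\ell=2$ instance of the bound already established in Theorems~\ref{thm:probabilisticSuccessProb} and~\ref{thm:interleavedGeneral}, whose three factors account for (i) the repair sets not used for shortening returning nothing or the correct local codeword --- bounded by \eqref{eq:PERS} over $\F{q^\ell}$, since an $\ell$-interleaved GRS code is a GRS code over $\F{q^\ell}$ and the LRC is a subcode of it --- (ii) the $\ceil{\sigma}$ shortened repair sets being decoded uniquely by $\mathrm{dec}$, and (iii) unique decoding of the shortened code.

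I expect the main obstacle to be purely bookkeeping: tracking the cube-root and square-root branches and the positivity conditions ($N>d$, $N>\tau_g$, $\ceil{\sigma}>0$) through the elimination, and verifying that the two algebraically distinct closed forms for $\tau_g$ agree via the factorizations above. The reason $\ell=2$ admits a clean answer is exactly that the cubic $(N-\tau_g)^3=N(N-d)^2$ collapses, after one cancellation and one square root, to a \emph{linear} equation in $\tau_g$; for general $\ell$ one is stuck with \eqref{eq:interleaved_max_radius_general} and no such reduction.
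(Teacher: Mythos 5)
Your proposal is correct and follows essentially the same route as the paper: specialize Theorem~\ref{thm:interleavedGeneral} to $\ell=2$, replace $\ceil{\sigma}$ by $\sigma$ (justified via Lemma~\ref{lem:increasingInN}), substitute $n-\sigma n_l = \tfrac{\tau_g}{\tau_l}n_l$ with $\tau_l$ the $\ell=2$ interleaved local radius, solve the resulting algebraic equation, and inherit the success probability directly from Theorem~\ref{thm:interleavedGeneral}. The only difference is cosmetic: the paper expands the cubic into a quadratic in $\tau_g$ and applies the quadratic formula (whose discriminant collapses via $(1-u)^3=z^2$), whereas you cancel a common factor and take a square root to reach a linear equation --- both yield the same two roots, and your positivity argument ($N>d$ at the nontrivial solution) correctly selects the larger one.
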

\begin{IEEEproof}
  It follows from \ref{lem:increasingInN} that replacing $\ceil{\sigma}$ with $\sigma$ in the condition of Theorem~\ref{thm:interleavedGeneral} gives a valid decoding radius. It follows that for $\ell=2$ and $\sigma$ as defined as in \ref{lem:sigma}, the decoding radius is given by the largest $\tau_g$ such that
\begin{align*}
0&\leq \left(\frac{\tau_g (n_l-\tau_l)}{\tau_l} \right)^{3} - \frac{n_l \tau_g}{\tau_l} \left(\frac{n_l \tau_g-d\tau_l}{\tau_l} \right)^{2} \nonumber \\
&= \tau_g^3 \frac{n_l^3-3n_l^2\tau_l+3n_l\tau_l^2-\tau_l^3}{\tau_l^3}- \tau_g \frac{n_l}{\tau_l^3} \big[n_l^2\tau_g^2-2n_l\tau_g d \tau_l + d^2 \tau_l^2\big] \nonumber \\
\overset{\tau_l>0, \tau_g>0}{\Longleftrightarrow} \quad 0 &\leq \tau_g^2 \Big[-3n_l^2+3n_l\tau_l-\tau_l^2 \Big] + \tau_g \Big[ 2 n_l^2 d \Big] + \Big[ -n_l d^2 \tau_l \Big] \label{eq:quadratic_inequality}
\end{align*}
Since $-3n_l^2+3n_l\tau_l-\tau_l^2<0$, the inequality is fulfilled between the two roots of the polynomial in $\tau_g$, i.e.,
\begin{align*}
\tau_{g,1/2} = \frac{2n_l^2d \pm \sqrt{4 n_l^4d^2 - 4n_l d^2\tau_l (3 n_l^2-3n_l\tau_l + \tau_l^2)}}{2(3 n_l^2-3n_l\tau_l + \tau_l^2)}.
\end{align*}
With $\tau_l = n_l(1-(\tfrac{n_l-\varrho}{n_l})^{2/3})$, we have
\begin{align*}
\tau_{g,1/2} &= d \cdot\frac{1 \pm \sqrt{1 - (1-(\tfrac{n_l-\varrho}{n_l})^{2/3}) (3 - 3(1-(\tfrac{n_l-\varrho}{n_l})^{2/3}) + (1-(\tfrac{n_l-\varrho}{n_l})^{2/3})^2)}}{3 - 3(1-(\tfrac{n_l-\varrho}{n_l})^{2/3}) + (1-(\tfrac{n_l-\varrho}{n_l})^{2/3})^2} \\
&= d \cdot \frac{1 \pm (\tfrac{n_l-\varrho}{n_l})}{(\tfrac{n_l-\varrho}{n_l})^{4/3}+(\tfrac{n_l-\varrho}{n_l})^{2/3}+1} \\
\end{align*}
and the theorem statement follows.
\end{IEEEproof}

In Table~\ref{tab:exampleParameters} we give some example of parameters and the respective decoding radii achieved by the different decoders of Section~\ref{subsec:listdecodingalgo} and Section~\ref{subsec:decSupercode}. Note that especially for LRCs with a small number of local repair sets $\mu=\frac{n}{n_l}$, the number of correctable errors $\bar{t}_g$ when considering the floors in the derivation of the number of correctable errors (cf. Remark~\ref{rem:radiusWithFloor}), can give a large relative improvement compared to the radius of Theorem~\ref{thm:ListDecodingLRCs}.
\begin{table}\centering
\caption{Decoding radii for different parameters. The alphabet-independent Johnson-radii for the local and the global code are denoted by $\tau_{J,l}$ and $\tau_{J,g}$ respectively. The radius for $\ell=2$-interleaved decoding given in (\ref{eq:IRS_radius}) is denoted by $\tau_{J,\ell=2}$. The radius of Theorem~\ref{thm:ListDecodingLRCs} is given by $\tau_g$ and the number of correctable errors when considering the floors by $\bar{t}_g$. The radius of Theorem~\ref{thm:RadiusRSSubcodes} is denoted by $\tau_{g,\ell=2}$.}

\begin{tabular}{CCCCCCC|CC||CC|CC||C|C}
  n&k&r&\varrho&q&n_l&d&R_{\text{global}} & R_{\text{local}} & \substack{\text{Eq. (\ref{eq:johnsonradius})} \\ \tau_{J,l}}&\substack{\text{Eq. (\ref{eq:johnsonradius})}\\ \tau_{J,g}}&\substack{\text{Thm.~\ref{thm:ListDecodingLRCs}}\\\tau_{g}} & \substack{\text{Eq.~(\ref{eq:radiusWithFloor})} \\\bar{t}_g} & \substack{\text{Eq.~(\ref{eq:IRS_radius})}\\\tau_{J,\ell=2}}&\substack{\text{Thm.~\ref{thm:RadiusRSSubcodes}}\\\tau_{g,\ell=2}} \\ \hline
15& 6& 3& 3& \infty& 5& 8&0.40 & 0.60 & 1.84& 4.75& 4.9& 5& 5.98& 6.09\\
 30& 16& 4& 3& \infty& 6& 9&0.53 &0.66& 1.76& 4.9& 5.27& 5 & 6.35& 6.66\\
 30& 15& 3& 3& \infty& 5& 8&0.50 & 0.60 & 1.84& 4.31& 4.9& 5 & 5.6& 6.09\\
 63& 16& 8& 14& \infty& 21& 35& 0.25 & 0.38 & 8.88& 21& 22.19& 24 & 26.31& 27.26\\
 63& 40& 5& 3& \infty& 7& 10& 0.63 & 0.71 & 1.71& 5.22& 5.69& 5 & 6.86& 7.27\\
 500& 99& 33& 68& \infty& 100& 268& 0.20 & 0.33 & 43.43& 159.41& 171.17& 175 & 200.33& 209.73
\end{tabular}

\label{tab:exampleParameters}
\end{table}

\section{Decoding of PMDS codes beyond their Minimum Distance}
\label{sec:PMDS}

In this section, we focus on interleaved PMDS (maximally recoverable) codes, as in Definition~\ref{def:partialMDS}. For these codes we propose an explicit decoder that can correct \emph{beyond the minimum distance} for many error patterns (i.e., error positions).
The results do not assume any structure on the underlying PMDS code other than the PMDS property and a sufficiently large interleaving order. Hence, it is a generic decoder for interleaved PMDS codes that runs in cubic time in the code length independent of the chosen PMDS code.
The idea is based on interleaving a PMDS code (which is again a PMDS code, just over a larger alphabet) and then applying the Metzner--Kapturowski algorithm \cite{metzner1990general}, which is an efficient decoder for generic interleaved codes of high interleaving order that corrects up to $d-2$ errors if the error matrix has full rank.

We derive a new condition on the set of error positions which allows the decoder to correct more than $d-2$ errors.
For PMDS codes, we derive bounds on the number of such sets corresponding to errors of a given weight, and show that for many PMDS codes, the relative number of error patterns of weight $n-k-1$ that can be corrected is close to $1$.

Among others, we show that any family of (high-order interleaved) PMDS codes that contains codes of length greater than any given integer and any code that fulfills the rate restriction
\begin{equation*}
 \left(\frac{R_\mathsf{local}}{e}\right)^{\varrho-2} > R_\mathsf{global} \ ,
\end{equation*}
contains codes that can correct a fraction of error patterns of weight $n-k-1$ arbitrarily close to $1$.

\subsection{A Generalization of Metzner and Kapturowski's Statement} \label{sec:metznerKapturowski}

Metzner and Kapturowski proved in \cite[Theorem~2]{metzner1990general} that a codeword $\C$ of an interleaved code with minimum distance $d$ can be uniquely recovered from a corrupted word $\C+\E$ if
\begin{enumerate}
\item The number of errors is $t \coloneqq |\Eset| \leq d-2$ and
\item the error matrix $\E$ has full rank $t$ (this implicitly assumes that the interleaving order is high, i.e., $\ell \geq t$).
\end{enumerate}
However, the first condition is very restrictive and not necessary for the decoder to work.
In fact, the proof of \cite[Theorem~2]{metzner1990general} only assumes an implication of the first property: The $t+1$ columns of the parity-check matrix indexed by the error positions $\Eset$ and any other integer in $\{1,\dots,n\} \setminus \Eset$ must be linearly independent.
We will give this property a name in the following definition.

\begin{definition}\label{def:t+1-independent}
Let $\H \in \Fq^{(n-k) \times n}$ be a parity-check matrix of a linear $[n,k,d]$ code $\Code$.
A set $\Eset \subseteq \{1,\dots,n\}$ with $t = |\Eset|$ is called \emph{$(t+1)$-independent (with respect to $\H$)} if
\begin{align*}
\rank \left(\H_{\Eset \cup \{i\}}\right) = t+1	\quad \forall i \in \{1,\dots,n\} \setminus \Eset,
\end{align*}
where $\H_{\Eset \cup \{i\}}$ is the matrix consisting of the columns of $\H$ indexed by $\Eset \cup \{i\}$.
\end{definition}

Note that for $t\leq d-2$, any set is $t+1$-independent, and for $t\geq n-k$, no set is $t+1$-independent.
Using this definition, we can formally state a generalization of the result in \cite[Theorem~2]{metzner1990general}.
The proof resembles Metzner and Kapturowski's argumentation, but we include it for completeness.

\begin{theorem}\label{thm:MK_generalization}
Let $\Code$ be a linear code with parity-check matrix $\H$ and $\E \in \Fq^{\ell \times n}$ be an error matrix whose $t$ non-zero columns are indexed by the set $\Eset \subset \{1,\dots,n\}$.
Furthermore, let $\S = \H \E^\top \in \Fq^{(n-k) \times \ell}$ be the syndrome matrix, $\P \in \Fq^{(n-k) \times (n-k)}$ be an invertible matrix such that $\P \S$ is in reduced row Echelon form, and $\zeta$ be the number of zero rows in $\P \S$ (i.e. $\zeta = n-k-\rank(\S)$). We denote the lowest $\zeta$ rows of $\P \H$ by $\Q \in \Fq^{\zeta \times n}$.
If $\E$ satisfies the conditions
\begin{enumerate}[label=\roman*)]
\item\label{itm:t+1-independence_condition} $\Eset$ is $(t+1)$-independent w.r.t.\ $\H$ (\textbf{${\boldsymbol{(t+1)}}$-independence condition}) and
\item $\rank(\E)=t$ (\textbf{full-rank condition}),
\end{enumerate}
then the zero columns of $\Q$ are exactly the error positions $\Eset$.
\end{theorem}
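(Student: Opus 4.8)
The plan is to read off two elementary facts from the construction of $\Q$ and then play the full-rank condition against the $(t+1)$-independence condition. First, since $\P\S = \P\H\E^\top$ and $\Q$ is by definition the block formed by the lowest $\zeta$ rows of $\P\H$, the product $\Q\E^\top$ is exactly the block of the lowest $\zeta$ rows of $\P\S$, which vanishes because these are the zero rows (a reduced row echelon form has its zero rows at the bottom and there are $\zeta$ of them). Hence $\Q\E^\top = \0$. Second, $\rank(\Q) = \zeta$: as $\P$ is invertible and the parity-check matrix $\H$ has full row rank $n-k$, the matrix $\P\H$ has full row rank, so any $\zeta$ of its rows are linearly independent.

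Next I would show that $\Eset$ is contained in the set of zero columns of $\Q$. From $\Q\E^\top = \0$, each row $\ve{q}$ of $\Q$ satisfies $\ve{q}\E^\top = \0$, equivalently $\E\ve{q}^\top = \0$, which is a vanishing linear combination of the columns of $\E$ with coefficients $q_1,\dots,q_n$; since the columns of $\E$ outside $\Eset$ are zero, only the columns indexed by $\Eset$ enter. The full-rank condition $\rank(\E) = t = |\Eset|$ forces these $t$ columns to be linearly independent (the remaining columns being zero, $\rank(\E)$ equals the rank of this $t$-column submatrix), so $q_j = 0$ for all $j \in \Eset$. As this holds for every row of $\Q$, all columns of $\Q$ indexed by $\Eset$ vanish.

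For the reverse inclusion I would argue by contradiction. Suppose some $i \in \{1,\dots,n\} \setminus \Eset$ also indexes a zero column of $\Q$, and put $T = \Eset \cup \{i\}$, so $|T| = t+1$. By the previous step together with the assumption, every row of $\Q$ vanishes on the coordinates in $T$; since also the row space of $\Q$ is contained in the row space of $\P\H$, which equals that of $\H$, we conclude that the row space of $\Q$ lies in $V_T := \{\,\ve{h} \text{ in the row space of } \H : h_j = 0 \ \forall j \in T\,\}$. Parametrising the row space of $\H$ bijectively by $\ve{v} \mapsto \ve{v}\H$ (a bijection since $\H$ has full row rank), $V_T$ corresponds to $\{\ve{v} : \ve{v}\,\H_T = \0\}$, where $\H_T$ is the $(n-k)\times(t+1)$ submatrix of columns indexed by $T$; hence $\dim V_T = (n-k) - \rank(\H_T)$. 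The $(t+1)$-independence of $\Eset$ gives $\rank(\H_T) = t+1$, so $\dim V_T = n-k-t-1$. On the other hand, $\dim(\text{row space of } \Q) = \rank(\Q) = \zeta = n-k-\rank(\S)$, and $\rank(\S) = \rank(\H\E^\top) \le \rank(\E) = t$, so $\zeta \ge n-k-t > n-k-t-1 = \dim V_T$, contradicting the containment. Therefore no such $i$ exists, and the zero columns of $\Q$ are exactly $\Eset$.

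The dimension counts are routine linear algebra, so I do not anticipate a genuine obstacle; the only real content is recognising that the bottom block $\Q$ of the transformed parity-check matrix is precisely the part of the dual code annihilated by $\E$, so that its column support is dictated by the rank behaviour of $\H$ on $\Eset$ versus $\Eset \cup \{i\}$. The one point requiring a word of care is the degenerate case $\zeta = 0$ (equivalently $t \ge n-k$), which is excluded by the $(t+1)$-independence hypothesis --- a set of size $\ge n-k$ cannot be $(t+1)$-independent with respect to a matrix with only $n-k$ rows --- so that $\zeta \ge 1$ and every block referred to above is non-empty.
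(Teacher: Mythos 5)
Your proposal is correct and follows essentially the same route as the paper's proof: first deduce $\Q\E^\top=\0$ and use the full-rank condition to conclude that the columns of $\Q$ indexed by $\Eset$ vanish, then rule out any additional zero column $i\notin\Eset$ by the dimension count $\zeta\geq n-k-t$ versus $n-k-\rank(\H_{\Eset\cup\{i\}})=n-k-t-1$ forced by $(t+1)$-independence. Your phrasing of that count via the row space of $\Q$ sitting inside the subspace $V_T$ is just the left-kernel argument of the paper in different clothing, and your explicit remark that $(t+1)$-independence forces $t\leq n-k-1$ (hence $\zeta\geq 1$) is a welcome, if minor, addition.
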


\begin{proof}
Since the rank of the error equals its number of non-zero columns, a vector is in the right kernel of $\E$ if and only if it is zero at the error positions $\Eset$. Due to $\0 = \Q \E^\top$ ($\Leftrightarrow$ $\E \Q^\top = \0$), the rows of $\Q$ must be in the right kernel of $\E$ and thus $\Q$ has zero columns at the error positions.

We prove that the other columns are non-zero. Assume the contrary, i.e., let $i \notin \Eset$ be a zero column of $\Q$.
Then, the matrix $(\P \H)_{\Eset \cup \{i\}}$, consisting of the $t+1$ columns of $\P\H$ indexed by $\Eset \cup \{i\}$, has at least $\zeta = n-k-\rank(\S) \geq n-k-t$ zero rows (recall that $\Q$ is, by definition, the submatrix of $\P\H$ consisting of the last $\zeta$ rows, and thus, as $\Q$ is zero in all columns indexed by $\Eset \cup \{i\}$, the last $\zeta$ rows of $(\P \H)_{\Eset \cup \{i\}}$ are zero).
Hence, the left kernel of $(\P \H)_{\Eset \cup \{i\}}$ has dimension at least $n-k-t$ (e.g., the identity vectors $\e_{t+1},\dots,\e_{n-k}$ are contained in the kernel).
On the other hand, as $\Eset$ is $(t+1)$-independent, we have $\rank((\P \H)_{\Eset \cup \{i\}}) = t+1$ and by the rank nullity theorem, the left kernel of $(\P \H)_{\Eset \cup \{i\}}$ has dimension $n-k-t-1$, which is a contradiction.
\end{proof}

The decoding strategy implied by Theorem~\ref{thm:MK_generalization} is outlined in Algorithm~\ref{alg:MK}.
Note that Line~\ref{line:erasure_correction} performs erasure correction after determining the error positions. This gives a unique result if $\Eset$ is $t+1$-independent since the columns of $\H$ indexed by $\Eset$ are linearly independent, hence the complementary columns of a generator matrix are linearly independent, i.e., $\{1,\dots,n\} \setminus \Eset$ contains an information set.
As the decoder is only based on linear-algebraic operations, it can be implemented in $O(\max\{\ell,n\}n^2)$ operations over the base field $\Fq$ of the code $\Code$.

Note that the second condition, $\rank\left(\E_\Eset\right)=t$, is fulfilled for most error matrices with $t$ non-zero columns if the interleaving order $\ell$ is large enough.
In the following subsections, we will see that Theorem~\ref{thm:MK_generalization} is indeed a significant improvement over \cite[Theorem~2]{metzner1990general} since there are PMDS with only a few error positions $\Eset$ that are not $t+1$-independent for $d-2< t < n-k$.

\printalgoIEEE{
\DontPrintSemicolon
\KwIn{Parity-check matrix $\H$, received word $\R = \C + \E$}
\KwOut{Transmitted codeword $\C$}

$\S \gets \H \R^{\top} \in \Fq^{(n-k)\times\ell}$.

Determine $\P\in \Fq^{(n-k)\times(n-k)}$ s.t. $\P \S$ is in reduced row echelon form.

$\zeta \gets$ number of zero rows in $\P \S$.

$\Q \in \Fq^{\zeta \times n} \gets$ last $\zeta$ rows of $\P \H$ 

Determine $\B\in\Fq^{(n-k-\zeta) \times n}$ s.t.\ the columns of $\B$ which correspond to the zero-columns of $\Q$ form an identity matrix and the remaining columns of $\B$ are zero.

Determine $\A\in\Fq^{\ell \times (n-k-\zeta)} $ s.t. $(\H\B^{\top})\A^{\top} = \S$. \label{line:erasure_correction}

$\C \gets \R - \A\B \in \Fq^{\ell \times n}$.

\Return{$\C$}
\caption{Metzner--Kapturowski Algorithm \cite{metzner1990general}}
\label{alg:MK}
}

\subsection{PMDS Codes With Many $(t+1)$-Independent Positions}

A set of erasures $\Eset$ can be corrected if and only if its complement $\bar{\Eset} \coloneqq \{1,\dots,n\} \setminus \Eset$ contains an \emph{information set}, i.e., indexes $k$ linearly independent columns of the generator matrix. 
The authors of \cite{tamo2016optimal} studied a family of optimal LRCs, which in some parameter range are able to correct $n-k$ erasures with probability approaching $1$ for large code lengths. This follows from showing that the number of information sets relative to the number of all sets with $k$ elements tends to $1$ for $n \to \infty$.

We will use a similar approach in the following to show that the relative number of $(t+1)$-independent positions with $t \leq n-k-1$ tends to $1$ for a family of PMDS codes.

\begin{definition}[\!\!\cite{tamo2016optimal}]
Let $\Rset_1,\dots,\Rset_{n/(r+\varrho-1)}$ be the repair sets of an $[n,k,r,\varrho]$ PMDS code. We define the set
\begin{equation*}
\Sset_{\mu} \coloneqq \{ \Scal \subseteq \{1,\dots,n\} \, : \, |\Scal|=\mu , \, |\Scal \cap \mathcal{R}_i| \leq r \, \forall i  \} \ .
\end{equation*}
\end{definition}

The following was shown in \cite{tamo2016optimal} for a special class of PMDS codes and holds in general for PMDS codes.
\begin{lemma}\label{lem:information_sets_Sk}
Let $\G$ be a generator matrix of an $[n,k,r,\varrho]$ PMDS code.
Let $\Scal \subseteq \{1,\dots,n\}$ be of cardinality $k$.
Then, the columns of $\G_\Scal$ (i.e., the columns of $\G$ indexed by $\Scal$) are linearly independent if and only if $\Scal \in \Sset_k$.
\end{lemma}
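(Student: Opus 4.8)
The plan is to prove both directions using the two structural features of a PMDS code: the locality constraint forces linear dependence inside each repair set once more than $r$ coordinates are selected, and the PMDS (maximal recoverability) property guarantees that no dependency occurs *other* than those forced by the locality constraints. First I would treat the easy direction: if $\Scal \notin \Sset_k$, then there is a repair set $\Rset_i$ with $|\Scal \cap \Rset_i| \geq r+1$. Since the local code $\Code|_{\Rset_i}$ has dimension at most $r$ (it is an $[r+\varrho-1,r]$ MDS code by Definition~\ref{def:partialMDS}), any $r+1$ of its coordinates are linearly dependent in the local generator matrix, hence the corresponding $r+1$ columns of $\G$ — which agree with the local generator matrix up to the embedding — are linearly dependent. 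Therefore $\G_\Scal$ does not have full column rank $k$, so the columns are not linearly independent. This gives the contrapositive of the ``only if'' direction.

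For the ``if'' direction, suppose $\Scal \in \Sset_k$, i.e.\ $|\Scal|=k$ and $|\Scal \cap \Rset_i| \leq r$ for every $i$. I want to show the $k$ columns of $\G_\Scal$ are linearly independent, equivalently that $\Scal$ is an information set, equivalently (passing to the dual) that the complement $\bar\Scal = \{1,\dots,n\}\setminus\Scal$ is a correctable erasure pattern of size $n-k$. The PMDS property says every erasure pattern that is not ruled out by the dimension bound or the locality constraints is correctable; concretely, an erasure set $E$ is information-theoretically correctable precisely when, after deleting $\varrho-1$ coordinates from each repair set (the ones the local codes can always recover), the remaining punctured code — an MDS code of the stated length and dimension by Definition~\ref{def:partialMDS} — still covers the leftover erasures. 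The condition $|\Scal \cap \Rset_i| \leq r$ is exactly the statement that in each repair set we are keeping at most $r$ coordinates, i.e.\ erasing at least $\varrho-1$, so the ``free'' local erasures account for the locality-induced part of $\bar\Scal$, and what remains is an erasure pattern in the punctured MDS code of size at most its redundancy. An MDS code corrects any erasure pattern up to its minimum distance minus one, which equals its redundancy, so $\bar\Scal$ is correctable and $\Scal$ is an information set.

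The main obstacle is making the counting in the ``if'' direction airtight: one has to verify that when $|\Scal|=k$ and each repair set contributes at most $r$ kept coordinates, the residual erasure pattern after removing $\varrho-1$ ``always-recoverable'' coordinates from each of the $\mu = n/(r+\varrho-1)$ repair sets has size at most $n - \mu(\varrho-1) - k$, which is precisely the redundancy (equivalently $d-1$) of the punctured code in Definition~\ref{def:partialMDS}. This is a bookkeeping identity — $|\bar\Scal| = n-k$, we remove $\mu(\varrho-1)$ coordinates, and the at-most-$r$ condition guarantees no repair set runs short of $\varrho-1$ removable coordinates — but one must be careful that the choice of which $\varrho-1$ coordinates to declare ``locally recovered'' in each repair set can be made consistently with $\bar\Scal$, i.e.\ those coordinates should lie in $\bar\Scal$ whenever possible; since $|\Scal \cap \Rset_i| \le r$ means $|\bar\Scal \cap \Rset_i| \ge \varrho-1$, there is always enough room. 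Once this is checked, invoking the PMDS definition and the MDS erasure-correction bound finishes the argument. I would also remark that the statement and its proof are insensitive to the particular PMDS construction, exactly because only the defining puncturing-distance property is used — this is the point emphasized in \cite{tamo2016optimal} for the analogous statement about optimal LRCs.
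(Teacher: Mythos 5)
Your proposal is correct and follows essentially the same route as the paper: the key step in both is that $|\Scal\cap\mathcal{R}_i|\le r$ leaves at least $\varrho-1$ positions of each repair set in the complement, so one can puncture there and invoke the PMDS definition to land in a dimension-$k$ MDS code, for which any $k$ generator-matrix columns are independent (your erasure-correctability phrasing of the complement is just the dual formulation of this). Your explicit treatment of the easy direction via the rank-$r$ local restriction is also the standard argument the paper leaves implicit by citing \cite{tamo2016optimal}.
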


\begin{proof}
The statement was proven for the codes in \cite{tamo2016optimal} within the proof of \cite[Lemma~7]{tamo2016optimal}. It holds in general for PMDS codes since any set $\Scal \in \Sset_k$ corresponds to $k$ columns of a generator matrix of the MDS code obtained by puncturing the PMDS code at $\varrho-1$ positions of each local repair set not in $\Scal$. This puncturing is possible since $\Scal$ intersects with each local repair set in at most $r$ positions, so there are at least $\varrho-1$ positions left in each repair set. It is well-known that any $k$ columns of an MDS code's generator matrix are linearly independent.
\end{proof}

The following lemma is necessary to relate the sets in $\Sset_{\mu}$ to the $(t+1)$-independent property. We will also use the statement, in Section~\ref{ssec:Pf_finite}, to derive bounds on the number of $(t+1)$-independent sets.

\begin{lemma}\label{lem:t+1-independent_criterion}
Let $\H \in \Fq^{n-k \times n}$ be a parity-check matrix of an $[n,k,r,\varrho]$ PMDS code.
Let $\Eset \subseteq \{1,\dots,n\}$ be a set of cardinality $t$ and $\bar{\Eset} \coloneqq \{1,\dots,n\} \setminus \Eset$ its complement. For each local repair set, indexed by $i \in \{1,\dots,\mu\}$, we define
\begin{align*}
\mathcal{T}_i^\Eset &\coloneqq \bar{\Eset} \cap \mathcal{R}_i, &&\text{(set of error-free positions in each local repair set)} \\
T_i^\Eset &\coloneqq \left|\mathcal{T}_i^\Eset\right|, &&\text{(number of error-free positions in each local repair set)}\\
O_i^\Eset &\coloneqq \max\left\{0, T_i^\Eset-r\right\}, &&\text{(\textbf{excess} $\coloneqq$ number of error-free positions in each local repair set exceeding $r$)}
\end{align*}
Then, $\Eset$ is $(t+1)$-independent if and only if
\begin{align*}
\sum_{i=1}^{\mu} O_i^\Eset \leq \begin{cases}
n-k-t-1, &\text{if $\exists \, j$ : $0 < T_j^\Eset \leq r$,} \\
n-k-t, &\text{else.}
\end{cases}
\end{align*}
(I.e., if the \textbf{overall excess} is small enough)
\end{lemma}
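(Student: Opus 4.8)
The plan is to characterize $(t+1)$-independence of $\Eset$ directly in terms of the rank of certain submatrices of $\H$, and then to translate that rank condition into the "overall excess" condition using the PMDS structure (Lemma~\ref{lem:information_sets_Sk}). First I would observe that $\Eset$ is $(t+1)$-independent if and only if, for every $i \notin \Eset$, the column $\H_i$ is not in the column span of $\H_\Eset$; equivalently, $\rank(\H_\Eset) = t$ \emph{and} no further column can be adjoined without increasing the rank. Since $t \leq n-k-1$, we always have enough rows, so the real content is: the columns indexed by $\Eset$ are linearly independent, and $\bar\Eset$ is "saturated" against $\Eset$ in the sense that $\rank(\H_{\Eset \cup \{i\}}) = t+1$ for all $i \in \bar\Eset$. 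I would rephrase this via the dual/generator picture: $\rank(\H_{\Eset\cup\{i\}}) = t+1$ for all $i \in \bar\Eset$ is equivalent to saying that, after puncturing at $\Eset$, no remaining coordinate is a "free" (zero) coordinate and the resulting shortened code has the property that $\bar\Eset \setminus \{i\}$ still contains an information set of the shortened code for every $i$ — i.e., the shortened-at-$\Eset$ code has minimum distance $\geq 2$ and, dually, $\Eset$ spans a $t$-dimensional space with each added column genuinely new.

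The key combinatorial step is to count, via Lemma~\ref{lem:information_sets_Sk}, how much "room" $\bar\Eset$ leaves in each local repair set. Shortening the PMDS code at the $t$ positions of $\Eset$ leaves, in repair set $\mathcal{R}_i$, exactly $T_i^\Eset = |\bar\Eset \cap \mathcal{R}_i|$ coordinates. By the PMDS property, within each repair set up to $r$ of these coordinates behave "MDS-freely" (any $\leq r$ of the $r+\varrho-1$ positions of an MDS local code are independent and can serve in an information set), but any coordinates beyond $r$ in a repair set are forced to be linearly dependent on the parity structure; this surplus is exactly the excess $O_i^\Eset = \max\{0, T_i^\Eset - r\}$. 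The total surplus $\sum_i O_i^\Eset$ measures how many of the $n-k$ parity-check rows get "used up" by the error-free positions before we even look at $\Eset$. Concretely, I would argue: $\Eset$ is $(t+1)$-independent iff the columns of $\H$ indexed by $\Eset$ together with \emph{all} of $\bar\Eset$ have a specific rank pattern, and computing $\rank(\H_{\bar\Eset})$ and $\rank(\H_{\Eset \cup \bar\Eset}) = \rank(\H) = n-k$ in terms of the $T_i^\Eset$'s (using that $\rank$ of the columns of a repair set, after accounting for the global parity, contributes $\min\{T_i^\Eset, r\}$ toward an information set and the rest toward excess) yields the inequality $\sum_i O_i^\Eset \leq n-k-t-1$ or $n-k-t$, with the distinction arising from whether some repair set is "partially but not fully" error-free (the case $0 < T_j^\Eset \leq r$), which forces one extra unit of parity to be consumed.

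Carrying this out, I would (i) show $\Eset$ is $(t+1)$-independent $\iff$ $\H_\Eset$ has rank $t$ and $\bar\Eset$ contains no coordinate $i$ with $\H_i \in \langle \H_\Eset\rangle$; (ii) using Lemma~\ref{lem:information_sets_Sk} applied to the MDS code obtained by puncturing the PMDS code at $\varrho-1$ positions in each repair set, compute the dimension of $\langle \H_\Eset\rangle$ relative to the subspace spanned by columns within each repair set, expressing everything through $T_i^\Eset$ and the excesses $O_i^\Eset$; (iii) derive the case split: if every nonempty repair-set-intersection with $\bar\Eset$ has more than $r$ elements (or is empty), one unit less excess is available, giving the $n-k-t$ bound, whereas a partially-filled repair set ($0 < T_j^\Eset \leq r$) forces the stricter $n-k-t-1$ bound; and (iv) check both directions of the equivalence. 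The main obstacle I anticipate is step (ii): correctly bookkeeping how the global parity-check rows are "distributed" between serving as an information set on the error-free positions versus detecting linear dependencies among the excess positions, and making the $+1$ in the case split rigorous rather than heuristic — this is where the precise statement of the PMDS property in Definition~\ref{def:partialMDS} (puncturing $\varrho-1$ positions per set yields an MDS code of the stated length and distance) must be invoked carefully, and where I expect to need a dimension-counting argument comparing $\rank(\H_{\bar\Eset})$ against $\sum_i \min\{T_i^\Eset, r\}$ plus a correction term.
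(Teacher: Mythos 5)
Your overall strategy coincides with the paper's: reduce $(t+1)$-independence to the statement that $\bar{\Eset}\setminus\{i\}$ contains an information set for every $i\in\bar{\Eset}$, identify information sets with the cardinality-$k$ sets meeting each repair set in at most $r$ positions via Lemma~\ref{lem:information_sets_Sk}, and then count excess. The problem is that the crux of the lemma --- the exact case split between $n-k-t-1$ and $n-k-t$ --- is precisely the step you leave open, and the heuristic you offer for it ("a partially filled repair set forces one extra unit of parity to be consumed") does not describe the actual mechanism. The distinction comes from a worst-case analysis over the choice of the removed coordinate $i\in\bar{\Eset}$: after $i$ is removed, one may delete $n-k-t-1$ further coordinates of one's own choosing, and the question is whether the remaining excess can be brought to zero. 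If some repair set $j$ has $0<T_j^\Eset\leq r$, the adversarial $i$ may lie in $\mathcal{T}_j^\Eset$, where its removal does not decrease the total excess at all (since $O_j^\Eset=0$ both before and after), so the entire excess $\sum_i O_i^\Eset$ must be absorbed by the $n-k-t-1$ discretionary deletions. If instead every nonempty $\mathcal{T}_j^\Eset$ satisfies $T_j^\Eset>r$, then removing $i$ necessarily lowers the excess by exactly one, which relaxes the requirement to $\sum_i O_i^\Eset\leq n-k-t$. Without this argument the "if and only if" with the stated thresholds cannot be established, so as written the proposal has a genuine gap at its decisive point.

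A secondary concern is your step (ii), which proposes to compute $\rank(\H_{\bar{\Eset}})$ as $\sum_i\min\{T_i^\Eset,r\}$ plus a correction term. This detour is unnecessary and would require a rank formula for \emph{arbitrary} column subsets of a PMDS code, i.e., a strictly stronger statement than Lemma~\ref{lem:information_sets_Sk}, which is only proved for sets of cardinality exactly $k$. The paper's chain of equivalences avoids this entirely: it only ever asks whether some set of $\Sset_k$ fits inside $\bar{\Eset}\setminus\{i\}$, so the cardinality-$k$ version of the lemma suffices, and no dimension bookkeeping of $\H_{\bar{\Eset}}$ is needed.
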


\begin{proof}
We have
\begin{align*}
&\Eset \text{ is $(t+1)$-independent} \\
\Leftrightarrow \quad &\forall \, i \in \bar{\Eset}: \, \rank(\H_{\Eset \cup \{i\}}) = t+1 \\
\Leftrightarrow \quad &\forall \, i \in \bar{\Eset} \, \exists \, J_i \subseteq \{1,\dots,n\}, \, |J_i| = n-k-t-1 \,: \rank (\H_{\Eset \cup \{i\} \cup J_i}) = n-k \\
\overset{(\ast)}{\Leftrightarrow} \quad &\forall \, i \in \bar{\Eset} \, \exists \, S_i \in \Sset_k \, : \, S_i \subseteq \overline{\Eset \cup \{i\}} \\
\Leftrightarrow \quad &\forall \, i \in \bar{\Eset} \, \exists \, S_i \in \Sset_k \, : \, S_i \subseteq \bar{\Eset} \setminus \{i\},
\end{align*}
where $(\ast)$ is due to the fact that $S \in \Sset_k$ if and only if $S$ is an information set, which again holds if and only if the columns of $\H$ indexed by the complementary positions $\bar{S}$ are linearly independent.

Hence, $\Eset$ is $(t+1)$-independent if and only if there is an information set in the complement of $\Eset$ even when we remove an arbitrary element from it.

By the above notation, the $\mathcal{T}_i^\Eset$ form a partition of $\bar{\Eset}$, i.e., $\mathcal{T}_i^\Eset \cap \mathcal{T}_j^\Eset = \emptyset$ for all $i \neq j$ and $\cup_{j=1}^{\mu} \mathcal{T}_j^\Eset = \bar{\Eset}$.
Thus, $\Eset$ is $(t+1)$-independent if and only if after removing any element $i$ from $\cup_{j=1}^{\mu} \mathcal{T}_j^\Eset$, there is still a subset $S_i \subseteq \cup_{i=1}^{\mu} \mathcal{T}_i^\Eset$ with $S_i \in \Sset_k$.

This again is equivalent to saying that if we remove an \emph{arbitrary} element $i$ from $\bigcup_{j=1}^{\mu} \mathcal{T}_j^\Eset$, then there are $n-k-t-1$ elements (\emph{which we can choose}), say $D_i \subseteq \bar{\Eset} \setminus \{i\}$, such that
\begin{align*}
\left(\bigcup_{i=1}^{\mu} \mathcal{T}_i^\Eset \right) \setminus \left(\{i\} \cup D_i\right) \in \Sset_k.
\end{align*}
This is equivalent to
\begin{align*}
\left| \mathcal{T}_i^\Eset \setminus \left[ \left(\{i\} \cup D_i\right) \cap \mathcal{R}_i \right] \right| \leq r \quad \forall \, i,
\end{align*}
i.e., that we can remove enough elements from each local repair set such that there are at most $r$ elements left in each local repair set. The cardinality $O_i^\Eset$ defined above corresponds to the excess of $\bar{\Eset}$ in each local repair set, i.e., the number of elements in $\bar{\Eset} \cap \mathcal{R}_i$ that exceeds $r$. In general, we thus need to keep the sum of the $O_i^\Eset$ small as follows:

If there is a $j$ such that $0<T_j^\Eset\leq r$, then $i$ could be from this local repair set $j \in \mathcal{T}_j^\Eset$, so the overall excess after removing $i$ is exactly $\sum_{i=1}^{\mu} O_i^\Eset$, which must not exceed $n-k-t-1$ (note that by removing an element from $\mathcal{T}_j^\Eset$, we get $T_j^\Eset \gets T_j^\Eset-1 <r$, so $O_j^\Eset = 0$ is unchanged).
Otherwise, all $T_j^\Eset$ are either $0$ (we cannot remove an element from this repair set) or $>r$ (in this case, $O_i^\Eset$ is reduced after removing an element). Hence, after removing $i$, the overall excess is $(\sum_{i=1}^{\mu} O_i^\Eset)-1$, so we get the condition $\sum_{i=1}^{\mu} O_i^\Eset \leq n-k-t$.
\end{proof}

From Lemma~\ref{lem:t+1-independent_criterion}, we get the following sufficient condition for a set to be $(t+1)$-independent.

\begin{lemma}\label{lem:t+1-independent_S_k+1}
Let $\H \in \Fq^{(n-k) \times n}$ be a parity-check matrix of an $[n,k,r,\varrho]$ PMDS code.
Further, let $\Eset$ be a set of cardinality $|\Eset| =: t \leq n-k-1$.
If there is a set $\Scal \in \Sset_{k+1}$ that is a subset of the complement of $\Eset$, i.e., $\Scal \subseteq \bar{\Eset} \coloneqq \{1,\dots,n\} \setminus \Eset$, then $\Eset$ is $(t+1)$-independent.
\end{lemma}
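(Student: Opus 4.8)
The plan is to reduce the statement to a combinatorial characterization of $(t+1)$-independence that is already contained in the proof of Lemma~\ref{lem:t+1-independent_criterion}. There it is shown that a set $\Eset$ with $t = |\Eset|$ is $(t+1)$-independent with respect to $\H$ if and only if for every $i \in \bar{\Eset}$ there exists a set $S_i \in \Sset_k$ with $S_i \subseteq \bar{\Eset}\setminus\{i\}$; this uses Lemma~\ref{lem:information_sets_Sk} together with the fact that the $(t+1)$ columns of $\H$ indexed by $\Eset\cup\{i\}$ are linearly independent exactly when the complement $\bar{\Eset}\setminus\{i\}$ contains an information set. So it suffices to exhibit such an $S_i$ for each $i \in \bar{\Eset}$, using the hypothesis that some $\Scal \in \Sset_{k+1}$ satisfies $\Scal \subseteq \bar{\Eset}$.

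The argument I would then give is a short case distinction resting on one routine observation: $\Sset_{\mu}$ is closed under passing to smaller subsets, i.e.\ any $k$-element subset $\Scal'$ of $\Scal \in \Sset_{k+1}$ again lies in $\Sset_k$, since $|\Scal' \cap \Rset_j| \le |\Scal \cap \Rset_j| \le r$ for all $j$. Fix $i \in \bar{\Eset}$. If $i \notin \Scal$, I choose any $k$-element subset $S_i \subseteq \Scal$; then $S_i \in \Sset_k$ and $S_i \subseteq \Scal \subseteq \bar{\Eset}$ with $i \notin S_i$, hence $S_i \subseteq \bar{\Eset}\setminus\{i\}$. If $i \in \Scal$, I take $S_i = \Scal \setminus \{i\}$, which has $k$ elements, lies in $\Sset_k$, and is contained in $\bar{\Eset}\setminus\{i\}$. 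Either way the required $S_i$ exists, so $\Eset$ is $(t+1)$-independent.

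I do not expect a genuine obstacle here; this lemma is essentially bookkeeping once the characterization from Lemma~\ref{lem:t+1-independent_criterion} is isolated. The only points deserving care are quoting the information-set characterization so that the reduction to $\Sset_k$ is legitimate, and noting that the hypothesis $\Scal \in \Sset_{k+1}$ with $\Scal \subseteq \bar{\Eset}$ automatically forces $|\bar{\Eset}| \ge k+1$, i.e.\ $t \le n-k-1$, so the stated bound on $t$ is consistent and the sets $\bar{\Eset}\setminus\{i\}$, which have size $n-t-1 \ge k$, are indeed large enough to contain a $k$-subset. No computation is required.
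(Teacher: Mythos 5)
Your proof is correct. It does, however, take a slightly different route than the paper's. The paper proves this lemma by invoking the \emph{final statement} of Lemma~\ref{lem:t+1-independent_criterion}: it writes $\bar{\Eset} = \Scal \cup \mathcal{J}$ with $\mathcal{J}$ disjoint from $\Scal$, bounds each excess by $O_i^\Eset \leq |\mathcal{J} \cap \Rset_i|$, and sums to get $\sum_i O_i^\Eset \leq |\mathcal{J}| = n-k-t-1$, which satisfies the excess criterion in both of its cases. You instead bypass the excess bookkeeping entirely and go back to the \emph{intermediate} equivalence established inside the proof of Lemma~\ref{lem:t+1-independent_criterion} (that $\Eset$ is $(t+1)$-independent iff for every $i \in \bar{\Eset}$ some $S_i \in \Sset_k$ sits inside $\bar{\Eset}\setminus\{i\}$), and then exhibit the witness $S_i$ directly via the observation that any $k$-subset of an element of $\Sset_{k+1}$ lies in $\Sset_k$, with the obvious case split on whether $i \in \Scal$. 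Your argument is arguably more transparent and shows more plainly \emph{why} a $\Sset_{k+1}$-set in the complement suffices (it is exactly an information set that survives the deletion of any single further coordinate); its only cost is that it cites a step buried in another proof rather than a packaged lemma statement, whereas the paper's version reuses the excess criterion as a black box — which is the form that criterion is also needed in for the counting arguments of Section~\ref{ssec:Pf_finite}. Both are complete and correct.
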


\begin{proof}
By assumption, there is a set $\mathcal{J} \subset \bar{\Eset}$ such that $\bar{\Eset} = \Scal \cup \mathcal{J}$ and $\Scal \cap \mathcal{J} = \emptyset$.
Using the notation of Lemma~\ref{lem:t+1-independent_criterion}, we have
\begin{align*}
T_i^{\Eset} = \left|\bar{\Eset} \cap \mathcal{R}_i\right| = \left|\Scal \cap \mathcal{R}_i\right| + \left|\mathcal{J} \cap \mathcal{R}_i\right| \overset{\Scal \in \Sset_{k+1}}{\leq} r + \left|\mathcal{J} \cap \mathcal{R}_i\right|,
\end{align*}
i.e., $O_i^\Eset \leq \left|\mathcal{J} \cap \mathcal{R}_i\right|$, and thus
\begin{align*}
\sum_{i=1}^{\mu} O_i^\Eset \leq \sum_{i=1}^{\mu} \left|\mathcal{J} \cap \mathcal{R}_i\right| = |\mathcal{J}| = \left|\bar{\Eset}\right|-|\Scal| = (n-t)-(k+1) = n-k-t-1.
\end{align*}
By Lemma~\ref{lem:t+1-independent_criterion}, the claim follows.
\end{proof}

\begin{remark}
For $t=n-k-1$ and $r<k$ (note that the case of $r=k$ gives an MDS code without locality), Lemma~\ref{lem:t+1-independent_S_k+1} becomes an if and only if statement, i.e., $\Eset$ is $(t+1)$-independent if and only if $\bar{\Eset} \in \Sset_{k+1}$.

The proof is as follows. Note first that $|\bar{\Eset}|=k+1$. If $\Eset$ is $(t+1)$-independent and $t = n-k-1$, then we must have $\sum_{i} O_i^\Eset = 0$ if there is a $j$ with $0 < T_j^\Eset \leq r$ and $\sum_{i} O_i^\Eset \leq 1$ otherwise.
From $\sum_{i} O_i^\Eset = 0$, it directly follows that $T_j^\Eset \leq r$ for all $j$, which implies $\bar{\Eset} \in \Sset_{k+1}$.
If $\sum_{i} O_i^\Eset = 1$, we have $T_j^\Eset = 0$ except for one $j$ which fulfills $T_j^\Eset=r+1$. Since $|\bar{\Eset}|=\sum_j T_j^\Eset = k+1$, we have $r=k$, a contradiction. The claim follows.
\end{remark}

Due to Lemma~\ref{lem:t+1-independent_S_k+1}, the relative amount of $(t+1)$-independent positions can be lower-bounded using the set $\Sset_{k+1}$ as follows.

\begin{lemma}
Let $t \leq n-k-1$. Then,
\begin{equation*}
\frac{|\{ \Eset \subseteq \{1,\dots,n\} \, : \, |\Eset| = t, \, \text{$(t+1)$-independent} \}|}{\binom{n}{t}} \geq \frac{|\Sset_{k+1}|}{\binom{n}{k+1}}.
\end{equation*}
\end{lemma}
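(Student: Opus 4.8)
The plan is to recognize the claim as a normalized-matching (LYM-type) inequality in the Boolean lattice $2^{\{1,\dots,n\}}$ and to prove it by a double-counting argument, using Lemma~\ref{lem:t+1-independent_S_k+1} to translate membership in $\Sset_{k+1}$ into $(t+1)$-independence. First I would use Lemma~\ref{lem:t+1-independent_S_k+1}: since $t \leq n-k-1$ we have $|\bar{\Eset}| = n-t \geq k+1$, so it makes sense to ask whether $\bar{\Eset}$ contains a member of $\Sset_{k+1}$, and whenever it does, $\Eset$ is $(t+1)$-independent. Hence the number of $(t+1)$-independent $t$-sets is at least $N \coloneqq |\{\mathcal{A} \subseteq \{1,\dots,n\} : |\mathcal{A}| = n-t, \ \exists\, \Scal \in \Sset_{k+1} \text{ with } \Scal \subseteq \mathcal{A}\}|$, where $\mathcal{A}$ plays the role of $\bar{\Eset}$.

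Next I would bound $N$ from below by counting, in two ways, the pairs $(\Scal,\mathcal{A})$ with $\Scal \in \Sset_{k+1}$, $|\mathcal{A}| = n-t$, and $\Scal \subseteq \mathcal{A}$. Counting by $\Scal$: every such $\Scal$ extends to an $(n-t)$-set in $\binom{n-k-1}{\,n-t-k-1\,}$ ways, so the number of pairs equals $|\Sset_{k+1}|\binom{n-k-1}{\,n-t-k-1\,}$ (here $n-t-k-1 \geq 0$ by hypothesis). Counting by $\mathcal{A}$: only the $N$ sets $\mathcal{A}$ that contain some member of $\Sset_{k+1}$ contribute, and each such $\mathcal{A}$ contains at most $\binom{n-t}{k+1}$ subsets lying in $\Sset_{k+1}$. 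This gives $|\Sset_{k+1}|\binom{n-k-1}{\,n-t-k-1\,} \leq N\binom{n-t}{k+1}$, hence $N \geq |\Sset_{k+1}|\,\binom{n-k-1}{\,n-t-k-1\,}\big/\binom{n-t}{k+1}$.

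Finally I would simplify the binomial ratio via the standard identity $\binom{n}{n-t}\binom{n-t}{k+1} = \binom{n}{k+1}\binom{n-k-1}{\,n-t-k-1\,}$ (both sides count a $(k+1)$-set contained in an $(n-t)$-set contained in $\{1,\dots,n\}$), which yields $\binom{n-k-1}{\,n-t-k-1\,}\big/\binom{n-t}{k+1} = \binom{n}{n-t}\big/\binom{n}{k+1} = \binom{n}{t}\big/\binom{n}{k+1}$. Chaining the inequalities gives $|\{\Eset : |\Eset|=t,\ (t+1)\text{-independent}\}| \geq N \geq |\Sset_{k+1}|\,\binom{n}{t}\big/\binom{n}{k+1}$, and dividing by $\binom{n}{t}$ is the assertion.

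The argument is essentially routine double counting; the only point needing care is ensuring all binomial coefficients are well-defined and that the "extend $\Scal$ to $\mathcal{A}$" step is valid, i.e.\ that $n-t \geq k+1$ — which is exactly the hypothesis $t \leq n-k-1$, so there is no real obstacle. The one conceptual step worth stating explicitly is the reduction via Lemma~\ref{lem:t+1-independent_S_k+1}, since it is what lets us replace the (hard to count directly) family of $(t+1)$-independent sets by the up-set generated by $\Sset_{k+1}$.
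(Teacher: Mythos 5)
Your proof is correct and is essentially the same as the paper's: the paper counts the incidences between $\Sset_{k+1}$ and the complements of $(t+1)$-independent $t$-sets via a bipartite graph with left-degree exactly $\binom{n-(k+1)}{(n-t)-(k+1)}$ and right-degree at most $\binom{n-t}{k+1}$, which is precisely your two-way count of pairs $(\Scal,\mathcal{A})$. Your handling of the reduction is if anything slightly more careful, since you use only the sufficient direction of Lemma~\ref{lem:t+1-independent_S_k+1} (which is all that is needed for the lower bound), whereas the paper's proof casually states the equivalence.
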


\begin{proof}
  A set is $(t+1)$-independent if and only if its complement contains an element of $\Sset_{k+1}$. Consider the bipartite graph with the elements of $\Sset_{k+1}$ on the left and all $(t+1)$-independent sets $\Eset$ of cardinality $t$ on the right. We draw an edge between any $\Scal \in \Sset_{k+1}$ and $\Eset$ if $\Scal \subseteq \{1,...,n\} \setminus \Eset$. Clearly the degree of any vertex on the left is exactly $\binom{n-(k+1)}{(n-t)-(k+1)}$, as we can add $|\{1,...,n\} \setminus \Eset| - |\Scal| = (n-t)-(k+1)$ elements to each set $\Scal$ to obtain a set $\Eset$ and there are $n-|\Scal| = n-(k+1)$ elements that are not already in $\Scal$. The degree of the vertices on the right is upper bounded by $\binom{n-t}{k+1}$, the number of subsets of $\{1,...,n\} \setminus \Eset$ of cardinality $k+1$. This yields
  \begin{align*}
    \binom{n-t}{k+1} |\{ \Eset \subseteq \{1,\dots,n\} \, : \, |\Eset| = t, \, \text{$(t+1)$-independent} \}| &\geq \binom{n-(k+1)}{(n-t)-(k+1)} |\Sset_{k+1}| &\\
    \frac{(n-t)!}{((n-t)-(k+1))! \cdot (k+1)!} |\{ \Eset \subseteq \{1,\dots,n\} \, : \, |\Eset| = t, \, \text{$(t+1)$-independent} \}| &\geq \frac{(n-(k+1))!}{(n-(k+1)-t)! \cdot t!}|\Sset_{k+1}| \\
    \frac{|\{ \Eset \subseteq \{1,\dots,n\} \, : \, |\Eset| = t, \, \text{$(t+1)$-independent} \}|}{\binom{n}{t}}&\geq\frac{|\Sset_{k+1}|}{\binom{n}{k+1}} \ ,
  \end{align*}
  where equality holds for $t=n-k-1$.
\end{proof}

The following theorem is a generalization of \cite[Theorem~3]{tamo2016optimal}, which lower-bounds $\nicefrac{\Sset_{k}}{\binom{n}{k}}$ for the special case $\varrho=2$.

\begin{lemma}\label{lem:Partial_MDS_bound_S_k+1}
Let $\Code$ be an $[n,k,r,\varrho]$ PMDS code.
Then,
\begin{align}
\frac{|\Sset_{k+1}|}{\binom{n}{k+1}} \geq 1-2^{\log_2(n) - (r+1) \log_2\left( \binom{r+\varrho-1}{\xi}^{-\frac{1}{r+1}} \frac{n}{k+1}\right)}, \label{eq:bound_on_S_k+1}
\end{align}
where $\xi \coloneqq \min\left\{\varrho-2,\left\lfloor\frac{r+\varrho-1}{2}\right\rfloor\right\}$.
\end{lemma}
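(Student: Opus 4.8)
The plan is to upper-bound the relative number of $(k+1)$-subsets $\Scal\subseteq\{1,\dots,n\}$ that are \emph{not} in $\Sset_{k+1}$ — that is, those for which some repair set $\Rset_i$ contains $|\Scal\cap\Rset_i|\geq r+1$ of the chosen coordinates — and then to massage the resulting estimate into the exponential form on the right-hand side of~\eqref{eq:bound_on_S_k+1}. Throughout I would assume $r\leq k$, the case $r=k$ being the structureless MDS case already excluded in the preceding remark.

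The first step is a union bound over the $\mu = n/(r+\varrho-1)$ repair sets, combined with an over-counting trick. If $\Scal$ of size $k+1$ is ``bad'', then $|\Scal\cap\Rset_i|\geq r+1$ for some $i$, hence $\Scal$ contains at least one $(r+1)$-element subset $A$ of $\Rset_i$; conversely, for a fixed $(r+1)$-set $A$ there are exactly $\binom{n-r-1}{k-r}$ size-$(k+1)$ supersets. Summing over all $i$ and all $A\subseteq\Rset_i$ with $|A|=r+1$ therefore counts every bad $\Scal$ at least once, so
\[
\binom{n}{k+1}-|\Sset_{k+1}| \;\leq\; \mu\,\binom{r+\varrho-1}{r+1}\binom{n-r-1}{k-r}.
\]
Next I would estimate $\binom{n-r-1}{k-r}\big/\binom{n}{k+1}$. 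Expanding factorials, this ratio equals $\prod_{j=0}^{r}\tfrac{k+1-j}{n-j}$, and since $k+1\leq n$ each factor is at most $\tfrac{k+1}{n}$, so the ratio is at most $\bigl(\tfrac{k+1}{n}\bigr)^{r+1}$. Using $\mu=\tfrac{n}{r+\varrho-1}$, this yields the intermediate bound
\[
\frac{|\Sset_{k+1}|}{\binom{n}{k+1}} \;\geq\; 1-\frac{n}{r+\varrho-1}\binom{r+\varrho-1}{r+1}\Bigl(\tfrac{k+1}{n}\Bigr)^{r+1}.
\]

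The final step is a purely cosmetic weakening to reach the stated form. Writing $r+1=(r+\varrho-1)-(\varrho-2)$ gives $\binom{r+\varrho-1}{r+1}=\binom{r+\varrho-1}{\varrho-2}$, and I would check, using unimodality of the binomial coefficients $\binom{r+\varrho-1}{j}$ and the definition $\xi=\min\{\varrho-2,\floor{\tfrac{r+\varrho-1}{2}}\}$, that $\tfrac{1}{r+\varrho-1}\binom{r+\varrho-1}{r+1}\leq\binom{r+\varrho-1}{\xi}$: when $\varrho-2\leq\floor{\tfrac{r+\varrho-1}{2}}$ one has $\xi=\varrho-2$ and the inequality is immediate, while otherwise $\binom{r+\varrho-1}{\xi}$ is the maximal entry of that row and already dominates $\binom{r+\varrho-1}{r+1}$. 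Substituting this and then rewriting $n\binom{r+\varrho-1}{\xi}\bigl(\tfrac{k+1}{n}\bigr)^{r+1}=2^{\log_2 n+\log_2\binom{r+\varrho-1}{\xi}-(r+1)\log_2\frac{n}{k+1}}$ by pulling the middle term inside the bracket produces exactly $2^{\log_2(n)-(r+1)\log_2\left(\binom{r+\varrho-1}{\xi}^{-1/(r+1)}\frac{n}{k+1}\right)}$, which is~\eqref{eq:bound_on_S_k+1}.

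The argument is elementary counting, so I do not expect a genuine obstacle; the two places needing care are verifying that the over-counting union bound really counts every bad set at least once, and getting the two-case comparison of binomial coefficients (the role of $\xi$) right. I would also note that this route actually proves the slightly sharper bound with the extra factor $\tfrac{1}{r+\varrho-1}$, which one could keep; the $\xi$-form is presumably preferred because it leads to the cleaner rate condition $(R_{\mathsf{local}}/e)^{\varrho-2}>R_{\mathsf{global}}$ used afterwards.
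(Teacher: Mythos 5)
Your proof is correct and follows essentially the same route as the paper's: a union bound over the $\mu$ repair sets, the estimate $\binom{n-r-1}{k-r}/\binom{n}{k+1}\le\left(\tfrac{k+1}{n}\right)^{r+1}$, and the identical algebraic rewriting into the exponential form. The only difference is that you count bad sets by over-counting $(r+1)$-subsets of each repair set rather than summing over the exact intersection sizes $j=r+1,\dots,r+\varrho-1$ as the paper does, which yields the marginally sharper intermediate constant $\mu\binom{r+\varrho-1}{r+1}$ in place of the paper's $\mu(\varrho-1)\binom{r+\varrho-1}{\xi}$ before both weaken to the same final bound.
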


\begin{proof}
We have
\begin{align}
&\binom{n}{k+1}-|\Sset_{k+1}| = |\overline{\Sset_{k+1}}| \notag \\
&= \left| \left\{ \Scal \subseteq \{1,\dots,n\} \, : \, |\Scal|=k+1 , \, \exists i \, : \, |\Scal \cap \mathcal{R}_i| > r \right\} \right| \notag \\
&\leq \sum_{i=1}^{\numgroups} \left| \left\{ \Scal \subseteq \{1,\dots,n\} \, : \, |\Scal|=k+1 , \, \, |\Scal \cap \mathcal{R}_i| > r \right\} \right| \notag \\
&\leq \mu \sum_{j=r+1}^{r+\varrho-1} \binom{r+\varrho-1}{j} \binom{n-(r+\varrho-1)}{k+1-j} \label{eq:P_t=n-k-1_union_bound_step} \\
&\overset{(\ast)}{\leq} \mu \left(\frac{k+1}{n}\right)^{r+1}\binom{n}{k+1} \sum_{j=r+1}^{r+\varrho-1} \underbrace{\binom{r+\varrho-1}{j}}_{\leq \binom{r+\varrho-1}{\xi}} \notag \\
&\leq \mu (\varrho-1) \left(\frac{k+1}{n}\right)^{r+1}\binom{n}{k+1} \binom{r+\varrho-1}{\xi},
\end{align}
where $(\ast)$ follows from (recall that $j \in \{r+1,\dots,r+\varrho-1 \}$) 
\begin{align*}
\binom{n-(r+\varrho-1)}{k+1-j} &\leq \binom{n-j}{k+1-j} \\
&= \binom{n}{k+1} \prod_{i=0}^{j-1} \frac{k+1-i}{n-i} \\
&\leq \binom{n}{k+1} \prod_{i=0}^{j-1} \frac{k+1}{n} \\
&\leq \left(\frac{k+1}{n}\right)^j \binom{n}{k+1}.
\end{align*}
Hence, we have
\begin{align*}
\frac{|\Sset_{k+1}|}{\binom{n}{k+1}} &\geq 1-\mu (\varrho-1) \left(\binom{r+\varrho-1}{\xi}^{\frac{1}{r+1}} \frac{k+1}{n}\right)^{r+1} \\
  &\geq 1-n \left(\binom{r+\varrho-1}{\xi}^{\frac{1}{r+1}} \frac{k+1}{n}\right)^{r+1} \\
&= 1-2^{\log_2(n) - (r+1) \log_2\left( \binom{r+\varrho-1}{\xi}^{-\frac{1}{r+1}} \frac{n}{k+1}\right)} \ ,
\end{align*}
which proves the claim.
\end{proof}

Note that for $\varrho \leq r+2$, we always have $\xi = \varrho-2$.

Using the bound in Lemma~\ref{lem:Partial_MDS_bound_S_k+1}, we are able to formulate conditions on the local and global distance of a family of PMDS codes for which the relative size of $\Sset_{k+1}$ compared to all cardinality-$(k+1)$ subsets of $\{1,\dots,n\}$ approaches $1$ for growing code length.

\begin{lemma}\label{lem:Partial_MDS_bound_S_k+1_asymptotical}
Let $\{\Code_n\}$ be a family of $[n,k_n,r_n,\varrho_n]$ PMDS LRC with
\begin{align}
\binom{r_n+\varrho_n-1}{\xi_n}^{-\frac{1}{r_n+1}}  &> C_1 \frac{k_n+1}{n}\label{eq:rate_condition}\\
    r_n+1 &\geq \frac{C_2 \log_2(n)}{\log_2(C_1)} \label{eq:number_of_local_groups_condition}
\end{align}
for some $C_1,C_2>1$, where $\xi_n \coloneqq \min\left\{\varrho_n-2,\left\lfloor\frac{r_n+\varrho_n-1}{2}\right\rfloor\right\}$.
Then,
\begin{align*}
\frac{|\Sset_{k_n+1}|}{\binom{n}{k_n+1}} \to 1 \quad (n \to \infty).
\end{align*}
\end{lemma}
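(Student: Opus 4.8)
The plan is to plug the hypotheses \eqref{eq:rate_condition} and \eqref{eq:number_of_local_groups_condition} directly into the bound of Lemma~\ref{lem:Partial_MDS_bound_S_k+1} and show that the exponent of $2$ in that bound tends to $-\infty$, which forces $1 - 2^{(\cdot)} \to 1$. Recall that Lemma~\ref{lem:Partial_MDS_bound_S_k+1} gives, for each $n$,
\begin{align*}
\frac{|\Sset_{k_n+1}|}{\binom{n}{k_n+1}} \geq 1-2^{\log_2(n) - (r_n+1) \log_2\left( \binom{r_n+\varrho_n-1}{\xi_n}^{-\frac{1}{r_n+1}} \frac{n}{k_n+1}\right)},
\end{align*}
so it suffices to prove that the exponent
\begin{align*}
E_n \coloneqq \log_2(n) - (r_n+1)\log_2\left(\binom{r_n+\varrho_n-1}{\xi_n}^{-\frac{1}{r_n+1}}\tfrac{n}{k_n+1}\right) \to -\infty.
\end{align*}

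First I would use hypothesis \eqref{eq:rate_condition}, which states $\binom{r_n+\varrho_n-1}{\xi_n}^{-1/(r_n+1)} > C_1 \frac{k_n+1}{n}$; multiplying by $\frac{n}{k_n+1}$ gives that the argument of the inner logarithm exceeds $C_1$. Since $C_1 > 1$, the logarithm is positive and we get $\log_2\!\big(\binom{r_n+\varrho_n-1}{\xi_n}^{-1/(r_n+1)}\frac{n}{k_n+1}\big) > \log_2(C_1) > 0$. Substituting this lower bound (note the minus sign, so the inequality direction is preserved) yields
\begin{align*}
E_n < \log_2(n) - (r_n+1)\log_2(C_1).
\end{align*}
Next I would invoke hypothesis \eqref{eq:number_of_local_groups_condition}, $r_n+1 \geq \frac{C_2 \log_2(n)}{\log_2(C_1)}$, so that $(r_n+1)\log_2(C_1) \geq C_2 \log_2(n)$, and hence
\begin{align*}
E_n < \log_2(n) - C_2 \log_2(n) = (1-C_2)\log_2(n).
\end{align*}
Since $C_2 > 1$ and $\log_2(n) \to \infty$, the right-hand side tends to $-\infty$, so $E_n \to -\infty$ and therefore $2^{E_n} \to 0$, which gives $\frac{|\Sset_{k_n+1}|}{\binom{n}{k_n+1}} \to 1$.

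I expect no serious obstacle here; the only points requiring a little care are (i) checking that the inner logarithm is indeed positive so that the substitution in the exponent goes in the right direction (this is exactly why the hypotheses ask for $C_1 > 1$), and (ii) making sure one keeps track of the sign flip coming from the leading minus sign in front of $(r_n+1)$ when chaining the two inequalities. A minor remark worth including is that the argument implicitly needs $n$ large enough that $\frac{|\Sset_{k_n+1}|}{\binom{n}{k_n+1}}$ is meaningful (e.g.\ $k_n+1 \le n$), which is automatic for any genuine code family; one could also note that the bound of Lemma~\ref{lem:Partial_MDS_bound_S_k+1} is only informative once the exponent is negative, which the above shows happens for all sufficiently large $n$.
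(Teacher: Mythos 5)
Your proof is correct and is essentially the paper's own argument: the paper simply asserts that "the exponent of $2$ in the bound \eqref{eq:bound_on_S_k+1} converges to minus infinity under the given conditions," and your chain $E_n < \log_2(n) - (r_n+1)\log_2(C_1) \leq (1-C_2)\log_2(n) \to -\infty$ is exactly that verification carried out explicitly, with the signs handled correctly.
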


\begin{proof}
It is easy to see that the exponent of $2$ in the bound \eqref{eq:bound_on_S_k+1} converges to minus infinity under the given conditions.
\end{proof}

\begin{remark}
Condition~\eqref{eq:number_of_local_groups_condition} puts a rate constraint on the code. However, if $r_n$ grows faster to infinity than $\varrho_n$, the following argument shows that we can choose arbitrary rates.
We study the asymptotic behavior of $\binom{r_n+\varrho_n-1}{\xi_n}^{-\frac{1}{r_n+1}}$ for $r_n \in \omega(\varrho_n)$ (i.e., $r_n$ grows asymptotically strictly faster than $\varrho_n$):
\begin{align*}
\binom{r_n+\varrho_n-1}{\xi_n}^{-\frac{1}{r_n+1}} &= \frac{1}{\binom{r_n+\varrho_n-1}{\varrho_n-2}^{\frac{1}{r_n+1}}} \\
&\geq \frac{1}{\left(\frac{e(r_n+\varrho_n-1)}{\varrho_n-2}\right)^{\frac{\varrho_n-2}{r_n+1}}} \\
&=\frac{1}{\underbrace{e^{\frac{\varrho_n-2}{r_n+1}}}_{\to \, 1} \cdot \underbrace{\left(1+\frac{r_n+1}{\varrho_n-2}\right)^{\frac{\varrho_n-2}{r+1}}}_{\to \, 1} } \to 1.
\end{align*}
Note that we use that if $r_n$ grows faster than $\varrho_n$, at some point we have $\xi_n = \varrho_n-2$.
\end{remark}

\begin{remark}
By a similar argument as above, Condition~\eqref{eq:rate_condition} in Lemma~\ref{lem:Partial_MDS_bound_S_k+1_asymptotical} can be replaced by
\begin{equation*}
 \left(\frac{R_\mathsf{local}}{e}\right)^{\varrho-2} \approx \left(\frac{r+1}{e(r+\varrho-1)}\right)^{\varrho-2} > C_1 \frac{k+1}{n} \approx C_1 R_\mathsf{global}
\end{equation*}
\end{remark}

Lemma~\ref{lem:Partial_MDS_bound_S_k+1_asymptotical} implies that under the given conditions, asymptotically almost any set of $k+1$ indices is in $\Sset_{k+1}$, and thus, almost any set of $t \leq n-k-1$ error positions is $(t+1)$-independent.

\begin{remark}
Since any cardinality-$k$ subset of $\Sset_{k+1}$ is an information set, Lemma~\ref{lem:Partial_MDS_bound_S_k+1_asymptotical} also implies that the codes satisfying Conditions \eqref{eq:rate_condition} and \eqref{eq:number_of_local_groups_condition}, and can correct almost all $n-k$ erasures asymptotically. This constitutes a generalization of the statement in \cite[Theorem~3]{tamo2016optimal}, which proves the special case $\varrho=2$.
\end{remark}

\subsection{Decoding PMDS Codes Beyond the Minimum Distance}

Using Lemma~\ref{lem:Partial_MDS_bound_S_k+1_asymptotical}, we can state the following explicit class of PMDS codes correcting almost any error up to weight $n-k-1$.

\begin{theorem}\label{thm:PMDS_family_correcting_n-k-1}
Let $\{\Code_n\}$ be a family of $[n,k_n,r_n,\varrho_n]$ PMDS LRC over a field $q_n$, where $q_n \to \infty$ for $n \to \infty$ and the parameters $r_n, \varrho_n$ fulfill
conditions \eqref{eq:rate_condition} and \eqref{eq:number_of_local_groups_condition} of Lemma~\ref{lem:Partial_MDS_bound_S_k+1_asymptotical} for fixed constants $C_1,C_2>1$.
Furthermore, let $\ell_n = n-k_n-1$. Then, the family $\{\Code_n'\}$ of $[n,k_n,r_n,\varrho_n]$ codes over the fields of size $q_n^{\ell_n}$ obtained by interpreting the $\ell_n$-interleaved codes of $\Code_n$ as linear codes over the large field $\mathbb{F}_{q_n^{\ell_n}}$, fulfill the following properties:
\begin{itemize}
\item the codes $\Code_n'$ are PMDS,
\item $\Code_n'$ corrects up to $n-k_n-1$ errors with probability approaching $1$ for $n \to \infty$ (assuming uniformly distributed errors of given weight), and
\item the decoding complexity is $O(n^3)$ operations over $\mathbb{F}_{q_n}$.
\end{itemize}
\end{theorem}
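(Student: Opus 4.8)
The plan is to verify the three bullet points separately; the first (PMDS-ness) and third (complexity) are short, and the probabilistic correction statement is the crux. For PMDS-ness, I would first note that passing from $\Code_n$ to its $\ell_n$-interleaved code and then identifying columns with elements of $\mathbb{F}_{q_n^{\ell_n}}$ leaves the generator matrix unchanged, only now read over the larger field. Hence for every coordinate set $\mathcal{R}$ the restriction $\Code_n'|_{\mathcal{R}}$ is the interleaving of $\Code_n|_{\mathcal{R}}$ and has the same minimum distance, and likewise for every puncturing. Therefore the local codes of $\Code_n'$ are again $[r_n+\varrho_n-1,r_n,\varrho_n]$ MDS codes and the punctured distances required in \defref{def:partialMDS} are still attained, so $\Code_n'$ is PMDS; in particular $d(\Code_n')=d(\Code_n)$, and the sets $\Sset_{k_n+1}$ and the notion of $(t+1)$-independence are identical for $\Code_n$ and $\Code_n'$. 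The complexity claim is then immediate: by the discussion following \theoref{thm:MK_generalization}, \algoref{alg:MK} runs in $O(\max\{\ell_n,n\}\,n^2)$ operations over the base field $\mathbb{F}_{q_n}$, and since $\ell_n=n-k_n-1<n$ this is $O(n^3)$.

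For the correction statement I would set $t:=n-k_n-1=\ell_n$ and feed a uniformly random burst error of weight $t$ --- a uniformly random support $\Eset$ of size $t$ together with uniformly random nonzero columns of $\E\in\mathbb{F}_{q_n}^{\ell_n\times n}$ on $\Eset$ --- into \algoref{alg:MK}. By \theoref{thm:MK_generalization}, the algorithm returns the transmitted codeword as soon as (i) $\Eset$ is $(t+1)$-independent with respect to the parity-check matrix and (ii) $\rank(\E)=t$, so it suffices to show that each of these events fails with vanishing probability. For (i), since $t=n-k_n-1$, the counting lemma immediately preceding \lemref{lem:Partial_MDS_bound_S_k+1} --- which lower-bounds the relative number of $(t+1)$-independent size-$t$ sets by $|\Sset_{k_n+1}|/\binom{n}{k_n+1}$ --- gives $\Pr[\Eset\text{ is }(t+1)\text{-independent}]\ge |\Sset_{k_n+1}|/\binom{n}{k_n+1}$, and \lemref{lem:Partial_MDS_bound_S_k+1_asymptotical} shows the right-hand side tends to $1$ under conditions \eqref{eq:rate_condition}--\eqref{eq:number_of_local_groups_condition}. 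For (ii), conditioned on any support $\Eset$ of size $t$, the nonzero block of $\E$ is a uniformly random $\ell_n\times t$ (hence square, as $\ell_n=t$) matrix over $\mathbb{F}_{q_n}$ with nonzero columns, and the standard estimate bounds the probability that it is rank-deficient by $O(1/q_n)\to 0$ since $q_n\to\infty$. A union bound over the complements of (i) and (ii) then yields success probability $\to 1$; the same argument, now with strictly more rows than columns, covers every weight $t'\le n-k_n-1$ as well.

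The genuinely substantive step is the estimate in (i): that for $t$ as large as $n-k_n-1$ a random error support is $(t+1)$-independent with high probability. I would reduce this, as above, to \lemref{lem:Partial_MDS_bound_S_k+1_asymptotical}, so the heavy lifting has already been done; the remaining care is (a) checking the $t=n-k_n-1$ endpoint, where the counting lemma is tight and the rank condition degenerates to the square-matrix case, and (b) pinning down the error distribution so that the support $\Eset$ and the nonzero values are independent enough for the union bound to apply cleanly. Finally I would remark that $t=n-k_n-1$ strictly exceeds $d(\Code_n')-2$ for any code with nontrivial locality, so correcting $n-k_n-1$ errors is a real gain over the original guarantee of \cite[Theorem~2]{metzner1990general}, which is precisely why \theoref{thm:MK_generalization} (rather than the plain Metzner--Kapturowski bound) is needed here.
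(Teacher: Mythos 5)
Your proposal is correct and follows essentially the same route as the paper's proof: the PMDS property is preserved because interleaving commutes with puncturing, the correction probability is handled by combining Theorem~\ref{thm:MK_generalization} with the counting lemma relating $(t+1)$-independent sets to $\Sset_{k+1}$, Lemma~\ref{lem:Partial_MDS_bound_S_k+1_asymptotical}, and the standard full-rank estimate for random $\ell_n\times t$ matrices, all tied together by a union bound, and the complexity follows from $\ell_n\le n$ in the Metzner--Kapturowski algorithm. The only cosmetic difference is your remark about needing independence between the support and the error values, which is unnecessary since the union bound requires none.
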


\begin{proof}
A (homogeneous) interleaved code is a linear code over the large field of the same parameters as the constituent code. Since puncturing the interleaved code corresponds to puncturing the constituent codes, the definition of PMDS codes directly implies that the interleaved code is also PMDS.

For showing the correction capability, first note that interpreting elements of $\mathbb{F}_{q_n^{\ell_n}}$ as vectors in $\mathbb{F}_{q_n}^{\ell_n}$ gives a bijective mapping between all Hamming errors of weight $t$ in $\mathbb{F}_{q_n^{\ell_n}}^n$ and all burst errors in $\mathbb{F}_{q_n}^{\ell_n \times n}$ of weight $t$.
We use Theorem~\ref{thm:MK_generalization}, Lemma~\ref{lem:Partial_MDS_bound_S_k+1_asymptotical}, and the fact that the fraction of $\ell_n \times t_n$ matrices over the field of size $q_n$ of rank $t_n$ is at least $1-\nicefrac{4}{q_n}$ for $q_n\geq 4$, cf.~\cite[Lemma~3.13]{Overbeck_Diss_InterleveadGab}.
The probability that a random error pattern of weight $t<n-k$ cannot be corrected is therefore
\begin{align*}
&\mathrm{P}(\E \text{ cannot be corrected}) \\
&=\mathrm{P}\left(\Scal \not\subseteq \overline{\supp(\E)} \, \forall \Scal \in \Sset_{k_n+1} \, \lor \,  \rank_{\Fq}(\E)<t_n \right) \\
&= \mathrm{P}\left(\Scal \not\subseteq \overline{\supp(\E)} \, \forall \Scal \in \Sset_{k_n+1}\right) + \underbrace{\mathrm{P} \left(\rank_{\Fq}(\E)<t_n \right)}_{< \, 4/q_n} \\
&< 1-\frac{|\Sset_{k_n+1}|}{\binom{n}{k_n+1}} + \frac{4}{q_n} \to 0 \quad (n \to \infty)
\end{align*}
since $q_n \to \infty$ for $n \to \infty$.

As for the complexity, we apply the Metzner--Kapturowski algorithm on the received matrix in $\mathbb{F}_{q_n}^{\ell_n \times n}$, which runs in complexity $O(n^3)$ over the small field $\mathbb{F}_{q_n}$ since $\ell_n \leq n$.
\end{proof}

\begin{remark}
If the assumption $q_n \to \infty$ for $n \to \infty$ in Theorem~\ref{thm:PMDS_family_correcting_n-k-1} is not fulfilled for a class of codes, this would disprove the MDS conjecture.
\end{remark}

In the following we give the overall field size $Q_n$ for some families of PMDS codes.

\begin{corollary}
Let the family $\{\Code_n\}$ be a subset of the code class in \cite{tamo2016optimal}. Then, the field size is given by
\begin{align*}
\log Q_n &\in O\left( n^{2+\log(\log(n))}\log(n) \right).
\end{align*}
and the overall decoding complexity in bit operations is
\begin{align*}
O^\sim\!\left( n^{4+\log(\log(n))} \right),
\end{align*}
where $O^\sim$ neglects logarithmic terms in $n$.
\end{corollary}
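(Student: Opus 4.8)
The plan is to bootstrap the estimate from the base construction of \cite{tamo2016optimal} through the interleaving step of Theorem~\ref{thm:PMDS_family_correcting_n-k-1}. First I would recall that a code in the family $\{\Code_n\}$ of \cite{tamo2016optimal} lives over a field $\mathbb{F}_{q_n}$ whose size, in the parameter regime forced by Conditions~\eqref{eq:rate_condition} and~\eqref{eq:number_of_local_groups_condition} of Lemma~\ref{lem:Partial_MDS_bound_S_k+1_asymptotical}, satisfies $\log q_n \in O(n^{1+\log(\log(n))}\log(n))$. Theorem~\ref{thm:PMDS_family_correcting_n-k-1} then interleaves $\Code_n$ with degree $\ell_n = n-k_n-1 \le n$ and reinterprets the result as a code $\Code_n'$ over $\mathbb{F}_{q_n^{\ell_n}}$, so its alphabet size is $Q_n = q_n^{\ell_n}$ and hence
\begin{align*}
\log Q_n = \ell_n \log q_n \le n \log q_n \in O\!\left(n^{2+\log(\log(n))}\log(n)\right),
\end{align*}
which is the first claim.

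For the complexity, the key point I would stress is that — as stated in Theorem~\ref{thm:PMDS_family_correcting_n-k-1} — the Metzner--Kapturowski decoder is run on the $\ell_n \times n$ received matrix over the \emph{small} field $\mathbb{F}_{q_n}$ (not over $\mathbb{F}_{Q_n}$), costing $O(n^3)$ field operations since $\ell_n \le n$. Each arithmetic operation in $\mathbb{F}_{q_n}$ takes $O^\sim(\log q_n)$ bit operations with quasi-linear field arithmetic, so multiplying gives
\begin{align*}
O^\sim\!\left(n^3 \log q_n\right) = O^\sim\!\left(n^3 \cdot n^{1+\log(\log(n))}\log(n)\right) = O^\sim\!\left(n^{4+\log(\log(n))}\right),
\end{align*}
where $O^\sim$ absorbs the remaining polylogarithmic $\log(n)$ factor but, of course, not the superpolynomial $n^{\log(\log(n))}$ term, which is retained.

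The only non-routine ingredient — and hence the main obstacle — is extracting the precise growth rate of $\log q_n$ for the construction in \cite{tamo2016optimal} under the scaling dictated by Lemma~\ref{lem:Partial_MDS_bound_S_k+1_asymptotical}: one has to track how the constraint $r_n+1 \ge C_2 \log_2(n)/\log_2(C_1)$, together with the rate condition~\eqref{eq:rate_condition}, forces the locality and number of repair sets to scale, and then substitute this into the field-size expression of \cite{tamo2016optimal}; this is where the $n^{\log(\log(n))}$ factor originates. Once that estimate is in hand, both bounds follow from the elementary bookkeeping above.
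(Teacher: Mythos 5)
Your proposal follows essentially the same route as the paper: bound $\log q_n$ for the base construction, multiply by the interleaving degree $\ell_n = n-k_n-1 \leq n$ to get $\log Q_n$, and combine the $O(n^3)$ operation count of the Metzner--Kapturowski decoder over the small field with a per-operation cost of $O^\sim(\log q_n)$ bit operations. The one step you defer as "the main obstacle" is exactly the step the paper does in a single line: it quotes the explicit field size of the construction in \cite{tamo2016optimal}, namely $q_n \in O\bigl(\bigl(\tfrac{nr_n}{r_n+\varrho_n-1}\bigr)^{(r_n-1)r_n^{r_n+\varrho_n-1}k_n+1}\bigr)$, and bounds this (with $r_n$ on the order of $\log n$ as dictated by condition~\eqref{eq:number_of_local_groups_condition}) by $O\bigl(n^{n[\log(n)]^{\log(n)}}\bigr) = O\bigl(e^{n^{1+\log(\log(n))}\log(n)}\bigr)$, which is precisely the estimate $\log q_n \in O\bigl(n^{1+\log(\log(n))}\log(n)\bigr)$ that you assumed; with that quotation supplied, your argument is complete and matches the paper's.
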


\begin{proof}
The field size of the codes in \cite{tamo2016optimal} is
\begin{align*}
q_n &\in O \left( \left( \frac{nr_n}{r_n+\varrho_n-1} \right)^{(r_n-1)r_n^{r_n+\varrho_n-1}k_n+1} \right) \\
&\subseteq O\left(n^{n [\log(n)]^{\log(n)}}\right)
=O\left( e^{n^{1+\log(\log(n))}\log(n)} \right).
\end{align*}
Hence, we must choose an $\ell_n$-interleaved code of the above mentioned code, where $\ell_n=n-k_n-1$. Then, the overall field size is $Q_n = q^{\ell_n}$, i.e.,
\begin{equation*}
\log Q \in O\left( n^{2+\log(\log(n))}\log(n) \right). 
\end{equation*}
Using the bases described in \cite{couveignes2009elliptic}, field operations in $\mathbb{F}_{q_n}$ cost $O^\sim(\log(q_n))$ bit operations. Hence, the overall complexity is given by
\begin{equation*}
O^\sim\!\left( n^3 \log(q_n) \right) \subseteq O^\sim\!\left( n^{4+\log(\log(n))} \right). \qedhere
\end{equation*}
\end{proof}

\begin{corollary}
Let the family $\{\Code_n\}$ be a subset of the code class in \cite{gabrys2018constructions}. Then, the field size is given by
\begin{align*}
\log Q_n &\in O \left( n^3 log(log(n)) \right).
\end{align*}
and the overall decoding complexity is $O^\sim(n^5)$ bit operations.
\end{corollary}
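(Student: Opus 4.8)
The plan is to mirror, almost verbatim, the proof of the preceding corollary for the class of \cite{tamo2016optimal}: quote the base field size of the codes in \cite{gabrys2018constructions}, multiply it by the interleaving order dictated by Theorem~\ref{thm:PMDS_family_correcting_n-k-1}, and then convert field operations into bit operations. First I would look up in \cite{gabrys2018constructions} the order of the field size $q_n$ of the PMDS codes in that family, as a function of $n$, $r_n$, $\varrho_n$, and $k_n$. Matching the claimed bound backwards (dividing $\log Q_n \in O(n^3\log(\log(n)))$ by the interleaving order $\ell_n = n-k_n-1 < n$) one expects $\log q_n \in O(n^2 \log(\log(n)))$, which I would confirm directly against the construction in \cite{gabrys2018constructions}. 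I would also note that, as in the \cite{tamo2016optimal} case, the subfamily of interest is one whose locality parameters satisfy conditions \eqref{eq:rate_condition} and \eqref{eq:number_of_local_groups_condition} of Lemma~\ref{lem:Partial_MDS_bound_S_k+1_asymptotical}, so that Theorem~\ref{thm:PMDS_family_correcting_n-k-1} applies and the resulting codes actually correct up to $n-k_n-1$ errors with probability tending to $1$.

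Given such a subfamily, Theorem~\ref{thm:PMDS_family_correcting_n-k-1} tells us to pass to the $\ell_n$-interleaved code with $\ell_n = n-k_n-1$, interpreted as a linear code $\code_n'$ over $\mathbb{F}_{q_n^{\ell_n}}$ with the same locality parameters; hence the overall field size is $Q_n = q_n^{\ell_n}$. Since $\ell_n = n-k_n-1 < n$, this immediately yields $\log Q_n = \ell_n \log q_n \le n\log q_n \in O(n^3 \log(\log(n)))$, which is the first assertion; the correction guarantee and PMDS property of $\code_n'$ are inherited verbatim from Theorem~\ref{thm:PMDS_family_correcting_n-k-1}.

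For the complexity, Theorem~\ref{thm:PMDS_family_correcting_n-k-1}, realized by Algorithm~\ref{alg:MK}, decodes $\code_n'$ in $O(n^3)$ operations over the base field $\mathbb{F}_{q_n}$ (using $\ell_n \le n$). With the field representations of \cite{couveignes2009elliptic}, a single $\mathbb{F}_{q_n}$-operation costs $O^\sim(\log q_n)$ bit operations, so the total bit complexity is $O^\sim(n^3 \log q_n) \subseteq O^\sim(n^3 \cdot n^2) = O^\sim(n^5)$, the $\log(\log(n))$ factor being absorbed by $O^\sim$.

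The main obstacle I anticipate is not mathematical but bibliographic/bookkeeping: extracting from \cite{gabrys2018constructions} the exact dependence of $q_n$ on all parameters (in particular on the dimension $k_n$), checking that it is at most of order $n^2\log(\log(n))$ in $\log q_n$, and verifying that an infinite subfamily genuinely meets \eqref{eq:rate_condition}--\eqref{eq:number_of_local_groups_condition}. Once the field-size bound is in hand, the remaining arguments are identical to those used for the \cite{tamo2016optimal} corollary and require no new ideas.
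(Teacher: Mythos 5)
Your proposal follows exactly the template the paper uses for the preceding corollary (the paper gives no separate proof for this one): take the base field size $q_n$ of the construction in \cite{gabrys2018constructions}, raise it to the interleaving order $\ell_n = n-k_n-1 < n$ to get $Q_n = q_n^{\ell_n}$, and convert the $O(n^3)$ field operations of the Metzner--Kapturowski decoder into bit operations via \cite{couveignes2009elliptic}. Given the input $\log q_n \in O(n^2\log(\log(n)))$, your derivation of both claims is correct and coincides with the paper's route. The only caveat is that this base-field-size bound---the sole non-routine content of the corollary---is inferred backwards from the claimed conclusion rather than extracted from the construction in \cite{gabrys2018constructions}; you flag this honestly, but it is precisely the step that still needs to be verified against the cited reference.
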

For the special case of $\varrho = 2$ the probability of successful decoding can be stated exactly.
\begin{corollary} \label{col:PsucExact}
  The probability of successfully decoding $t$ errors in an $[n,k,r,\varrho=2]$ PMDS code is given by
  \begin{equation*}
    P_{\mathsf{suc}} =  P\{\rank (E) = t\} -\frac{|\Sset_{k+1}|}{\binom{n}{k+1}} \ ,
  \end{equation*}
  where, as shown in \cite[Proof of Theorem~3]{tamo2016optimal},
  \begin{equation*}
    |\Sset_{k+1}| = \sum_{j=1}^{\left\lfloor \frac{k+1}{r+1} \right\rfloor} (-1)^{j-1}\binom{n/(r+1)}{j} \binom{n-j(r+1)}{k+1-j(r+1)}
  \end{equation*}
  and the fraction of full rank matrices $E \in \Fq^{\ell \times t}$ \cite{migler2004} is
  \begin{equation*}
    P\{\rank(E) = t\} = q^{-t \ell} \prod_{j=0}^{t-1} (q^\ell-q^j) \ .
  \end{equation*}

\end{corollary}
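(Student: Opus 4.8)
The decoder referred to in the corollary is the Metzner--Kapturowski algorithm of Algorithm~\ref{alg:MK}, run on the $\ell$-interleaved PMDS code interpreted as a linear code over $\mathbb{F}_{q^\ell}$, exactly as in the proof of Theorem~\ref{thm:PMDS_family_correcting_n-k-1}; write $\E\in\Fq^{\ell \times n}$ for the $\Fq$-representation of the error, $\Eset$ for its set of $t$ nonzero columns, and $\H$ for a parity-check matrix of the constituent PMDS code. The plan is to use Theorem~\ref{thm:MK_generalization} to reduce ``decoding succeeds'' to a rank condition on the error values together with a $(t+1)$-independence condition on the error positions, to evaluate the probability of each, and to combine them.

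First I would invoke Theorem~\ref{thm:MK_generalization}: the algorithm returns the transmitted codeword when $\rank(\E)=t$ and $\Eset$ is $(t+1)$-independent with respect to $\H$, and fails otherwise, so the failure event is $\{\rank(\E)<t\}\cup\{\Eset\text{ is not }(t+1)\text{-independent}\}$. The event $\{\rank(\E)<t\}$ is determined only by the $\ell\times t$ block $\E_{\Eset}$ of error values, while $\{\Eset\text{ not }(t+1)\text{-independent}\}$ is determined only by the position set $\Eset$; for a weight-$t$ error whose positions and values are drawn independently these two failure mechanisms are independent, and --- arguing precisely as in the proof of Theorem~\ref{thm:PMDS_family_correcting_n-k-1}, where the failure probability is likewise split into a ``positions'' part and a ``rank'' part --- the total failure probability is the sum of the two. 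Hence
\begin{equation*}
P_{\mathsf{suc}} \;=\; 1-\Pr\{\rank(\E)<t\}-\Pr\{\Eset\text{ not }(t+1)\text{-independent}\}\;=\;\Pr\{\rank(\E)=t\}-\Pr\{\Eset\text{ not }(t+1)\text{-independent}\}.
\end{equation*}

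For the first term, $\E_{\Eset}$ is a uniformly random element of $\Fq^{\ell\times t}$, so $\Pr\{\rank(\E)=t\}$ is the fraction of full-column-rank $\ell\times t$ matrices over $\Fq$, namely $q^{-t\ell}\prod_{j=0}^{t-1}(q^\ell-q^j)$ by \cite{migler2004}. For the second term I would specialise Lemma~\ref{lem:t+1-independent_criterion} to $\varrho=2$: every repair set then has cardinality $r+1$, so in that lemma's notation $T_i^{\Eset}=|\bar\Eset\cap\mathcal{R}_i|\le r+1$ and the excess $O_i^{\Eset}=\max\{0,T_i^{\Eset}-r\}$ is $0$ or $1$, being $1$ exactly when $\mathcal{R}_i\subseteq\bar\Eset$; feeding this into the criterion, and using the Remark following Lemma~\ref{lem:t+1-independent_S_k+1} to dispose of the boundary cases for the maximal weight $t=n-k-1$ with $r<k$, shows that $\Eset$ is $(t+1)$-independent if and only if $\bar\Eset$ meets every repair set in at most $r$ positions, i.e.\ $\bar\Eset\in\Sset_{k+1}$. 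Since complementation is a bijection between the $t$-subsets and the $(k+1)$-subsets of $\{1,\dots,n\}$, $\Pr\{\Eset\text{ not }(t+1)\text{-independent}\}$ equals the number of $(k+1)$-subsets that fully contain at least one repair set, divided by $\binom{n}{k+1}$; expanding this count by inclusion--exclusion over which of the $n/(r+1)$ pairwise disjoint repair sets are contained (a fixed family of $j$ of them occupying $j(r+1)$ coordinates) yields exactly the closed-form sum $\sum_{j=1}^{\lfloor(k+1)/(r+1)\rfloor}(-1)^{j-1}\binom{n/(r+1)}{j}\binom{n-j(r+1)}{k+1-j(r+1)}$ appearing in the statement, as in \cite[Proof of Theorem~3]{tamo2016optimal}. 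Substituting the two evaluated terms into the displayed identity gives the claimed formula.

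The routine but load-bearing step is the $\varrho=2$ analysis of $(t+1)$-independence: one must check that the criterion of Lemma~\ref{lem:t+1-independent_criterion} really collapses to the clean equivalence ``$\bar\Eset\in\Sset_{k+1}$'', which is where the case distinction in that lemma (a repair set with $0<T_j^{\Eset}\le r$ versus the all-or-nothing configuration) has to be resolved carefully --- this is exactly why the statement is restricted to $\varrho=2$ and, in effect, to the maximal weight $t=n-k-1$ --- and that the inclusion--exclusion bookkeeping over the repair sets reproduces the displayed sum. The decomposition of the failure probability into its rank and its position contributions should be written out in the same manner as in the proof of Theorem~\ref{thm:PMDS_family_correcting_n-k-1}, since that is the step which uses that positions and values are drawn independently.
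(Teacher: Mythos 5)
Your argument is correct and follows essentially the same route the paper takes for this corollary: Theorem~\ref{thm:MK_generalization} reduces success to the full-rank and $(t+1)$-independence conditions, the $\varrho=2$, $t=n-k-1$ specialization (via Lemma~\ref{lem:t+1-independent_criterion} and the remark after Lemma~\ref{lem:t+1-independent_S_k+1}) turns the position condition into $\bar{\Eset}\in\Sset_{k+1}$, the inclusion--exclusion count over the disjoint repair sets yields the displayed sum, the full-rank fraction comes from \cite{migler2004}, and these are combined with the same positions-plus-rank split of the failure probability used in the proof of Theorem~\ref{thm:PMDS_family_correcting_n-k-1}. Nothing essential is missing, and your observation that the clean equivalence (and hence the exactness of the formula) is effectively tied to the maximal weight $t=n-k-1$ matches how the paper itself applies the corollary in its example.
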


The following example shows that the success probability is reasonably close to $1$ even for small parameters. In Section~\ref{ssec:Pf_finite} we will explore the probability of failure in more detail.

\begin{example}
  Consider the PMDS code as defined in \cite{tamo2016optimal} with parameters $[n=15,k=8,r=4,\varrho=2]$ over the field $\Fq$ with $q=16^{k+1} = 2^{36}$. The code is of distance $d=7$, fulfilling the bound (\ref{eq:boundDistanceLRC}) on the distance of an LRC. The unique decoding radius of this code is $t=\left\lfloor \frac{d-1}{2} \right\rfloor = 3$. Given a full rank error matrix, the decoder introduced in \cite{metzner1990general} guarantees decoding of up to $t=d-2=5$ errors. In the case of $t=n-k-1=6$ errors, the error matrix is of full rank with probability $>1-10^{-10}$ and Corollary~\ref{col:PsucExact} gives the probability of success as $P_{\mathsf{dec}} \approx \frac{125}{143}$.
\end{example}

\subsection{Failure Probability for Finite Parameters}\label{ssec:Pf_finite}

In the last subsection, we have seen that there are families of PMDS codes for which the probability of successful decoding approaches $1$ for the code length going to infinity. Our central tool was to show that the relative number of $(t+1)$-independent sets among all cardinality-$t$ subsets of $\{1,\dots,n\}$ goes to $1$ for these code families. For this, we derived a lower bound on the relative number of these sets in Lemma~\ref{lem:Partial_MDS_bound_S_k+1}, which approaches $1$ for $n \to \infty$.

However, for finite $n$, the bound is not necessarily tight and usually orders of magnitude away from the actual probability for $d-2 < t < n-k-1$ (recall that the Metzner--Kapturowski algorithm always works for $t \leq d-2$).
In the following, we show how to efficiently compute the exact probability of drawing a $(t+1)$-independent set for finite parameters.
The algorithm uses dynamic programming to efficiently compute the expression in the following theorem recursively.
The resulting complexity in bit operations is cubic in the code length and linear in the number of errors.

\begin{theorem}\label{thm:exact_failure_prob}
Let $n$, $k$, $t$, $r$, $\varrho$, $n_l = r+\varrho-1$, $\mu = n/(r+\varrho-1)$ (number of local repair sets) be given.
For $\eta\geq 1$, $\tau \geq 0$, $\sigma \geq 0$, $\beta \in \{0,1\}$ define
\begin{align}
\mathcal{W}(\eta,\tau,\sigma,\beta)\! &\coloneqq \! \left\{\! \omegaVec\! \in \! \{0,\dots,n_l\}^\eta : \sum_{i} \omega_i = \tau,  \,  \sigma + \sum_{i=1}^{\eta} \max\{0,\omega_i-r\} \!> \!\begin{cases}
n-k-t-1, &\text{if $\beta=1$ or $\exists \, i$ : $0 < \omega_i \leq r$,} \\
n-k-t, &\text{else.}
\end{cases}
\right\}, \notag\\
\mathsf{W}(\eta,\tau,\sigma,\beta)\! &\coloneqq \! \sum_{\omegaVec \in \mathcal{W}(\eta,\tau,\sigma,\beta)} \prod_{i=1}^{\eta} \binom{n_l}{\omega_i}. \label{eq:W_value}
\end{align}
For an error set $\Eset$ that is chosen uniformly at random from the cardinality-$t$ subsets of $\{1,\dots,n\}$, the probability that it is not $(t+1)$-independent is
\begin{align}
\Pr\left( \Eset \text{ is \textbf{not} $(t+1)$-independent} \right) = \frac{\mathsf{W}(\mu,n-t,0,0)}{\binom{n}{t}}. \label{eq:exact_failure_probability}
\end{align}
\end{theorem}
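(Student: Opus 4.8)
The plan is to directly count the cardinality-$t$ subsets $\Eset \subseteq \{1,\dots,n\}$ that fail the $(t+1)$-independence criterion of Lemma~\ref{lem:t+1-independent_criterion}, and express this count as $\mathsf{W}(\mu,n-t,0,0)$. The key observation is that the criterion of Lemma~\ref{lem:t+1-independent_criterion} depends on $\Eset$ only through the \emph{profile} $\omegaVec = (\omega_1,\dots,\omega_\mu)$, where $\omega_i = T_i^\Eset = |\bar\Eset \cap \mathcal{R}_i|$ is the number of error-free positions in the $i$-th repair set. Indeed, $\sum_i \omega_i = |\bar\Eset| = n-t$, each $\omega_i \in \{0,\dots,n_l\}$, and the failure condition ``$\Eset$ is \emph{not} $(t+1)$-independent'' reads, by Lemma~\ref{lem:t+1-independent_criterion}, precisely as
\[
  \sum_{i=1}^{\mu} \max\{0,\omega_i - r\} > \begin{cases} n-k-t-1, & \text{if } \exists\, j : 0 < \omega_j \leq r,\\ n-k-t, & \text{else.}\end{cases}
\]
This is exactly the membership condition defining $\mathcal{W}(\mu,n-t,0,0)$ (with $\sigma=0$, $\beta=0$).

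Next I would count, for a fixed profile $\omegaVec$, the number of sets $\Eset$ realizing it. Since the repair sets $\mathcal{R}_1,\dots,\mathcal{R}_\mu$ partition $\{1,\dots,n\}$ and each has size $n_l$, choosing $\bar\Eset$ with $|\bar\Eset \cap \mathcal{R}_i| = \omega_i$ amounts to independently choosing $\omega_i$ of the $n_l$ positions in each $\mathcal{R}_i$; hence there are exactly $\prod_{i=1}^{\mu}\binom{n_l}{\omega_i}$ such sets, and distinct profiles give disjoint families of sets. Summing over all failing profiles gives
\[
  \bigl|\{\Eset : |\Eset|=t,\ \Eset \text{ not } (t+1)\text{-independent}\}\bigr| = \sum_{\omegaVec \in \mathcal{W}(\mu,n-t,0,0)} \prod_{i=1}^{\mu}\binom{n_l}{\omega_i} = \mathsf{W}(\mu,n-t,0,0).
\]
Dividing by the total number $\binom{n}{t}$ of cardinality-$t$ subsets yields \eqref{eq:exact_failure_probability}, since $\Eset$ is drawn uniformly. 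The roles of the auxiliary arguments $\eta$, $\sigma$, $\beta$ in the definition of $\mathcal{W}$ are not needed for the theorem statement itself but are designed so that $\mathsf{W}(\eta,\tau,\sigma,\beta)$ satisfies a recursion in $\eta$ (peeling off one repair set at a time, with $\sigma$ accumulating the excess seen so far and $\beta$ recording whether a ``small positive'' block $0<\omega_j\le r$ has already appeared); this recursion is what the announced dynamic-programming algorithm exploits, and I would state it as a lemma immediately afterward, though it is not logically required for \eqref{eq:exact_failure_probability}.

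The only real content is the translation of Lemma~\ref{lem:t+1-independent_criterion} into the profile language, and there the one subtlety to handle carefully is the case distinction in the bound: the ``$\exists\, j : 0 < T_j^\Eset \le r$'' clause must be transcribed verbatim as ``$\exists\, i : 0 < \omega_i \le r$,'' and one must check that this property, like the excess sum, is a function of the profile alone (it is, trivially). I expect the main obstacle to be purely expository: making sure the indexing conventions ($\omega_i$ counting \emph{error-free} positions versus error positions, $\sum \omega_i = n-t$ versus $t$) match between Lemma~\ref{lem:t+1-independent_criterion}, Definition~\ref{def:t+1-independent}, and the statement of Theorem~\ref{thm:exact_failure_prob}, and confirming the $\beta=1$ branch of $\mathcal{W}$ is never triggered at the top level $\mathcal{W}(\mu,n-t,0,0)$ (it is a bookkeeping device for the recursion only). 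Everything else is the elementary partition-counting argument above.
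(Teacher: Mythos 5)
Your proposal is correct and follows essentially the same route as the paper's proof: both reduce the failure event via Lemma~\ref{lem:t+1-independent_criterion} to a condition on the profile $\omegaVec$ of error-free positions per repair set, count the sets realizing each profile as $\prod_{i}\binom{n_l}{\omega_i}$, and divide by $\binom{n}{t}$. Your additional remarks on the bookkeeping roles of $\sigma$ and $\beta$ are accurate but, as you note, not needed for the statement itself.
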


\begin{proof}
The elements $\omegaVec \in \mathcal{W}(\mu,n-t,0,0)$ correspond exactly to all distributions of the error-free positions to the local repair sets such that the set $\Eset$ is not $(t+1)$-independent (cf.~Lemma~\ref{lem:t+1-independent_criterion}). Thus, we just need to count how many errors with this distribution there are, which is given by the product $\prod_{i=1}^{\mu} \binom{r+\varrho-1}{\omega_i}$. The statement follows directly by the fact that $\Eset$ is uniformly distributed.
\end{proof}

It is not immediately obvious why Theorem~\ref{thm:exact_failure_prob} is formulated in such a complicated way, i.e., why we define $\mathsf{W}(\eta,\tau,\sigma,\beta)$ with four parameters and then use it only for $[\eta,\tau,\sigma,\beta] = [\mu,n-t,0,0]$.
The reason is that we can efficiently recursively compute $\mathsf{W}(\mu,n-t,0,0)$ using the general definition of $\mathsf{W}(\eta,\tau,\sigma,\beta)$ as follows.

\begin{lemma}\label{lem:W_recursion}
Let $n$, $k$, $t$, $r$, $\varrho$, $n_l = r+\varrho-1$, $\mu = n/(r+\varrho-1)$ be given.
Furthermore, let $\mathsf{W}(\eta,\tau,\sigma,\beta)$ be defined as in Theorem~\ref{thm:exact_failure_prob} for $\eta\geq 1$, $\tau \geq 0$, $\sigma \geq 0$, $\beta \in \{0,1\}$.
For $\eta >1$, we have the recursion
\begin{align}
\mathsf{W}(\eta,\tau,\sigma,\beta) &= \sum_{\omega_1=0}^{\min\{\tau,n_l\}} \binom{n_l}{\omega_1} \mathsf{W}\big(\eta-1,\tau-w_1,\sigma+\max\{0,\omega_1-r\},\beta'(\beta,\omega_1)\big), \label{eq:W_eta>1_recursion}
\end{align}
and for $\eta=1$, we have
\begin{align}
\mathsf{W}(1,\tau,\sigma,\beta) = \begin{cases}
\tbinom{n_l}{\tau}, &\text{if } \tau\leq n_l  \quad \land  \quad \sigma+\max\{0,\tau-r\} > n-k-t-\beta'(\beta,\tau), \\
0, &\text{else.}
\end{cases}\label{eq:W_eta=1_recursion}
\end{align}
where
\begin{align*}
\beta'(\beta,\omega_1) &= \begin{cases}
1, &\text{if } \beta=1 \, \lor 0<\omega_1 \leq r, \\
0, &\text{else.}
\end{cases}
\end{align*}
\end{lemma}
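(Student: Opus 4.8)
The plan is to prove both cases by ``peeling off'' the first coordinate of a vector $\omegaVec \in \mathcal{W}(\eta,\tau,\sigma,\beta)$ and recognising the constraints on the remaining coordinates as the defining conditions of $\mathcal{W}(\eta-1,\cdot,\cdot,\cdot)$ with appropriately updated parameters. Throughout, $n,k,t,r,\varrho$ and hence $n_l,\mu$ are fixed, and the claim is asserted for all admissible $\eta,\tau,\sigma,\beta$ simultaneously.

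First, for the case $\eta > 1$, I would fix $\omega_1 \in \{0,\dots,\min\{\tau,n_l\}\}$ and show that $(\omega_1,\omega_2,\dots,\omega_\eta) \in \mathcal{W}(\eta,\tau,\sigma,\beta)$ if and only if $(\omega_2,\dots,\omega_\eta) \in \mathcal{W}\big(\eta-1,\,\tau-\omega_1,\,\sigma+\max\{0,\omega_1-r\},\,\beta'(\beta,\omega_1)\big)$. The sum condition $\sum_{i=2}^{\eta}\omega_i = \tau-\omega_1$ is immediate. For the excess condition, I would use that $\sigma + \sum_{i=1}^{\eta}\max\{0,\omega_i-r\}$ equals $\big(\sigma+\max\{0,\omega_1-r\}\big) + \sum_{i=2}^{\eta}\max\{0,\omega_i-r\}$, which is precisely the left-hand side of the inequality defining $\mathcal{W}\big(\eta-1,\,\tau-\omega_1,\,\sigma+\max\{0,\omega_1-r\},\,\cdot\big)$. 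It then remains to check that the right-hand side threshold matches, i.e.\ that the case distinction is preserved. The ``strict'' threshold $n-k-t-1$ is used exactly when $\beta=1$ or $\exists\, i\in\{1,\dots,\eta\}$ with $0<\omega_i\leq r$; splitting the existential at $i=1$ rewrites this as ``$\beta=1$ or $0<\omega_1\leq r$ or $\exists\, i\in\{2,\dots,\eta\}$ with $0<\omega_i\leq r$'', and the first two disjuncts are by definition $\beta'(\beta,\omega_1)=1$. Hence the triggering condition becomes ``$\beta'(\beta,\omega_1)=1$ or $\exists\, i\in\{2,\dots,\eta\}$ with $0<\omega_i\leq r$'', which is exactly the triggering condition for $\mathcal{W}\big(\eta-1,\cdot,\cdot,\beta'(\beta,\omega_1)\big)$. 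Given this equivalence, I would factor $\prod_{i=1}^{\eta}\binom{n_l}{\omega_i} = \binom{n_l}{\omega_1}\prod_{i=2}^{\eta}\binom{n_l}{\omega_i}$, sum over $(\omega_2,\dots,\omega_\eta)$ first and over $\omega_1$ second, and read off \eqref{eq:W_eta>1_recursion}; the cap $\min\{\tau,n_l\}$ on $\omega_1$ is justified by $\binom{n_l}{\omega_1}=0$ for $\omega_1>n_l$ and by the convention $\mathsf{W}(\eta,\tau,\cdot,\cdot)=0$ for $\tau<0$.

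For the base case $\eta=1$, the set $\mathcal{W}(1,\tau,\sigma,\beta)$ is empty unless $0\leq\tau\leq n_l$, in which case its only candidate element is the length-one vector $(\tau)$, and this vector lies in $\mathcal{W}(1,\tau,\sigma,\beta)$ exactly when $\sigma+\max\{0,\tau-r\} > n-k-t-\beta'(\beta,\tau)$ (again folding the case distinction into $\beta'$). Substituting into \eqref{eq:W_value} yields $\mathsf{W}(1,\tau,\sigma,\beta)=\binom{n_l}{\tau}$ in that situation and $0$ otherwise, which is \eqref{eq:W_eta=1_recursion}.

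I do not expect a genuine obstacle here; the whole argument is bookkeeping. The one point that requires care — and the one place where a careless argument could silently double-count or drop a case — is the transformation of the two ``memory'' parameters when the first coordinate is removed: verifying that the accumulated excess $\sigma$ is correctly updated by $\max\{0,\omega_1-r\}$ and, above all, that the existential quantifier governing the flag $\beta$ splits cleanly so that $\beta'(\beta,\omega_1)$ captures exactly the contribution of the first repair set and nothing more.
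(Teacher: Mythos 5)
Your proposal is correct and follows essentially the same route as the paper's proof: peel off the first coordinate, observe that the sum and excess conditions transform as stated, and verify that the case distinction in the threshold is preserved because the existential quantifier splits at $i=1$ into the disjuncts absorbed by $\beta'(\beta,\omega_1)$. The base case is handled identically, so there is nothing to add.
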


\begin{proof}
We start with the recursion \eqref{eq:W_eta>1_recursion}. Note that by definition of $\mathcal{W}$, we have
\begin{align*}
\mathcal{W}(\eta,\tau,\sigma,\beta) &=
\bigcup_{\omega_1=0}^{\min\{\tau,n_l\}} \Big\{[\omega_1,\omega_2,\dots,\omega_\eta] \, : \, [\omega_2,\dots,\omega_\eta] \in \mathcal{W}(\eta-1,\tau-w_1,\sigma+\max\{0,\omega_1-r\},\beta'(\beta,\omega_1))\Big\}.
\end{align*}
This is clear since for a fixed $\omega_1$, we have $\sum_i \omega_i = \tau$ if and only if $\sum_{i=2}^{\eta} \omega_i = \tau-\omega_1$ and we have
\begin{equation*}
\sigma + \sum_{i=1}^{\eta} \max\{0,\omega_i-r\} \!> \!\begin{cases}
n-k-t-1, &\text{if $\beta=1$ or $\exists \, i$ : $0 < \omega_i \leq r$,} \\
n-k-t, &\text{else.}
\end{cases}
\end{equation*}
if and only if (note that ``$\beta=1$ or $0 < \omega_1 \leq r$'' iff ``$\beta'(\beta,\omega_1)=1$'')
\begin{equation*}
\big(\sigma + \max\{0,\omega_i-r\}\big) +\sum_{i=2}^{\eta} \max\{0,\omega_i-r\} > \begin{cases}
n-k-t-1, &\text{if ($\beta=1$ or $0 < \omega_1 \leq r$) or $\exists \, i>1$ : $0 < \omega_i \leq r$,} \\
n-k-t, &\text{else.}
\end{cases}
\end{equation*}
Hence, for $\eta>1$, we have
\begin{align*}
\mathsf{W}(\eta,\tau,\sigma,\beta) &= \sum_{\omega_1=0}^{\min\{\tau,n_l\}} \binom{n_l}{\omega_1} \underbrace{\sum_{\omegaVec' \in \mathcal{W}\big(\eta-1,\tau-w_1,\sigma+\max\{0,\omega_1-r\},\beta'(\beta,\omega_1)\big)} \prod_{i=1}^{\eta-1} \binom{n_l}{\omega_i'}}_{= \mathsf{W}\big(\eta-1,\tau-w_1,\sigma+\max\{0,\omega_1-r\},\beta'(\beta,\omega_1)\big)}.
\end{align*}
For $\eta=1$, we have
\begin{align*}
\mathcal{W}(1,\tau,\sigma,\beta) = \begin{cases}
\Big\{[\tau]\Big\}, &\text{if } \tau\leq n_l  \quad \land  \quad \sigma+\max\{0,\tau-r\} > n-k-t-\beta'(\beta,\tau), \\
\emptyset, &\text{else.}
\end{cases}
\end{align*}
The claim follows immediately by definition of $\mathsf{W}$.
\end{proof}

Lemma~\ref{lem:W_recursion} implies an algorithm for computing $\mathsf{W}$ recursively.
If implemented naively, one could obtain an algorithm that calls itself up to roughly $n_l^{\mu}$ times as in each recursive step, we call for up to $n_l+1$ values of $w_1$ the algorithm recursively, with recursion depth at most $\mu-1$.
In this naive approach, the algorithm would call itself often for the same parameters.
Hence, we can turn it into a dynamic programming routine by memoizing already computed values of $\mathsf{W}(\eta,\tau,\sigma,\beta)$ in a table.
We outline this approach in Algorithm~\ref{alg:W_computation}.
Algorithm~\ref{alg:Pf_computation} computes the probability in \eqref{eq:exact_failure_probability} using this recursive algorithm and we bound its complexity in Theorem~\ref{thm:Pf_alg_correctness_complexity} below.

\printalgoIEEE{
\DontPrintSemicolon
\KwIn{
\begin{itemize}
\item Integers $\eta \geq 0$, $\tau \geq 0$, $\sigma \geq 0$, $\beta \in \{0,1\}$.
\item Global parameters $n$, $k$, $t$, $r$, $\varrho$, $n_l = r+\varrho-1$, $\mu = n/(r+\varrho-1)$. \\
\item Global integer table $\{\mathsf{T}(\eta,\tau,\sigma,\beta)\}_{\eta \geq 0, \tau \geq 0, \sigma \geq 0, \beta \in \{0,1\}}$ with $\mathsf{T}(\eta,\tau,\sigma,\beta)=\mathsf{W}(\eta,\tau,\sigma,\beta)$ if $\mathsf{W}(\eta,\tau,\sigma,\beta)$ has already been computed and $\mathsf{T}(\eta,\tau,\sigma,\beta)=-1$ otherwise.
\end{itemize}}
\KwOut{Integer $\mathsf{W}(\eta,\tau,\sigma,\beta)$ as in \eqref{eq:W_value}.}

\If{$\mathsf{T}(\eta,\tau,\sigma,\beta) \neq -1$}{
	\Return{$\mathsf{T}(\eta,\tau,\sigma,\beta)$}
} \Else{
	\If{$\eta=1$ \label{line:W_comp_start}}{
		\If{$\beta=1$ or $0<\tau \leq r$}{
			$\beta' \gets 1$
		} \Else{
			$\beta' \gets 0$
		}
		\If{$\tau \leq n_l$ and $\sigma+\max\{0,\tau-r\} > n-k-t-\beta'$}{
			$\res \gets \tbinom{n_l}{\tau}$
		} \Else{
			$\res \gets 0$
		}
	} \Else{
		$\res \gets 0$ \\
		\For{$w_1=0,\dots,\min\{\tau,n_l\}$}{
			\If{$\beta=1$ or $0<\omega_1 \leq r$}{
				$\beta' \gets 1$
			} \Else{
				$\beta' \gets 0$
			}
			$\res = \res + \tbinom{n_l}{\omega_1} \mathsf{W}\big(\eta-1,\tau-w_1,\sigma+\max\{0,\omega_1-r\},\beta'\big)$ \\
		}
	}
	$\mathsf{T}(\eta,\tau,\sigma,\beta) \gets \res$ \\
	\Return{$\res$ \label{line:W_comp_stop}}
}
\caption{$\mathsf{W}(\eta,\tau,\sigma,\beta)$}
\label{alg:W_computation}
}

\printalgoIEEE{
\DontPrintSemicolon
\KwIn{Global parameters $n$, $k$, $0\leq t\leq n$, $r$, $\varrho$, $n_l = r+\varrho-1$, $\mu = n/(r+\varrho-1)$}
\KwOut{Probability $\Pr\left( \Eset \text{ is \textbf{not} $(t+1)$-independent} \right)$ for $\Eset$ chosen uniformly at random from the subsets of $\{1,\dots,n\}$ of cardinality $t$}
$\{\mathsf{T}(\eta,\tau,\sigma,\beta)\}_{1 \leq \eta \leq \mu, \, 0 \leq \tau \leq n-t, \, 0 \leq \sigma \leq (\mu-1)(\varrho-1), \, 0 \leq \beta\leq 1} \gets $ global integer table filled with $-1$ initially \\
\Return{$\frac{\mathsf{W}(\mu,n-t,0,0)}{\binom{n}{t}}$}
\caption{Compute Probability $\Pr\left( \Eset \text{ is \textbf{not} $(t+1)$-independent} \right)$}
\label{alg:Pf_computation}
}

\begin{theorem}\label{thm:Pf_alg_correctness_complexity}
Algorithm~\ref{alg:Pf_computation} is correct and can be implemented with complexity
\begin{align*}
O^\sim\!\left( n^3 t \right)
\end{align*}
in bit operations, where $O^\sim$ is the ``soft-O'' notation, i.e., it neglects logarithmic factors in $n$ and $t$.
\end{theorem}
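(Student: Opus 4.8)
The plan is to treat correctness and complexity separately, since essentially all of the substantive content is already packaged into Theorem~\ref{thm:exact_failure_prob} and Lemma~\ref{lem:W_recursion}. For correctness I would first prove by induction on $\eta$ that a call $\mathsf{W}(\eta,\tau,\sigma,\beta)$ of Algorithm~\ref{alg:W_computation} returns exactly the integer $\mathsf{W}(\eta,\tau,\sigma,\beta)$ of Theorem~\ref{thm:exact_failure_prob}: the base case $\eta=1$ transcribes \eqref{eq:W_eta=1_recursion} line by line, and for $\eta>1$ the $w_1$-loop transcribes the recursion \eqref{eq:W_eta>1_recursion} line by line, so the inductive step is immediate once the hypothesis is applied to the strictly smaller first argument $\eta-1$. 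The only ingredient not already in Lemma~\ref{lem:W_recursion} is the memoization table: a short invariant — $\mathsf{T}(\eta,\tau,\sigma,\beta)\neq-1$ only after a (by induction, correct) value has been stored — shows that reading from $\mathsf{T}$ is transparent and never changes the returned value. Since every invocation stays inside the declared table ranges ($\eta$ decreases from $\mu$ to $1$; $\tau$ decreases from $n-t$; $\sigma$ starts at $0$ and grows by at most $\varrho-1$ over at most $\mu-1$ steps; $\beta\in\{0,1\}$), Algorithm~\ref{alg:Pf_computation} returns $\mathsf{W}(\mu,n-t,0,0)/\binom{n}{t}$, which by Theorem~\ref{thm:exact_failure_prob} equals $\Pr(\Eset \text{ is not }(t+1)\text{-independent})$.

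For the complexity I would count as follows. By the memoization invariant, each $4$-tuple $(\eta,\tau,\sigma,\beta)$ inside the table ranges triggers at most one ``real'' (non-lookup) invocation of Algorithm~\ref{alg:W_computation}. Using $\mu\,n_l=n$ and $\varrho-1\le n_l$, the number of such tuples is
\begin{equation*}
O\bigl(\mu\cdot(n-t)\cdot(\mu-1)(\varrho-1)\bigr)=O\bigl((n-t)\,\mu^2 n_l\bigr)=O\bigl((n-t)\,\mu n\bigr)=O\!\left(\tfrac{n^3}{n_l}\right).
\end{equation*}
Each real invocation runs the $w_1$-loop $O(\min\{\tau,n_l\})=O(n_l)$ times, and each iteration does one table access, one lookup of a precomputed binomial $\binom{n_l}{w_1}$, one integer multiplication and one integer addition. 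Hence the total number of integer operations over the whole run is $O\!\left(\tfrac{n^3}{n_l}\cdot n_l\right)=O(n^3)$ — the $n_l$ from the loop cancels the $1/n_l$ in the subproblem count — plus $O^\sim(n^2)$ for precomputing $\binom{n_l}{0},\dots,\binom{n_l}{n_l}$ and $\binom{n}{t}$, which is negligible.

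The remaining point is the bit-length of the integers manipulated, and this is the step I expect to need the only genuine idea. Starting from $(\mu,n-t,\cdot,\cdot)$ and decreasing $\eta$ by $1$ and $\tau$ by at most $n_l$ per step, every tuple actually reached satisfies $\tau\ge\eta n_l-t$, i.e.\ $\eta n_l-\tau\le t$. Therefore every value occurring in the computation obeys
\begin{equation*}
0\le\mathsf{W}(\eta,\tau,\sigma,\beta)\le\binom{\eta n_l}{\tau}=\binom{\eta n_l}{\eta n_l-\tau}\le\max_{0\le j\le t}\binom{\eta n_l}{j}\le\max\Bigl\{\tbinom{n}{t},\,4^t\Bigr\},
\end{equation*}
(for $t\le\eta n_l/2$ use monotonicity of $\binom{m}{t}$ in $m$ together with $\eta n_l\le n$; for $t>\eta n_l/2$ use $2^{\eta n_l}\le 4^t$), and the partial sums and products formed inside a $w_1$-loop are themselves bounded by the value being accumulated, so they are $O^\sim(t)$-bit integers as well (after skipping terms in which the recursively returned factor vanishes). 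Thus every integer has $O(t\log n)=O^\sim(t)$ bits, each multiplication and addition costs $O^\sim(t)$ bit operations with FFT-based integer arithmetic, and the final division costs $O^\sim(t)$ likewise; multiplying the $O(n^3)$ integer operations by $O^\sim(t)$ bit operations apiece gives $O^\sim(n^3 t)$. Correctness and the subproblem count are routine given Lemma~\ref{lem:W_recursion} and Theorem~\ref{thm:exact_failure_prob}; the reachability bound $\tau\ge\eta n_l-t$ and the resulting $O(t\log n)$-bit estimate (including the zero-factor edge case) are the parts that need care.
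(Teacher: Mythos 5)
Your proposal is correct and follows essentially the same route as the paper: correctness via Lemma~\ref{lem:W_recursion} together with the memoization invariant, and complexity by bounding the number of distinct table entries, the $O(n_l)$ per-entry work, and the bit-length of the integers involved. Your reachability bound $\tau \geq \eta n_l - t$ makes the integer-size estimate (and hence the handling of potentially large factors $\binom{n_l}{\omega_1}$) more rigorous than the paper's rough bound of $\binom{n}{t}$, but the final count of $O(n^3)$ integer operations on $O^\sim(t)$-bit integers is the same.
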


\begin{proof}
Correctness of the algorithm immediately follows from Lemma~\ref{lem:W_recursion}.

For the complexity, we need to analyze how often Lines~\ref{line:W_comp_start}--\ref{line:W_comp_stop} of Algorithm~\ref{alg:W_computation} are called.
This corresponds to exactly the number of entries of the table $\{\mathsf{T}(\eta,\tau,\sigma,\beta)\}$ that need to be computed.
Our claim is that we compute at most the entries with indices $1 \leq \eta \leq \mu$, $0 \leq \tau \leq n-t$, $0 \leq \sigma \leq (\mu-1)(\varrho-1)$, and $0 \leq \beta\leq 1$, i.e., in total at most
\begin{align*}
2\mu(n-t+1)\big[(\mu-1)(\varrho-1)+1\big] \leq 2\mu^2 \varrho n
\end{align*}
table entries.
This is clear for $\eta$ and $\tau$, since they are $\mu$ and $n-t$ initially, and during the recursion, their value is never increased and never drops below $1$ and $0$, respectively.
Also $\beta$ is either $0$ or $1$ in each function call.
For $\sigma$, we see that in each recursion, its value is increased by at most $\max\{0,w_1-r\} \leq r+\varrho-1-r = \varrho-1$. As the maximal recursion depth is $\mu-1$, we get the claimed upper bound.

In each function call, we perform at most $n_l$ integer additions and multiplications (assuming a negligible pre-computation of all the binomial coefficients $\tbinom{n_l}{0},\dots,\tbinom{n_l}{\lfloor n_l/2\rfloor}$). The cost (in bit operations) of each such operation is quasi-linear in the maximal bit size of the involved integers, cf.~\cite{harvey2019faster}.
We can, very roughly, upper-bound the size of the integers by
\begin{align*}
\leq \tbinom{n}{t} \leq n^t \leq 2^{t \log_2(n)}.
\end{align*}
Hence, the overall complexity is in $O^\sim\!\left( \mu^2 \varrho n n_l t \right) \subseteq O^\sim\!\left( n^3 t\right)$.
\end{proof}

\begin{remark}
In practice, Algorithm~\ref{alg:Pf_computation} often only needs to compute a small fraction of the table entries.
This means that the implemented top-down approach of dynamic programming (also called memoization) is typically faster than a bottom-up approach (i.e., computing the entire table iteratively for $\eta=1,\dots,\mu$).
Furthermore, the upper bound on the integer bit size in the proof of Theorem~\ref{thm:Pf_alg_correctness_complexity} is for most (or all) multiplications orders of magnitude away from the actual bit size. Hence, Algorithm~\ref{alg:Pf_computation} is in practice much faster than the upper complexity bound in Theorem~\ref{thm:Pf_alg_correctness_complexity} suggests.
\end{remark}

The following examples are computed with Algorithm~\ref{alg:Pf_computation}.

\begin{example}
We consider three example PMDS codes and compute the probability $\Pr\left( \Eset \text{ is \textbf{not} $(t+1)$-independent} \right)$ using Algorithm~\ref{alg:Pf_computation} for several values of $t$ for $d-1 \leq t \leq n-k-1$. Recall that for $t=d-2$, this probability is always $0$ and $t=n-k-1$ is the largest integer for which the probability is less than $1$.

\textbf{Parameter Set 1:} $n=45$, $k=16$ (rate $\approx 0.36$), $r=8$, $\varrho=8$ (local rate $\approx 0.53$, $\mu=3$ local repair sets).
\begin{center}
\begin{tabular}{l|l}
$t$ & $\Pr\left( \Eset \text{ is \textbf{not} $(t+1)$-independent} \right)$ \\
\hline
$>28$ & $1$ \\
$28 = n-k-1$ & $9.87 \cdot 10^{-2}$ \\
$27$ & $3.61\cdot 10^{-2}$ \\
$26$ & $1.10\cdot 10^{-2}$ \\
$25$ & $2.73\cdot 10^{-3}$ \\
$24$ & $5.13\cdot 10^{-4}$ \\
$23$ & $6.55\cdot 10^{-5}$ \\
$22 = d-1$ & $4.27\cdot 10^{-6}$ \\
$<22$ & $0$
\end{tabular}
\end{center}
We see that already for a few errors below the maximal radius $n-k-1$, the probability that the error positions are $(t+1)$-independent is relatively close to $1$, even for such a short code ($n=45$).

\textbf{Parameter Set 2:} $n=70$, $k=24$ (rate $\approx 0.34$), $r=8$, $\varrho=3$ (local rate $0.8$, $\mu=7$ local repair sets).
\begin{center}
\begin{tabular}{l|l}
$t$ & $\Pr\left( \Eset \text{ is \textbf{not} $(t+1)$-independent} \right)$ \\
\hline
$>45$ & $1$ \\
$45 = n-k-1$ & $1.68 \cdot 10^{-3}$ \\ 
$44$ & $9.38\cdot 10^{-5}$ \\ 
$43$ & $1.25\cdot 10^{-8}$ \\ 
$42 = d-1$ & $4.03 \cdot 10^{-10}$ \\ 
$<42$ & $0$
\end{tabular}
\end{center}
For $t=45$, the failure probability bound in \eqref{eq:P_t=n-k-1_union_bound_step} (i.e., after the union bound step) gives $\leq 1.68\cdot 10^{-3}$ and differs from the exact value only from the fifth digit on.

\textbf{Parameter Set 3:} $n=196$, $k=156$ (rate $\approx 0.80$), $r=26$, $\varrho=3$ (local rate $\approx 0.93$, $\mu=7$ local repair sets).
\begin{center}
\begin{tabular}{l|l}
$t$ & $\Pr\left( \Eset \text{ is \textbf{not} $(t+1)$-independent} \right)$  \\
\hline
$>39$ & $1$ \\
$39 = n-k-1$ & $7.62\cdot 10^{-2}$ \\
$38$ & $1.11 \cdot 10^{-2}$ \\
$37$ & $3.49 \cdot 10^{-4}$ \\
$36$ & $2.71 \cdot 10^{-5}$ \\
$35$ & $2.76 \cdot 10^{-7}$\\
$34$ & $1.50 \cdot 10^{-8}$\\
$33$ & $2.13 \cdot 10^{-11}$\\
$32$ & $9.31 \cdot 10^{-13}$\\
$31$ & $1.73 \cdot 10^{-17}$\\
$30 = d-1$ & $6.56 \cdot 10^{-19}$\\
$<30$ & $0$
\end{tabular}
\end{center}
For $t=39$, the failure probability bound in \eqref{eq:P_t=n-k-1_union_bound_step} gives $\leq 7.71 \cdot 10^{-2}$ and differs from the exact value on the second digit.
\end{example}

\section{Conclusion} \label{sec:conclusion}

In this work we derived a new list-decoding radius for locally repairable codes and gave an explicit algorithm that achieves it. The complexity and the list size are polynomial in the code length $n$ when considering scaling that preserves the number of local repair sets. The asymptotic behavior has been analyzed and a simple probabilistic unique decoder has been introduced.

Further, we considered interleaved decoding of LRCs and PMDS codes, showing that it increases the error tolerance of a storage system. We combined the approach used for increasing the list-decoding radius for improving the decoding radius of LRCs by local decoding with an interleaved decoder.
Further, we proved that the decoding radius of interleaved PMDS codes can be increased by the Metzner-Kapturowski decoder \cite{metzner1990general} beyond their minimum distance, with probability of successful decoding going to $1$ as the code length goes to infinity.

As future work, the list decoding algorithm and bound on the list size given in Section~\ref{sec:listDecoding} should be further studied with the goal to reduce the worst-case list size/complexity (cf. \emph{1)} in Remark~\ref{rem:list_size_bound_not_tight}). Moving from list decoding to probabilistic decoding, an interesting problem is the analysis and algorithmic exploitation of intrinsic side information, e.g., obtained from the local list size or from the distance of local codewords to the received word, to improve the success probability of probabilistic unique decoding (cf. \emph{2)} in Remark~\ref{rem:list_size_bound_not_tight}).
Further, list decoding by combining the local lists through list recovery, similar to the approach taken in \cite{zeh2016improved}, is a promising extension.

Another open problem is the application of results from Section~\ref{sec:PMDS} to PMDS codes under a weaker definition, as considered in \cite{calis2016general}.

Finally, we present in Figure~\ref{fig:rate_tuples} an illustration of tuples of global and local rate for which the new decoders are ``suitable'' (exact definition: see figure caption). It can be seen that for the majority of code parameters, at least one of the decoders is ``suitable''.

\begin{figure}
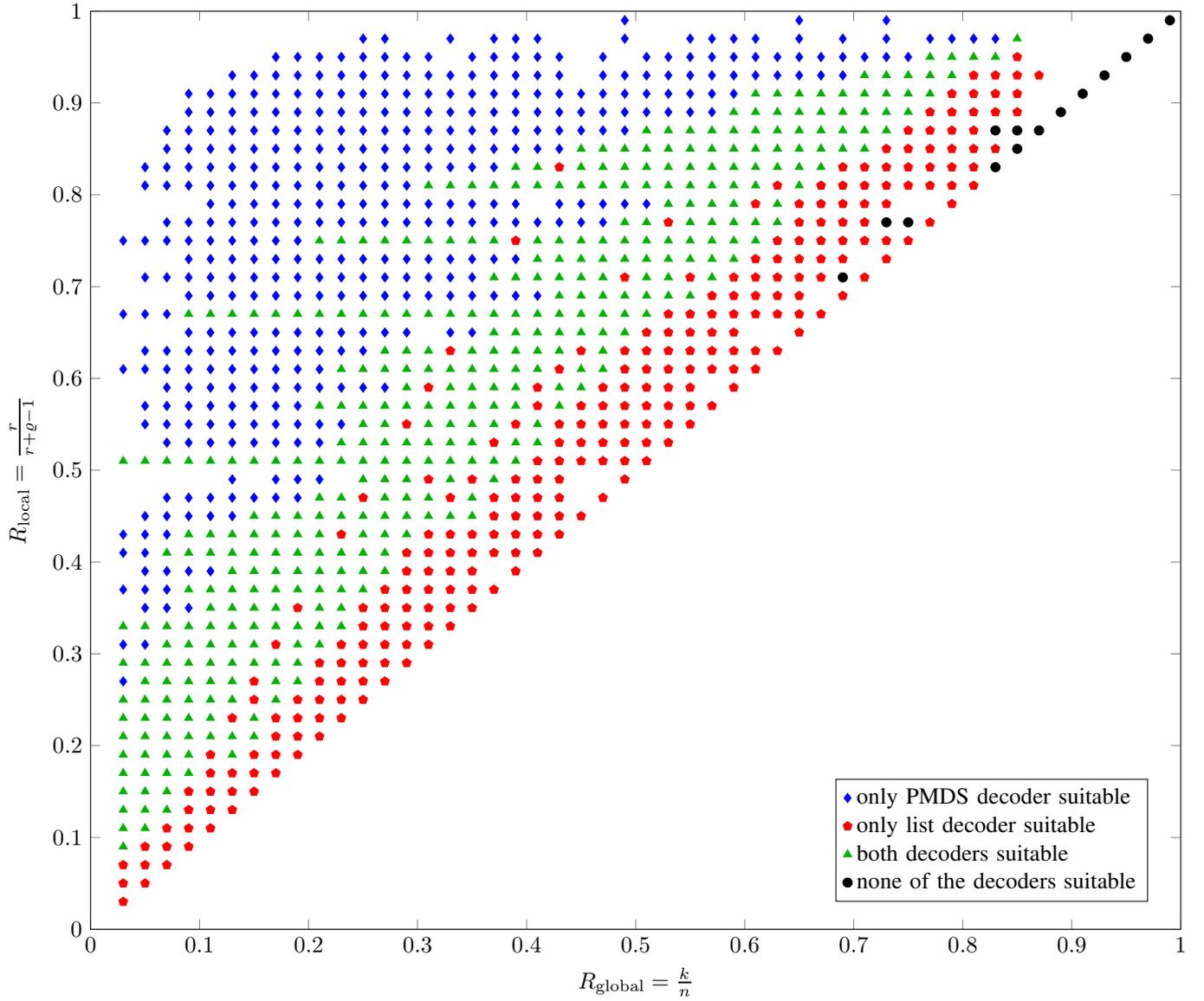

\begin{center}
\begin{tikzpicture}
\pgfplotsset{compat = 1.3}
\begin{axis}[
	width = \columnwidth,
	xlabel = {$R_\mathrm{global} = \frac{k}{n}$},
	ylabel = {$R_\mathrm{local} = \frac{r}{r+\varrho-1}$},
	xmin = 0,
	xmax = 1.0,
	ymin = 0,
	ymax = 1.0,
	legend pos = south east,
	legend cell align=left]

\def\mymark{diamond*}
\input{points_PMDS}
\addlegendentry{only PMDS decoder suitable};

\input{points_list}
\addlegendentry{only list decoder suitable};

\input{points_both}
\addlegendentry{both decoders suitable};

\input{points_none}
\addlegendentry{none of the decoders suitable};


%
%
%
\end{axis}
\end{tikzpicture}
\end{center}
\caption{Rate tuples for which the new list decoder of Section~\ref{sec:listDecoding} can correct more errors than the Johnson radius (event $\mathcal{L}$) and for which the new PMDS decoder of Section~\ref{sec:PMDS} can decode up to $n-k-1$ errors with probability at least $0.9$ (event $\mathcal{P}$).  Legend (\# of points):
\textcolor{red}{$\mathcal{L} \land \neg \mathcal{P}$ (25986)}, \textcolor{mygreen}{$\mathcal{L} \land \mathcal{P}$ (2299)}, \textcolor{blue}{$\neg \mathcal{L} \land \mathcal{P}$ (9806)}, \textcolor{black}{$\neg \mathcal{L} \land \neg \mathcal{P}$ (3922)}.
The points show all rate pairs achievable for optimal LRCs with $r \mid k$ and $(r+\varrho-1) \mid n$ for $100 \leq n \leq 200$. For clarity the plot does not show a point for every set of valid parameters, but instead we divide the plane into squares and a point is colored \textcolor{red}{red} or \textcolor{blue}{blue} if all parameter sets within the square give the respective improvement. We color the point \textcolor{mygreen}{green} if both improvements apply within a square and at least one of the improvements applies to every set of parameters within the square.}
\label{fig:rate_tuples}
\end{figure}

\vspace{4pt}
\bibliographystyle{IEEEtran}
\bibliography{main}

\appendix
\subsection{List Decoding of Erasures in Locally Repairable Codes}
\label{app:erasure_list_decoding}

The focus of Section~\ref{sec:listDecoding} is list decoding of \emph{errors} in LRCs, but for the sake of completeness we include a short discussion on \emph{erasure} list decoding in LRCs. Similar to list decoding of errors, the objective of list decoding erasures is to output a list containing all codewords that could result in the received word for the given erasure pattern, i.e., that agree with the received word in all non-erased positions. However, while it is a major challenge to find efficient algorithms for list decoding of errors, the output of a list decoder for erasures is simply the solution space of an (underdetermined) linear system of equations. The relation between the maximum size of this solution space, i.e., the maximum list size of the decoder, and the $i$-th \emph{generalized Hamming weight} \cite{wei1991generalized}
\begin{equation*}
  d_i = \min\{|\supp(\mathcal{D})| : \mathcal{D} \subseteq \code \ \text{and} \ \dim(\mathcal{D}) = i\}
\end{equation*}
of a code $\code$ was made explicit in \cite{guruswami2003list}.
These values have been studied intensely for several code classes in the context of different applications, see, e.g., \cite{cohen1994upper,tsfasman1995geometric,ashikhmin1999new}. In particular, they were studied for codes with locality \cite{prakash2012optimal,prakash2014codes,hao2017weight} and the subclass of PMDS codes \cite{lalitha2015weight,Gopalan2014}.

A Singleton-like bound on the generalized Hamming weights of an $[n,k,r,2]$ LRC is given by \cite{hao2017weight}
\begin{equation}\label{eq:LRCgeneralizedHW}
  d_i\leq n-k -\left\lceil \frac{k-i+1}{r} \right\rceil +i+1 \ , \ 1\leq i \leq k \ ,
\end{equation}
and this bound is achieved with equality for optimal $(r,2)$-LRCs.
Applying this result to \cite[Lemma~1]{guruswami2003list} gives an upper bound on the list size for all optimal $[n,k,r,2]$ LRCs.
\begin{lemma}[Erasure list decoding of optimal LRCs \cite{hao2017weight,guruswami2003list}]
  An optimal $q$-ary $[n,k,r,2]$ LRC can correct $t$ erasures with maximal list size $L$ if and only if $d_{1+\floor{\log_q L}} > t$, where $d_i$ is given by (\ref{eq:LRCgeneralizedHW}) with equality.
\end{lemma}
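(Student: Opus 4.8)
The plan is to obtain the statement as a direct corollary of two facts available in the literature: the general characterization of erasure list-decodability of a linear code in terms of its generalized Hamming weights from \cite[Lemma~1]{guruswami2003list}, and the fact, proved in \cite{hao2017weight}, that optimal $(r,2)$-LRCs meet the Singleton-like bound \eqref{eq:LRCgeneralizedHW} on the generalized Hamming weights with equality.

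First I would spell out what an erasure list decoder returns. If a set $\mathcal{T}$ of $t$ coordinates is erased, the codewords consistent with the non-erased symbols form either the empty set or an affine coset of the subcode of $\code$ supported on $\mathcal{T}$; its size is $q^{s}$, where $s = \dim(\code) - \dim(\code|_{[n]\setminus\mathcal{T}})$. Hence, maximizing over all erasure patterns of cardinality $t$, the worst-case list size is $q^{s_{\max}}$, where $s_{\max}$ is the largest $i$ for which there exists an $i$-dimensional subcode $\mathcal{D}\subseteq\code$ with $|\supp(\mathcal{D})|\le t$, i.e., the largest $i$ with $d_i\le t$. Therefore the worst-case list size is at most $L$ if and only if $q^{s_{\max}}\le L$, equivalently $s_{\max}\le\floor{\log_q L}$, equivalently $d_{1+\floor{\log_q L}}>t$. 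This is precisely \cite[Lemma~1]{guruswami2003list}; the step is to invoke it, and the only thing to double-check is that the ``only if'' direction is genuinely attained, which follows since a subcode $\mathcal{D}$ realizing $d_i$ yields, upon erasing a $t$-set containing $\supp(\mathcal{D})$, a coset of size exactly $q^i$.

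Then I would substitute the explicit weights. Since $\code$ is an optimal $[n,k,r,2]$ LRC, \cite{hao2017weight} gives $d_i = n-k-\left\lceil\frac{k-i+1}{r}\right\rceil+i+1$ for $1\le i\le k$, i.e., equality in \eqref{eq:LRCgeneralizedHW}. Plugging $i=1+\floor{\log_q L}$ into this formula and comparing with $t$ gives the stated criterion; combined with the characterization of the previous paragraph, this proves both directions.

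The main obstacle, if one wants to be careful, is purely bookkeeping: ensuring the index $1+\floor{\log_q L}$ lies in the range $1\le i\le k$ where \eqref{eq:LRCgeneralizedHW} holds with equality, and treating the degenerate cases (for instance $L\ge q^k$, where the condition $d_{1+\floor{\log_q L}}>t$ is to be read as vacuously true since no subcode of dimension exceeding $k$ exists, or $t\ge d_k$, where no list size suffices). Beyond these edge cases the result needs no new argument, being a direct combination of the two cited statements.
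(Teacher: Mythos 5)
Your proposal is correct and matches the paper's approach exactly: the paper gives no independent proof of this lemma, obtaining it precisely as the combination of \cite[Lemma~1]{guruswami2003list} (erasure list decodability characterized by generalized Hamming weights) with the result of \cite{hao2017weight} that optimal $(r,2)$-LRCs meet the bound \eqref{eq:LRCgeneralizedHW} with equality. Your additional unpacking of the coset/dimension-drop argument behind the cited characterization is sound and only makes the citation-based argument more self-contained.
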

Erasure list decoding of LRCs has also been considered in \cite{zeh2016improved}, where erasure list decoding in the local repair sets is combined with alphabet-dependent list recovery.

Note that the probability of obtaining a list of maximum size depends on the explicit code used. For example, for a $q$-ary MDS codes it is easy to see that list decoding of $n-k+\delta$ erasures \emph{always} gives a list of size equal to the upper bound of $q^\delta$. For erasure list decoding the probability of a unique output, i.e., a list of size one, can easily be characterized as the probability of the non-erased positions containing an information set of the code, i.e., the probability that the generator matrix restricted to the non-erased positions is of full rank. While it is not possible to determine this probability for optimal LRCs in general, we provide a detailed analysis for the subclass of PMDS codes in Section~\ref{sec:PMDS} by first relating this property to specific subsets of code positions in Lemma~\ref{lem:information_sets_Sk} and then bounding the number of such sets\footnote{As we consider a different problem in Section~\ref{sec:PMDS}, the bound is stated for $\Sset_{k+1}$ instead of $\Sset_{k}$.} in Lemma~\ref{lem:Partial_MDS_bound_S_k+1}.

\subsection{Proof of Lemma~\ref{lem:increasingInN}}\label{app:proof_of_lemma_increasingInN}

\begin{IEEEproof}[Proof of Lemma~\ref{lem:increasingInN}]
For ease of notation define $a:=\frac{d}{\theta_q n}$. Then
\begin{align*}
  h(a) = d a^{-1} \left(1-(1-a)^{\frac{\ell}{\ell+1}}\right)
\end{align*}
and observe the equivalence of the conditions
\begin{equation*}
  n\geq \frac{d}{\theta_q} \quad \Longleftrightarrow \quad 0< a \leq 1 \ .
\end{equation*}
As $a$ is a decreasing function in $n$, any function that is increasing in $a$, is decreasing in $n$. The partial derivative is given by
\begin{align*}
  \frac{\partial}{\partial a} h(a) = \frac{(1-a)^{-\frac{1}{\ell+1}} \frac{\ell}{\ell+1} a - (1-(1-a)^{\frac{\ell}{\ell+1}})}{a^2} \ .
\end{align*}
As $a>0$ the partial derivative $\frac{\partial}{\partial a} h(a) \geq 0$ is positive if for the numerator it holds that
\begin{align}
0 &\leq (1-a)^{-\frac{1}{\ell+1}} \frac{\ell}{\ell+1} a +(1-a)^{\frac{\ell}{\ell+1}}  -1 \nonumber \\
&= (1-a)^{-\frac{1}{\ell+1}}\left(\frac{\ell}{\ell+1} a + (1-a)\right) -1 \nonumber \\
&= (1-a)^{-\frac{1}{\ell+1}}\left(1-\frac{a}{\ell+1}\right) -1 \nonumber \\
&= \frac{1-\frac{a}{\ell+1}-(1-a)^{\frac{1}{\ell+1}}}{\underbrace{(1-a)^{\frac{1}{\ell+1}}}_{>0}}  \ . \label{eq:finalha}
\end{align}
The limit of the enumerator for $a \rightarrow 0^+$ is zero and its derivative is given by
\begin{align*}
 \frac{\partial}{\partial a} \left(1-\frac{a}{\ell+1}-(1-a)^{\frac{1}{\ell+1}}\right) &= -\frac{1}{\ell+1} + \frac{1}{\ell+1} (1-a)^{-\frac{\ell}{\ell+1}} \\
                                                                         &= \frac{1}{\ell+1}\big( \underbrace{(1-a)^{-\frac{\ell}{\ell+1}}}_{>1}-1 \big) > 0 , \ \text{for} \ 0 < a < 1 \ .
\end{align*}
It follows that the enumerator and denominator of (\ref{eq:finalha}) are positive for $0<a<1$, i.e., the first derivative $\frac{\partial}{\partial a} h(a)$ is positive in this range. Further, it is easy to check that $h(a) < h(1) = d, \ \forall 0<a<1$. Hence $h(a)$ is increasing in $a$ for $0<a\leq 1$ and thereby decreasing in $n$ for $n\geq \frac{d}{\theta_q}$.
\end{IEEEproof}

\subsection{Improved List Size Bound}\label{app:improved_list_bound}

The following theorem presents an improved bound on the maximal list size for list decoding LRCs up to the decoding radius in Theorem~\ref{thm:ListDecodingLRCs} in Section~\ref{subsec:newdecodingradius}.
It is a slight (first-order) improvement over the bound \eqref{eq:listSize} in Theorem~\ref{thm:ListDecodingLRCs} and further improvements might be possible (see Remark~\ref{rem:list_size_bound_not_tight} in Section~\ref{subsec:newdecodingradius} for more details).

\begin{theorem}\label{thm:ListDecodingLRCsImproved}
Consider the setting of Theorem~\ref{thm:ListDecodingLRCs} in Section~\ref{subsec:newdecodingradius}.
Let $L_{(n,d,\tau)}$ denote the maximum list size when list decoding an~$[n,k,d]_q$ code with radius~$\tau$. Then an $[n,k,r,\varrho]_q$ LRC is~$(\tau_g,L_g)$-list-decodable, where $\tau_g$ is the decoding radius in \eqref{eq:jblrc} (see Theorem~\ref{thm:ListDecodingLRCs}), and the list size is upper-bounded by
\begin{equation}\label{eq:listSizeImproved}
  L_g \leq \binom{\frac{n}{n_l}}{\ceil{\sigma}} \max_{\xi =0,\dots,\ceil{\sigma}} \left\{L_{(n_l,\varrho,\tau_{J,l})}^{\xi} L_{(n-\ceil{\sigma} n_l,d,\tau_g-\xi(\varrho-\tau_{J,l}))}\right\} \ .
\end{equation}
\end{theorem}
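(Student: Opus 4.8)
The plan is to retrace the proof of Theorem~\ref{thm:ListDecodingLRCs} but to be more careful about the interaction between the number of errors ``used up'' in the locally decodable repair sets and the radius that remains for the global decoding step. The decoding radius $\tau_g$ is unchanged, so only the list-size bound \eqref{eq:listSizeImproved} needs a new argument. As in the proof of Theorem~\ref{thm:ListDecodingLRCs}, fix a received word $\ve{w}$ and, by Lemma~\ref{lem:sigma}, let $\mathcal{I} \subseteq [\tfrac{n}{n_l}]$ be a set of $\ceil{\sigma}$ repair-set indices on which the local list decoders succeed (i.e.\ the list contains the correct local codeword). There are at most $\binom{n/n_l}{\ceil{\sigma}}$ such choices of $\mathcal{I}$, which gives the leading binomial factor. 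The improvement must come from bounding, for a \emph{fixed} $\mathcal{I}$, the number of (combination of local codewords in the local lists) $\times$ (codeword in the shortened global list) pairs by something better than $L_{(n_l,\varrho,\tau_{J,l})}^{\ceil{\sigma}} L_{(n-\ceil{\sigma}n_l,d,\tau_g)}$.

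The key observation, already hinted at in Remark~\ref{rem:list_size_bound_not_tight} and in the statement of Theorem~\ref{thm:ListDecodingLRCsImproved} itself, is a dichotomy for each locally list-decoded repair set: either its local list has size $1$, or every codeword in the local list is at distance exactly the local minimum distance $\varrho$ from the received word restricted to that repair set (because two distinct local codewords in a Hamming ball of radius $\tau_{J,l}$ around $\ve{w}|_{\mathcal{R}_i}$ would otherwise be closer than $\varrho$ to each other — this is the standard Johnson-type packing argument, and $2\tau_{J,l} \le \varrho$ fails only in the degenerate regime, so the minimal-distance codewords are the only ones that can coexist with a second list element). Hence, if we look at a combination of local codewords across the $\ceil{\sigma}$ repair sets in $\mathcal{I}$ and suppose that $\xi$ of these repair sets contribute a ``non-unique'' local list (size $\ge 2$), then in each of those $\xi$ repair sets the chosen local codeword is at distance at least... actually at distance $\varrho$ if it is the minimal-distance one, or at most $\tau_{J,l}$ in general; the point is that the chosen local codeword uses up a number of errors $\chi_i$ with $\chi_i \ge \varrho - \tau_{J,l}$ whenever that repair set's list has size $>1$ and we did not pick the unique correct codeword — more precisely, summing over the $\xi$ such sets, $\chi \coloneqq \sum_i \chi_i \ge \xi(\varrho - \tau_{J,l})$ in the relevant bound. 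After shortening these $\ceil{\sigma}$ repair sets, the remaining received word carries at most $\tau_g - \chi \le \tau_g - \xi(\varrho - \tau_{J,l})$ errors relative to any codeword within radius $\tau_g$, so the shortened global decoder can be run with radius $\tau_g - \xi(\varrho - \tau_{J,l})$ and produces a list of size at most $L_{(n - \ceil{\sigma}n_l,\, d,\, \tau_g - \xi(\varrho - \tau_{J,l}))}$.

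Putting this together: for a fixed $\mathcal{I}$ and a fixed value of $\xi \in \{0,\dots,\ceil{\sigma}\}$, the number of (local-codeword combination, global-codeword) pairs with exactly $\xi$ non-unique local lists is at most $L_{(n_l,\varrho,\tau_{J,l})}^{\xi} \cdot L_{(n-\ceil{\sigma}n_l,\,d,\,\tau_g - \xi(\varrho - \tau_{J,l}))}$, because the $\ceil{\sigma} - \xi$ unique local lists contribute a factor $1$ each and the $\xi$ non-unique ones contribute at most $L_{(n_l,\varrho,\tau_{J,l})}$ each. Taking the maximum over $\xi$ rather than summing — which we are entitled to do because we are bounding the worst case, and for any concrete received word the actual $\xi$ is a single fixed number once $\mathcal{I}$ and the local codewords are fixed — and multiplying by the $\binom{n/n_l}{\ceil{\sigma}}$ choices of $\mathcal{I}$ yields exactly \eqref{eq:listSizeImproved}. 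Care is needed that the $\xi=0$ term reproduces the $\tau_g$-radius global list size $L_{(n-\ceil{\sigma}n_l,d,\tau_g)}$, so \eqref{eq:listSizeImproved} is never worse than \eqref{eq:listSize}.

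The main obstacle I anticipate is making the ``$\chi \ge \xi(\varrho-\tau_{J,l})$'' step fully rigorous: one must argue that a non-unique local list forces the \emph{chosen} local codeword (when it is not the unique correct one) to be at distance $\ge \varrho - \tau_{J,l}$ — or more carefully, that across the $\xi$ non-unique sets the total distance $\chi$ from $\ve{w}$ is at least $\xi(\varrho - \tau_{J,l})$ — and that this $\chi$ genuinely reduces the error budget available for the shortened global decoding. The subtlety is that ``non-unique local list'' does not by itself force a large distance of the \emph{correct} local codeword; the argument has to be phrased in terms of the distance of whatever local codeword we picked for the shortening step, combined with the fact that at most one combination of local codewords over a fixed $\mathcal{I}$ can have $\chi = 0$. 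I would carry out this bookkeeping by fixing, for each codeword $\ve{c}'$ with $\dt{\ve{w}}{\ve{c}'}\le\tau_g$, the induced local distances $\dt{\ve{w}|_{\mathcal{R}_i}}{\ve{c}'|_{\mathcal{R}_i}}$ and checking that they are consistent with the Johnson-radius packing bound; the rest is the counting argument above, and taking the max over $\xi$ rather than a sum is what keeps the improvement from vanishing.
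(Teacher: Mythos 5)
Your proposal is correct and follows essentially the same route as the paper's proof: the triangle-inequality packing argument showing that every codeword in a local list of size $>1$ is at distance at least $\varrho-\tau_{J,l}$ from $\ve{w}|_{\mathcal{R}_i}$, the reduction of the global decoding radius by $\xi(\varrho-\tau_{J,l})$ after shortening, and the observation that $\xi$ is determined once $\mathcal{I}$ is fixed, which justifies taking the maximum over $\xi$ rather than a sum. Only note that your initial phrasing of the dichotomy (``every codeword in the list is at distance exactly $\varrho$'', and the aside about $2\tau_{J,l}\leq\varrho$) is incorrect as stated---list elements lie within $\tau_{J,l}<\varrho$ of the received word---but this slip does not propagate, since the bound you actually use, $\chi_i\geq\varrho-\tau_{J,l}$, is the correct one and is exactly the sub-claim in the paper's proof.
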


\begin{IEEEproof}
We build on the proof of Theorem~\ref{thm:ListDecodingLRCs} and only mention the differences.
Suppose that we have locally decoded up to radius $\tau_{J,l}$ in each repair set independently.
Recall that for each codeword in distance $\tau_g$ to the received word, there are at least $\ceil{\sigma}$ local repair sets in which the local decoder ouput list contains the correct local codewords in its output list.
Hence, as in the proof of Theorem~\ref{thm:ListDecodingLRCs}, we go through all $\binom{\frac{n}{n_l}}{\ceil{\sigma}}$ subsets of $\ceil{\sigma}$ local repair sets and estimate the sum of the output list size after shortening and global decoding w.r.t.\ all combinations of local decoding results in this subset of local repair sets.

The difference to Theorem~\ref{thm:ListDecodingLRCs} is how we count this sum of output lists.
We distinguish between several cases and finally take the maximum of all these cases.
We use the following \emph{sub-claim}:
Let $\Code_\mathsf{local}[r+\varrho-1, r, \varrho]$ be a local code and let $\ve{r} \in \Fq^{r+\varrho-1}$ be a vector such that the ball $\mathcal{B}_{\tau_{J,l}}(\ve{r})$ around $\ve{r}$ of radius $\tau_{J,l}$ contains more than one codeword of $\Code_\mathsf{local}$. Then we have $\dH(\ve{r},\c) \geq \varrho-\tau_{J,l}$ for all $\c \in \mathcal{B}_{\tau_{J,l}}(\ve{r}) \cap \Code_\mathsf{local}$.

\emph{Proof of the sub-claim:}
Assume the contrary, i.e., that there are distinct codewords $\c,\c' \in \mathcal{B}_{\tau_{J,l}}(\ve{r}) \cap \Code_\mathsf{local}$ such that $\dH(\ve{r},\c) < \varrho-\tau_{J,l}$. By the triangular inequality, we have
\begin{equation*}
\dH(\c,\c') \leq \dH(\c,\ve{r}) + \dH(\ve{r},\c') < \varrho,
\end{equation*}
contradicting the assumption $\c \neq \c'$ since $\varrho$ is the minimum distance of $\Code_\mathsf{local}$. This proves the sub-claim.

Fix a subset of $\ceil{\sigma}$ local repair sets and suppose that exactly $\xi \in \{0,1,\dots,\ceil{\sigma}\}$ repair sets have list size $>1$ (note that exactly one of these cases is fulfilled).
Hence, there are at most $L_{(n_l,\varrho,\tau_{J,l})}^{\xi}$ combinations of local codewords that we need to consider for shortening and global decoding.
Furthermore, by the \emph{sub-claim}, each of these combinations of local codewords has (summed) distance at least $\xi(\varrho-\tau_{J,l})$ to the corresponding local repair sets of the received word. After shortening, we have therefore removed several errors and only at most $\tau_g-\xi(\varrho-\tau_{J,l})$ errors remain to be corrected by the shortened code.
Hence, the list size is reduced from $L_{(n-\ceil{\sigma} n_l,d,\tau_g)}$ to $L_{(n-\ceil{\sigma} n_l,d,\tau_g-\xi(\varrho-\tau_{J,l}))}$ compared to decoding with radius $\tau_g$, and the overall list size for this subset of local repair sets is upper-bounded by
\begin{equation*}
L_{(n_l,\varrho,\tau_{J,l})}^{\xi}L_{(n-\ceil{\sigma} n_l,d,\tau_g-\xi(\varrho-\tau_{J,l}))}.
\end{equation*}
By taking the maximum value over all possible cases for $\xi$, we obtain the claimed overall list size bound.
\end{IEEEproof}

\begin{remark}\label{rem:improved_list_size_bound_no_significant_improvement}
Note that
\begin{align*}
\max_{\xi =0,\dots,\ceil{\sigma}} \left\{L_{(n_l,\varrho,\tau_{J,l})}^{\xi} L_{(n-\ceil{\sigma} n_l,d,\tau_g-\xi(\varrho-\tau_{J,l}))}\right\} \leq L_{(n_l,\varrho,\tau_{J,l})}^{\ceil{\sigma}} L_{(n-\ceil{\sigma} n_l,d,\tau_g)},
\end{align*}
so the list-size bound \eqref{eq:listSizeImproved} in Theorem~\ref{thm:ListDecodingLRCsImproved} is always at least as good as the bound \eqref{eq:listSize} in Theorem~\ref{thm:ListDecodingLRCs} (Section~\ref{subsec:newdecodingradius}).

On the other hand, we have
\begin{align*}
\max_{\xi =0,\dots,\ceil{\sigma}} \left\{L_{(n_l,\varrho,\tau_{J,l})}^{\xi} L_{(n-\ceil{\sigma} n_l,d,\tau_g-\xi(\varrho-\tau_{J,l}))}\right\} \geq L_{(n_l,\varrho,\tau_{J,l})}^{\ceil{\sigma}},
\end{align*}
so we save at most a factor $L_{(n-\ceil{\sigma} n_l,d,\tau_g)}$ compared to the bound \eqref{eq:listSize} in Theorem~\ref{thm:ListDecodingLRCs}. In particular, the two factors of \eqref{eq:listSize} that are exponential in $\ceil{\sigma}$ (and thus in the number of local repair sets $\frac{n}{n_l}$) are not eliminated by the improvement.
\end{remark}

\end{document}